\documentclass[10pt,journal]{IEEEtran}

\ifCLASSOPTIONcompsoc
  \usepackage[nocompress]{cite}
\else
  \usepackage{cite}
\fi

\ifCLASSINFOpdf

\else

\fi

\usepackage{amsmath,amssymb,amsfonts}
\allowdisplaybreaks
\usepackage{algorithmic}
\usepackage{graphicx}
\usepackage{textcomp}
\usepackage{algorithm}
\usepackage{xcolor}
\usepackage{subcaption}
\usepackage{amsthm}
\usepackage{stfloats}
\usepackage{booktabs}
\usepackage{orcidlink}

\PassOptionsToPackage{hyphens}{url}\usepackage{hyperref}
\newtheorem{theorem}{Theorem}[section]
\newtheorem{lemma}{Lemma}[section]
\newtheorem{proposition}{Proposition}[section]

\def\BibTeX{{\rm B\kern-.05em{\sc i\kern-.025em b}\kern-.08em
    T\kern-.1667em\lower.7ex\hbox{E}\kern-.125emX}}
\newcommand\MYhyperrefoptions{bookmarks=true,bookmarksnumbered=true,
pdfpagemode={UseOutlines},plainpages=false,pdfpagelabels=true,
colorlinks=true,linkcolor={black},citecolor={black},urlcolor={black},
pdftitle={Optimal Pricing and Charging Strategy Design for Non-cooperative Battery Swapping Stations},
pdfsubject={Game Theory and Optimization for Electric Vehicle Battery Swapping Stations},
pdfauthor={Huanyu Yan},
pdfkeywords={Electric vehicle, Battery swapping station, Game theory, Nash equilibrium, pricing strategy, charging strategy, aggregator, demand response, peak shaving, smart grid, subgame perfect Nash equilibrium, hierarchical game, convex optimization, quadratic programming, profit maximization}}
\expandafter\hypersetup\expandafter{\MYhyperrefoptions}
\begin{document}
\title{Optimal Pricing and Charging Strategy Design for Non-cooperative Battery Swapping Stations}

\author{Huanyu~Yan~\orcidlink{0000-0001-9619-066X},
        Chenxi~Sun,
        Huanxin~Liao~\orcidlink{0000-0001-7888-5222},~\IEEEmembership{Student Member,~IEEE,}
        and~Xiaoying~Tang~\orcidlink{0000-0003-3955-1195},~\IEEEmembership{Member,~IEEE}% <-this % stops a space
\thanks{
\par This work has been published in IEEE Transactions on Mobile Computing, vol. 23, no. 12, pp. 13573-13588, Dec. 2024. DOI: 10.1109/TMC.2024.3427784. Code is available at \href{https://github.com/T-Lab-CUHKSZ/TMC24-Optimal-Pricing-and-Charging-Strategy-Design-for-Non-cooperative-Battery-Swapping-Stations}{GitHub}. \textit{(Corresponding author: Xiaoying Tang.)}
\par Huanyu Yan, Huanxin Liao, and Xiaoying Tang are with the School of Science and Engineering, The Chinese University of Hong Kong, Shenzhen, Guangdong, 518172, P.R. China, and the Shenzhen Institute of Artificial Intelligence and Robotics for Society, The Chinese University of Hong Kong, Shenzhen, Guangdong, 518172, P.R. China, and the Shenzhen Key Laboratory of Crowd Intelligence Empowered Low-Carbon Energy Network, School of Science and Engineering, The Chinese University of Hong Kong, Shenzhen, Guangdong, 518172, P.R. China (e-mail: huanyuyan@link.cuhk.edu.cn; huanxinliao@link.cuhk.edu.cn; tangxiaoying@cuhk.edu.cn).
\par Chenxi Sun is with the Shenzhen Institute of Artificial Intelligence and Robotics for Society, The Chinese University of Hong Kong, Shenzhen, Guangdong, 518172, P.R. China (e-mail: sunchenxi@cuhk.edu.cn).}}

\IEEEtitleabstractindextext{%
\begin{abstract}
Battery swapping is a rapid way to recharge electric vehicles (EVs). As more and more entities are involved in building Battery Swapping Stations (BSSs), how non-cooperative BSSs maximize their profit in a competitive market needs further investigation. In this paper, we focus on a practical scenario where competitive BSSs are coordinated by the same aggregator. To study the optimal pricing and battery charging, we formulate a hierarchical game-theoretic model, where BSSs determine the swapping price in the day-ahead market in the first stage, and then determine the optimal battery charging strategy in the real-time market in the second stage. We rigorously prove the existence and uniqueness of the Subgame Perfect Nash Equilibrium (SPNE). In particular, the uniqueness property provides theoretical support that the strategy under equilibrium is optimal in the competitive environment. Based on the unique SPNE, we propose an optimal pricing and charging strategy for each BSS to maximize profit in the competitive market. A prediction error handling method is also proposed to deal with unexpected fluctuations in swapping demand. Our simulation with a 12-BSS system based on real-life data from Xi'an, China shows that our pricing and charging strategy increases the individual BSS profit by at least 18.1\%, while the optimal charging strategy naturally achieves peak shaving for the power grid.
\end{abstract}

\begin{IEEEkeywords}
Electric vehicle, Battery swapping station, Game theory, Nash equilibrium, pricing strategy, charging strategy, aggregator, demand response, peak shaving, smart grid, subgame perfect Nash equilibrium, hierarchical game, convex optimization, profit maximization
\end{IEEEkeywords}}

\maketitle

\IEEEdisplaynontitleabstractindextext

\IEEEpeerreviewmaketitle

\ifCLASSOPTIONcompsoc
\IEEEraisesectionheading{\section{Introduction}\label{sec:intro}}
\else
\section{Introduction}
\label{sec:intro}
\fi

Due to the low operating cost and the environmentally friendly features, electric vehicles (EVs) have gained popularity throughout the world and pose more challenges to mobile computing \cite{fan2020enabling,liu2021reciprocal}. Battery swapping is a new service technique to shorten EV refueling time. Battery swapping stations (BSSs) first charge the batteries inside the station, and then swap a depleted battery from an EV with a fully-charged battery, charging the depleted battery later. Commercial applications have demonstrated the success of this technique in dramatically reducing refueling time. For example, the second-generation NIO BSS can complete a swap in 5 minutes \cite{NIOAPP}, comparable to the refueling speed of fossil vehicles. The significant time savings offered by battery swapping have led to the rapid development of this technology around the world \cite{NIO2022Report}.

The popularity and good prospects have attracted many entities to join the construction and operation of BSSs. In Xi'an, P.R.China, for example, EV brands, shopping centers, and the power grid are currently operating some BSSs \cite{NIOAPP}. Gas stations have also been reported to plan to construct BSSs \cite{SINOPEC}. The diversity of operation entities creates a competitive market, where each entity behaves selfishly when deciding battery charging and pricing strategies to maximize its own profit.

Although operated by different entities, BSSs in the same region have a significant electric demand in total, and can be managed together via an aggregator \footnote{The aggregator can be the retail electricity providers in real-life, like \cite{YuedeanDianli} in China or \cite{4Charge} in the U.S.} to participate in the electricity market to avoid the electricity price uncertainty and reduce the charging cost \cite{lu2020fundamentals}. To better participate in the market, the aggregator needs a demand-side management tool to coordinate electric usage. Time-of-use (TOU) pricing is a commonly used economic demand side management strategy for the aggregator in real life \cite{liu2017optimal, nogales2002forecasting, MediwaththeGameTheoretic2018}, where the electric price is affected by the total aggregated load. Both the aggregator and the BSSs benefit from TOU pricing, as it incentivizes BSSs to change their battery use away from peak hours \cite{yang2012game} to reduce the electric purchase price for the aggregator, which will further reduce the charging cost of every BSS inside the aggregator.

However, the economic coordination of electric usage makes the interests of competitive BSSs highly correlated. The BSSs in the same region compete with each other through pricing to attract more EVs. The pricing in the day-ahead market will also affect the swapping demand for each BSS, which will further bring the constraint for charging strategy in the real-time market. Besides, BSSs compete with each other during charging under TOU pricing, as they all want to charge more when other BSSs charge less. The highly correlated interests make analyzing the pricing and charging strategy challenging. This motivates us to investigate the optimal pricing and charging problem among competitive BSSs under a unified aggregator. Nevertheless, analyzing the optimality in the competitive market is difficult. It not only requires solving the equilibrium points where no BSS increases his profit by changing the strategy, but also requires comparing the profit under each equilibrium to verify which one is the maximum. In this paper, we rigorously prove the uniqueness of the equilibrium, thereby establishing a theoretical guarantee for optimality in the competitive market.

Our contribution is summarized as follows.
\begin{enumerate}
    \item \textbf{Novel Hierarchical Game Model:} We propose a hierarchical game model to study the optimal pricing and battery charging strategy. In the hierarchical game model, BSSs simultaneously set the swapping price in the day-ahead market in the first stage, and the BSSs determine the battery charging strategies in the real-time market based on the pricing result in the second stage. To the best of our knowledge, this is the first work to analyze the BSSs' correlated interests under the same aggregator. The prediction error handling method is also provided to deal with unexpected fluctuations in swapping demand.
    \item \textbf{Equilibrium Properties and Solution Approach:} We rigorously prove the existence and uniqueness of the pure strategy Subgame Perfect Nash Equilibrium (SPNE) in the hierarchical game under arbitrary BSS parameters. Specifically, the uniqueness of the Nash equilibrium provides a theoretical guarantee of the coalitional optimal, that every BSS maximizes his profit and no player is willing to deviate from the equilibrium. Based on the property, we proposed a Best Response Dynamics-based algorithm with convergence rigorously proved to solve the SPNE within any given tolerance.
    \item \textbf{BSS Optimal Pricing and Battery Charging Strategy:} We propose an optimal pricing and battery charging strategy for competitive BSSs coordinated by a unified aggregator. By implementing our proposed pricing and battery charging strategy, every competitive BSS in the system achieves coalitional optimal profit. Our simulations with real-life data demonstrate that our pricing and battery charging strategy significantly improves the BSSs' profit compared to the pricing and charging strategy adopted by swapping industry giant, the NIO company, in Xi'an, P.R. China. Furthermore, our pricing and battery charging strategy also shave the load of the aggregator.
\end{enumerate}

The remainder of the paper is structured as follows. Section~\ref{sec:related_work} introduces the related work. Section~\ref{sec:model} provides the hierarchical game model. Section~\ref{sec:property} analyzes the property of the proposed game and Section~\ref{sec:solution} presents the solution method to Subgame Perfect Nash Equilibrium. Section~\ref{sec:err_handling} provides the error handling method to deal with unexpected fluctuations in swapping demand. In Section~\ref{sec:experiment}, we conduct experiments to verify the effectiveness of our proposed game model. Finally, Section~\ref{sec:conclusion} concludes the paper.

\section{Related Work}
\label{sec:related_work}

Due to the popularity of BSSs, maximizing their profit has become a subject of interest among researchers. A line of works studies the optimal pricing \cite{liang2018battery,hu2023optimal} and optimal charging \cite{liang2018battery,sarker2014optimal,zhang2018monte,sun2017optimal, wu2017optimization} to maximize the profit of one single station. For instance, regarding pricing strategy, Hu \textit{et al.} compared the pay-per-swap and battery subscription models in \cite{hu2023optimal} for a system including one charging station, one BSS, and one battery renter. Two Stackelberg game models are adopted to solve the pricing equilibriums. As for the charging strategy, Wu \textit{et al.} formulated an optimization model in \cite{wu2017optimization} to reduce the single BSS charging cost considering the charging damage and the electricity cost while minimizing the number of batteries taken from stock to fulfill the swapping orders. While the mentioned and other researchers have significantly contributed to optimizing individual BSSs' profit, the results may not directly apply to competitive market scenarios.

Another line of work studies the profit of BSSs in the competitive market \cite{hu2023optimal,zhao2019closed,yan2022battery,ren2022game} using game theory. For instance, Zhao \textit{et al.} formulates a Stackelberg game to optimize the battery charging and swapping system in \cite{zhao2019closed} with guaranteed service quality. They studied the BSS centralized charging scenario where the swapping stations do not have the charging ability and have to send the batteries to the charging station to charge. In contrast, our work focuses on the decentralized battery charging scenario where the battery swapping stations could charge the batteries inside the station, which has been widely adopted by NIO in real life \cite{NIOAPP}. Our previous work \cite{yan2022battery} studied the BSS real-time optimal charging problem under the TOU pricing in one single time slot. In this paper, we further investigate the overall profit over multiple periods. The consideration of demand and charging strategy of multiple time slots can make BSSs charge more at low price periods to achieve higher profit, but the massive uncertainties and decision variables also bring more analysis challenges. Additionally, unlike \cite{zhao2019closed,Wu2012V2G,qian2021multi,yoon2015stackelberg,xiong2017optimal,zhao2017real} which prove the existence of the equilibrium only, we rigorously prove both the existence and the uniqueness of the equilibrium. The uniqueness property provides a theoretical assurance of the optimality of the proposed strategy under the competitive market, for it relieves BSSs of the need to consider under which equilibrium their profits are maximized.

\section{model formulation}
\label{sec:model}

\begin{figure}[t]
\centerline{\includegraphics[width=\linewidth]{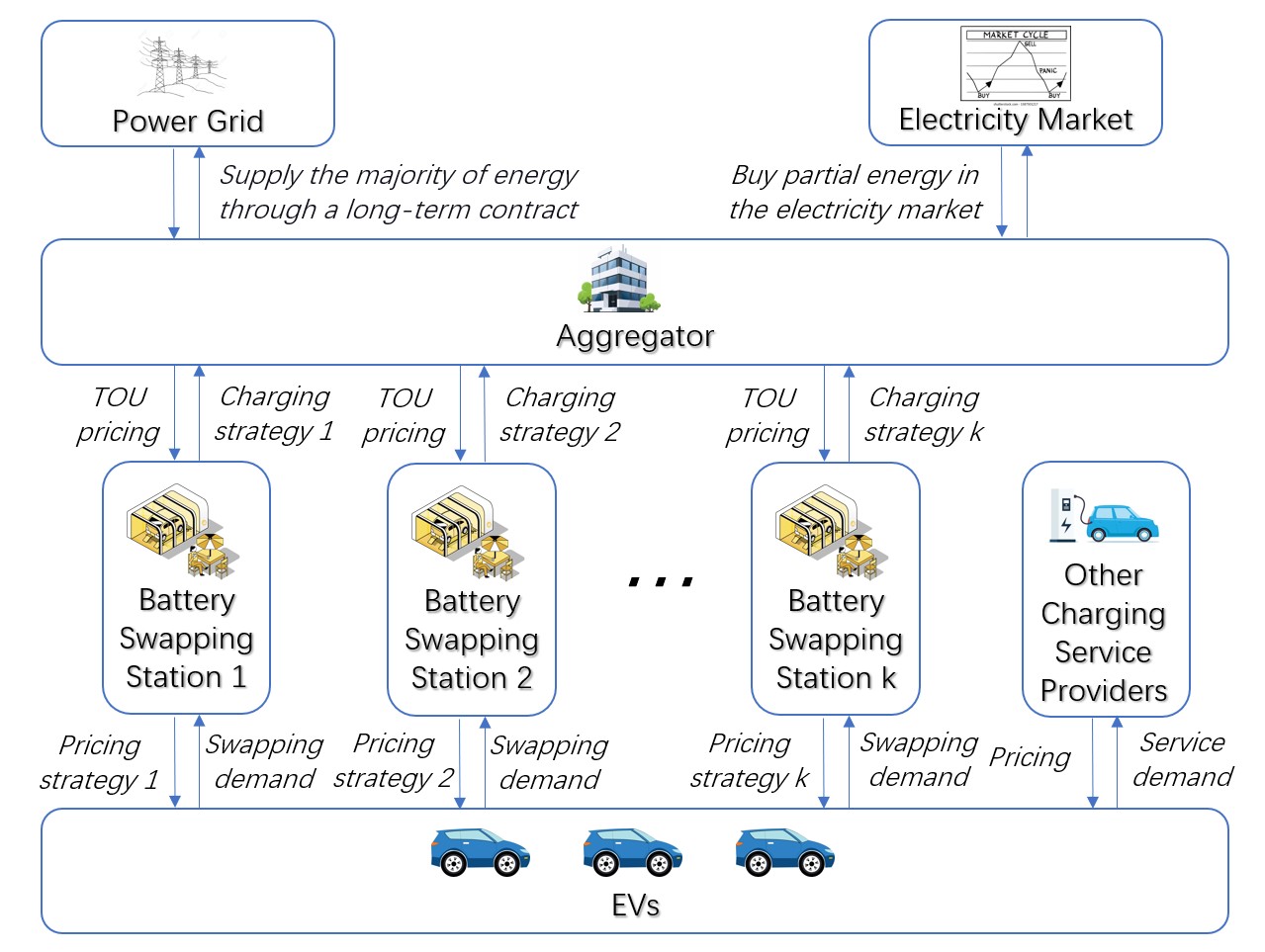}}
\caption{An aggregator-BSS-EV system. The aggregator provides energy and coordinates the charging strategy of BSSs with TOU pricing, and the BSSs compete for EV customers through pricing.}
\label{fig:system}
\end{figure}

We consider a set of competitive BSSs $\mathcal{K}$ in a region of a village or a small city, as depicted in Figure~\ref{fig:system}. The system comprises six major components: the power grid, the electricity market, the aggregator, BSSs, other charging service providers, and EVs. The aggregator coordinates all BSS charging behaviors and participates in the electricity market. Specifically, the aggregator purchases the majority of its energy through a long-term contract with the power grid to reduce the uncertainty \cite{zhang2018long}, and supplements it with partial energy purchases from the electricity market to achieve higher profit \cite{lu2020fundamentals}. In addition, the aggregator adopts TOU pricing as a well-known demand-side management tool to coordinate the charging behavior of all BSSs \cite{liu2017optimal, nogales2002forecasting, MediwaththeGameTheoretic2018}. BSSs purchase electricity from the aggregator to charge their batteries and offer swapping services to EVs. We assume all BSSs are operated by different entities, and complete with each other. Each BSS sets a swapping service price based on its own station features, and the EVs give a demand response to each station. To avoid all BSSs setting a high swapping price simultaneously, we introduce other charging service providers as a competitor. In the system, we adopt the same assumption in \cite{sun2017optimal,ni2020inventory} that all BSSs use batteries of the same type to ensure swapping compatibility.

In this paper, we consider the constant swapping price scenario, i.e., the swapping price is determined and announced to the public before the operation day, and cannot change during the operation day. The constant swapping price is more EV owners-friendly, for they do not have to consider the service fees when scheduling their swapping. Also, it is currently adopted by industry giant NIO \cite{NIOAPP}. The BSSs determine the swapping price on the operation day and announce it to the public, such as user APPs. Therefore, the pricing and charging involve both day-ahead and real-time markets. First, the BSSs determine the swapping price simultaneously in the day-ahead market, and announce the pricing result on websites or the APPs. Then, the EVs select the BSSs based on the swapping price, battery number, and station distance on the operation day. With the swapping demand estimated by pricing, each BSS determines the real-time charging amount to save the charging cost. To accommodate the dynamic nature of TOU pricing, we divide a day into several time slots $t\in[1,2,\dots,T]$, where the electricity price in each time slot is a function of the total aggregated load during the time slot. To capture this sequence, we model the operation among EVs, BSSs, and the aggregator as a hierarchical game. Game details are described below.

\subsection{EV Demand Response Model for BSS Pricing}

In this section, we propose a demand response model for BSSs based on their features. When selecting the BSSs, EVs prefer the station with more battery, low swapping price and short distance. We define these features as the station's \textit{attractions} {\cite{bell1975market}}, a positive value EV owners use to evaluate the station. The higher the attractions, the more EV owners will choose the station. We model the attractions of BSS $k$ as:
\begin{equation} \label{eq:BSS_attractions}
    U_k = C_bf(N_k) + C_p Ps_k + C_d d_k,
\end{equation}
where $C_b$ and $C_d$ are positive battery number and distance preference parameters respectively, $C_p$ is the (negative) pricing preference parameter. These parameters could obtained from the survey methods. Without loss of generality, we normalize these three parameters to let $C_p=-1$ for notation convenience \footnote{The model will first focus on homogeneous customers and will later discuss the heterogeneous setting in Sec~\ref{ssec:heter_EVs}}. $f(N)$ is the battery number advantage function. It maps the battery number inside BSS to the station's advantage. $Ps_k$ is the swapping price set by BSS $k$, and $d_k$ is the distance advantage based on all BSS locations and the distribution of EV owners in the region. In the demand response settings, we assume all BSSs will announce the price and other information on APPs or Website (which will be further introduced in Figure~\ref{fig:game_structure}), thus users have access to the full range of BSS information when making decisions, including swapping price, battery number and locations.

We first consider the battery number advantage function $f(N)$. Suppose an EV owner goes to a BSS with few batteries (for example, only one). There is a high probability that he will not be able to swap the battery immediately and will have to wait for a fully charged battery since other EVs might arrive at that BSS before him. Inspired by \cite{tushar2012economics,zhao2017real}, we find some properties function $f(N)$ must satisfy to model the function. The properties function $f(N)$ must satisfy is as follows.

(i) The function $f(N)$ should be increasing, as the more batteries BSS owns, the lower the possibility that EVs have to wait.
\begin{align}
    f(N_1) > f(N_2), \forall N_1 > N_2 \geq 1.
\end{align}

(ii) The increasing speed of function $f(N)$ should be decreasing, as the function value should change significantly when battery number $N$ is small, but change slightly when $N$ is large. For example, the difference of 1 battery with 5 batteries should be larger than the difference of 5 batteries with 10 batteries.
\begin{align}
    f(N_1) - f(N_2) < f(N_2) - f(N_3), \forall N_1 > N_2 > N_3 \geq 1.
\end{align}

(iii) The function $N-f(N)$ should increase. This property ensures that the BSS with more batteries does not attract an excessive number of EV owners that the station does not have sufficient batteries to serve.
\begin{align}
    N_1 - f(N_1) > N_2 - f(N_2), \forall N_1 > N_2 \geq 1.
\end{align}

(iv) The function $f(N)$ should approach a constant $C_0$ as the battery number goes to infinity. If a station has an infinite number of batteries, the attractions of BSS should not go to infinity, as the EV owner only ensures he will not wait for a fully-charged battery but has to consider other features such as swapping price and distance.
\begin{align}
    \lim_{N \to +\infty} f(N) = C_0.
\end{align}

Further, we let $f(0)$ be 0. Based on the aforementioned properties, the following function $f$ is proposed.
\begin{equation} \label{eq:batt_num_adv_function}
    f(N)= \frac{1}{\sqrt{C_b}} - \frac{1}{N+\sqrt{C_b}}.
\end{equation}
The proposed function in \eqref{eq:batt_num_adv_function} satisfies the above properties. It also possesses some mathematical features that allow us to design the game with excellent properties, such as the existence and uniqueness of the Nash Equilibrium, as will be discussed in Section~\ref{sec:property}. Then, we consider the distance advantage. The distance advantage of each BSS is the same as that of other competitive suppliers and is already profoundly studied by many works of literature \cite{hotbllino1929stability,fournier2014hotelling,brenner2005hotelling}. Since the BSSs' locations and the EV area's distribution can be obtained by all players, we assume $d_k$ is the common knowledge for all BSSs in the region.
% Try the result of the N-player hotelling game later.

The price for swapping should remain competitive to attract EVs. In this paper, we assume the price has an upper bound to ensure the attractions is positive, i.e.,
\begin{align} \label{eq:Ps_upper_constratints}
    Ps_k < C_b f(N_k) + C_d d_k, \forall k\in \mathcal{K}.
\end{align}

To avoid all BSSs simultaneously setting high swapping prices, we introduce a charging station as a backup choice for EVs. The attractions of the charging station is modeled as
\begin{align}
    U_c = -\lambda \hat{t} + C_p Pc + C_d d_c,
\end{align}
where $\hat{t}$ is the average serving time, including the queuing time and the charging time, $\lambda$ is the time cost, $Pc$ is the charging cost, and $d_c$ is the distance advantage. $C_p$ and $C_d$ are same in \eqref{eq:BSS_attractions}. In this paper, we focus mainly on the profit of BSS. Thus we treat the parameters of the charging station as the constant.

With the attractions of all stations, according to the market share theory in \cite{bell1975market,schuur2021explicit}, the market share $m_k$ of station $k$ is
\begin{equation}
    m_k = \frac{U_k}{\sum_{i\in \mathcal{K}} U_i + U_c}.
\end{equation}
We denote $R_{k,t}$ as the swapping demand of BSS $k$ at time $t$, and $\mathbf{R}=[R_1,\dots,R_T]$ is average hourly total swapping demand. Based on the total demand and the market share, we propose the demand response model as,
\begin{equation} \label{eq:demand_response}
    R_{k,t}=m_k \cdot R_t,
\end{equation}
In reality, $\mathbf{R}$ can be inferred from the number of EVs supporting battery swapping and the historical data of their swapping habits. Therefore, we treat $\mathbf{R}$ as a known parameter for all stations. Besides, we assume the market is sufficiently competitive \cite{david2000strategic} that the pricing of BSSs will not affect the $\mathbf{R}$.

\subsection{BSS Profit Model}
In this subsection, we model the BSS utility function, including battery charging fee, swapping revenue, and battery degradation cost.

\subsubsection{Swapping Revenue}
BSSs provide swapping services to EVs and generate revenue. We denote the swapping service price of BSS $k$ by $Ps_{k}$. Since we consider the constant pricing scenario, the swapping revenue could be modeled as
\begin{align} \label{pi revenue}
    \Pi_{revenue,k} = \sum_{t\in T} Ps_k \cdot R_{k,t}.
\end{align}

\subsubsection{Battery Charging Cost}

The BSSs need to charge their batteries to serve EVs. We denote BSS $k$ charging strategy as $\mathbf{X_k}=[X_{k,1},\dots,X_{k,T}]^\intercal$, where $X_{k,t}$ denotes the battery charging amount at time $t$. We further denote $Pe_t$ as the TOU pricing set by the aggregator, the battery charging cost of BSS $k$ can be modeled as
\begin{align} \label{pi charging}
    \Pi_{charging,k} = - \sum_{t\in T} Pe_t \cdot X_{k,t}.
\end{align}

Then we consider the constraints of $X_{k,t}$. We assume that all the EV swapping demands must be satisfied. Hence, the swapping demand sets a lower bound on the charging strategy. Considering energy loss during battery charging, we denote the battery charging efficiency as $\eta$, where $0<\eta<1$. This implies that if BSS adopts a charging strategy of $X_{k,t}$, only $\eta X_{k,t}$ energy will be stored in the battery. Since we assume the system's batteries are the same type, the battery charging efficiency should be the same for all BSSs. Furthermore, the energy stored in the BSS cannot exceed the total battery storage capacity. Thus, we obtain the demand constraints for $X_k$ as,
\begin{align} \label{eq:X_demand_constraints}
&R_{k,t} \cdot C \le E_{k,t} + \eta X_k \le N_k \cdot C, \\
&E_{k,t} = E_{k,0} + \sum_{\tau=0}^t (\eta X_{k,\tau} - C R_{k,\tau}), \forall k\in \mathcal{K}, t\in [0,T],
\end{align}
where $C$ denotes the battery volume, $N_k$ denotes the battery number of BSS $k$, $E_{k,t}$ denotes the total energy in batteries of BSS $k$ at time $t$ and $E_{k,0}$ denotes the initial energy storage in batteries at the beginning of the first time slot.

Moreover, the electricity BSS $k$ charge during one time slot is upper-bounded by physical transmission capacity, which we denoted by $X_{M,k}$. In this paper, we don't consider BSS injecting power back into the grid. Thus we get
\begin{align} \label{eq:X_transmission_constraints}
0 \le X_{k,t} \le X_{M,k}, \forall k\in K, t\in [0,T].
\end{align}

\subsubsection{Battery Degradation Cost}

In this paper, we consider a battery degradation cost model based on \cite{zhao2017real}, i.e., the degradation cost is linear to the sum of the battery charging and discharging rates. We consider all the batteries in the station as one huge storage. The storage charge rate in a time slot is the charging amount $X_{k,t}$, and the discharging rate is the swapping demand from EVs, i.e., $R_{k,t}$. Therefore, the battery degradation penalty is modeled as

\begin{align} \label{pi degradation}
\Pi_{degradation,k} &= -\sum_{t\in T} \mu(X_{k,t} + R_{k,t}),
\end{align}

where $\mu$ is the degradation rate. As we assume all BSSs use batteries of the same type in the system, parameter $\mu$ is the same for all BSSs.

Battery degradation gives BSSs a lower bound when setting the swapping price. The swapping price should be larger than the service cost due to the degradation, i.e.,

\begin{align} \label{eq:Ps_lower_constratints}
    Ps_k > \mu C, \quad \forall k \in \mathcal{K}.
\end{align}
With the above utility, we model the profit of BSS $k$ as:
\begin{align} \label{BSS utility}
\Pi_k = \Pi_{charging,k} + \Pi_{revenue,k} + \Pi_{degradation,k}.
\end{align}

\subsection{TOU Pricing Parameter for Aggregator}
\label{sec:model_aggregator}
In this subsection, we discuss the TOU pricing strategy for the aggregator. The aggregator purchases most of the energy from the power grid. Since the aggregator is a large electricity consumer, the aggregator is willing to sign a long-term contract with the power grid \cite{zhang2018long}. Long-term contracts are agreements between the power grid and consumers that specify the terms of electricity delivery and pricing \cite{song2002nash}. It provides a fixed price for electricity over a long period. This stability and predictability can help both energy producers and consumers better plan their budgets and reduce their exposure to market volatility, and therefore, is a preferred choice for large power consumers like BSS aggregators. We denote the contract price of the aggregator as $Pe_A$.

Apart from long-term contracts, aggregators will also buy partial power in the electricity market to reduce the cost \cite{lu2020fundamentals}. To better participate in the market, the aggregator prefers to coordinate the electric usage of all BSSs. TOU pricing is a powerful and widely used tool in real life to adjust the aggregated load \cite{yang2012game}. Both the aggregator and BSSs will all benefit from TOU pricing since it motivates consumers to shave the peak, saving the cost when purchasing power in the electricity market. That motivates us to investigate the region aggregator with TOU pricing.

We denote the TOU price the aggregator sets by $\mathbf{Pe}=[Pe_1,\dots,Pe_T]$. For the aggregator, $\mathbf{Pe}$ should not be too low. Otherwise, the aggregator may incur a loss. Moreover, $\mathbf{Pe}$ cannot be set excessively high, as this would encourage the BSS to trade directly with the power grid. We assume some large BSSs can also sign long-term contracts with the power grid and denote that price as $Pe_{BSS}$. Usually, the energy consumption of a single station is much smaller than that of the aggregator. Thus the energy price in the BSS contract should be higher than that in the aggregator contract, i.e., $Pe_A < Pe_{BSS}$.

Inspired by widely adopted linear TOU model in theoretical studies \cite{yang2012game,liu2017optimal,MediwaththeGameTheoretic2018} and real-life contexts \cite{MediwaththeGameTheoretic2018,LinearEPricePractice2,LinearEPricePractice3}, we adopt a TOU pricing strategy as follows,
\begin{align} \label{eq:Pe}
    Pe_t=\alpha D_t + \beta,
\end{align}
where $D_t$ is the aggregated demand of time slot $t$, parameters $\alpha$ and $\beta$ satisfy the following equations \eqref{eq:alpha_beta}, in which $D_{max}$, $D_{min}$ is the maximum and the minimum aggregated demand in the historical record.
\begin{subequations} \label{eq:alpha_beta}
\begin{align}
    Pe_A&=\alpha D_{min} + \beta, \\
    Pe_{BSS}&=\alpha D_{max} + \beta.
\end{align}
\end{subequations}

The TOU pricing in \eqref{eq:Pe},\eqref{eq:alpha_beta} has three benefits: 1) it ensures that the price will always be larger than $Pe_A$, which ensures the aggregator can make profits. 2) it ensures that the price will consistently be lower than $Pe_{BSS}$, which incentivizes BSSs to stay with the aggregator to reduce their charging costs, rather than signing a contract directly with the power grid. 3) The linear TOU electricity price will help the aggregator shave the peak load \cite{yang2012game,liu2017optimal}, which saves the cost when buying energy from the electricity market.

For the BSS aggregator, the aggregated load $D_t$ is the sum of all BSSs charging amounts and the base load in the aggregator, i.e.,
\begin{align} \label{eq:Dt}
    D_t = \sum_{k\in \mathcal{K}} X_{k,t} + D_{I,t},
\end{align}
where $D_{I,t}$ denotes the base load at time $t$. In this paper, we assume the base load $\mathbf{D_I}=[D_{I,1},\dots,D_{I,T}]$ is a constant. Additionally, we assume the grid will reveal appropriate $\mathbf{D_I}$ to BSSs at the start of operation day, to motivate them to shave the peak loads.

\subsection{Hierarchical Non-cooperative Game Model}

To study the optimal pricing and charging strategy among competitive BSSs, a hierarchical non-cooperative game model is introduced. The hierarchical non-cooperative game is a powerful tool to analyze the multi-stage decision-making process through selfish individuals \cite{fudenberg1991game}. The framework of the game model is shown in Figure~\ref{fig:game_structure}. First, BSSs simultaneously determine the swapping service price $Ps_k$ on the operation day in the day-ahead market. Then, the pricing result will be sent to the public, such as APPs or websites, and will no longer change. On the operation day, BSSs estimate the swapping demand at each time slot $R_{k,t}$ and determine the optimal battery charging strategy $X_{k,t}$. Besides, the prediction error handling is proposed to address the uncertainty of EV users.

\begin{figure}[t]
\centerline{\includegraphics[width=\linewidth]{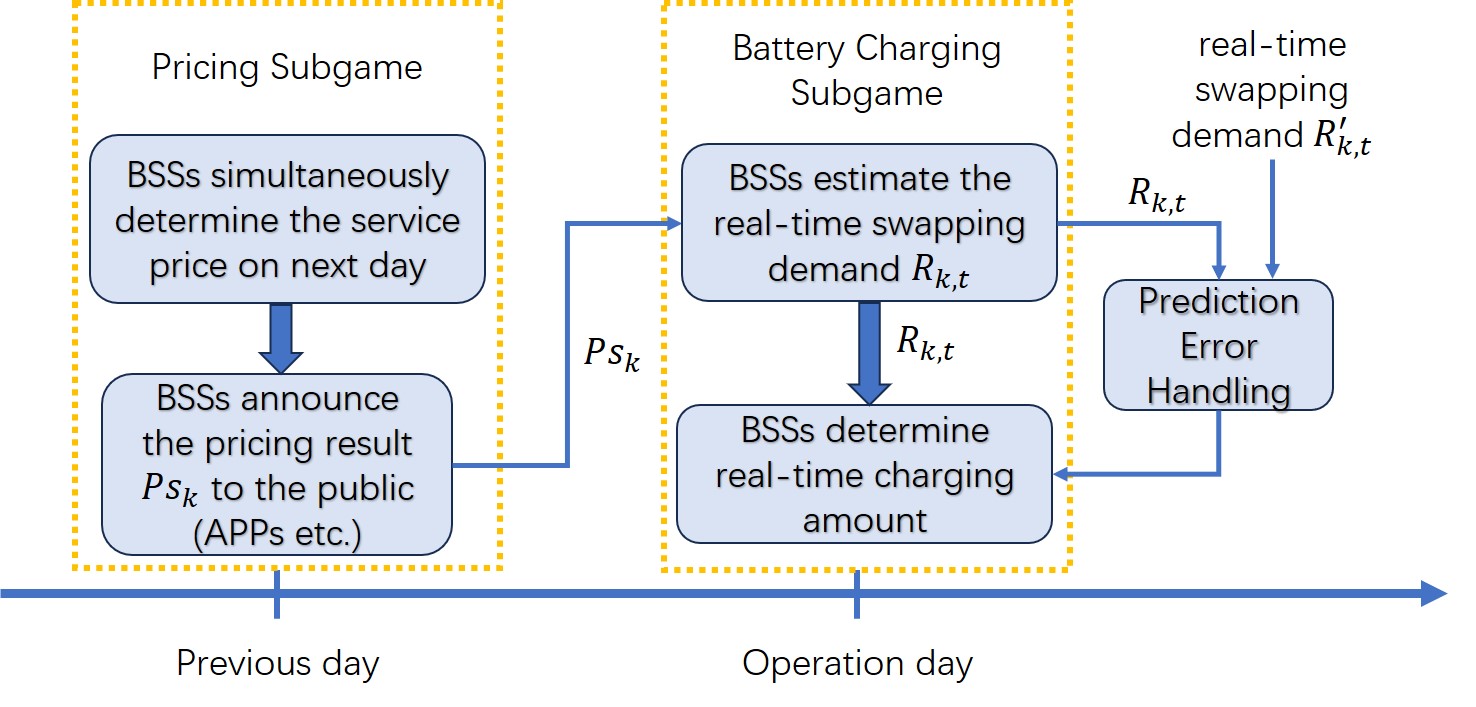}}
\caption{The hierarchical game structure. The BSS first determines the optimal pricing strategy in the Pricing Subgame before the operation day. Based on the pricing and EV swapping demand prediction, BSS determines the real-time battery charging strategy in the Battery Charging Subgame in the real-time market.}
\label{fig:game_structure}
\end{figure}

\subsubsection{Pricing Subgame}

In the pricing subgame, the BSSs determine the optimal pricing strategy to maximize the payoff. As shown in the demand response model \eqref{eq:demand_response}, the swapping demand of each BSS is determined by all station pricing strategy $Ps_k,\forall k \in \mathcal{K}$. The swapping demand will directly infect the profit. Therefore when a BSS considers the pricing strategy, he not only needs to consider his own pricing strategy, but also estimate the pricing strategies of other BSSs. We formulate the optimal pricing game to model the optimal pricing problem as follows.

\textit{\textbf{Pricing Subgame}}
\begin{itemize}
\item \textit{Players}: All BSSs in set $\mathcal{K}$.
\item \textit{Strategy}: The pricing strategy $Ps_k$ for $k \in \mathcal{K}$.
\item \textit{Payoffs}: The payoff of each BSS $k$ as defined in \eqref{BSS utility}.
\end{itemize}

\subsubsection{Battery Charging Subgame}

In the battery charging subgame, BSSs solve the optimal charging strategy. Under the TOU pricing in \eqref{eq:Pe}, the electricity price is affected by all BSS charging strategies. Thus, when determining the charging strategy, the BSS needs to predict the charging strategies of other players. The optimal charging problem is formulated as follows.

\textit{\textbf{Battery Charging Subgame}}
\begin{itemize}
\item \textit{Players}: All BSSs in set $\mathcal{K}$.
\item \textit{Strategy}: The charging amount of BSS $k$ at each time slot $\mathbf{X_k}=\{X_{k,1},X_{k,2},\dots,X_{k,T}\}$.
\item \textit{Payoffs}: The payoff of each BSS $k$ as defined in \eqref{BSS utility}.
\end{itemize}

%% complete information assumptions!
In this paper, we adopt the complete information assumption, i.e., the BSSs parameters are common knowledge for every BSSs in the system. The complete information assumption is practical for real-life scenarios since some parameters, such as the location $d_k$ and total swapping demand in the system $R_t$, are common knowledge for all BSSs. Besides, some parameters such as battery number $N_k$ will be public on APPs and websites to convince EV owners. The only private parameter, the transmission capacity $X_{M,k}$, can be estimated with the station scale. Furthermore, since $X_{M,k}$ cannot be easily changed after construction, it will eventually become known to all players during long-term operation.

\section{Existence and uniqueness of Sub-game Perfect Nash Equilibrium}
\label{sec:property}
In the game content, the most important solution concept is the Nash Equilibrium. The Nash Equilibrium provides coalitional stability that once the game fills into the Nash Equilibrium, no player can increase his profit if he unilaterally deviates from the equilibrium. For the hierarchical game, we consider the Subgame Perfect Nash Equilibrium (SPNE) \cite{fudenberg1991game}. A strategy profile is an SPNE if it represents a Nash equilibrium for every subgame of the hierarchical game. Mathematically, the SPNE for the BSS pricing and charging game is the strategy vector $[(\mathbf{X_1}^*,Ps_1^*),\dots,(\mathbf{X_K}^*,Ps_K^*)] $ with the following properties:
\begin{subequations}
\begin{align}
    &\Pi(\mathbf{X_k}^*,\mathbf{X_{-k}^*}) \geq \Pi(\mathbf{X_k},\mathbf{X_{-k}^*}), \label{eq:Xk_NE_definition} \\
    &\Pi(Ps_k^*,\mathbf{Ps_{-k}^*}) \geq \Pi(Ps_k,\mathbf{Ps_{-k}^*}), \label{eq:Ps_NE_definition} \quad \forall k \in \mathcal{K}.
\end{align}
\end{subequations}
In the non-cooperative game, the existence of the Nash Equilibrium is not always guaranteed \cite{bacsar1998dynamic}. In this section, we will investigate the existence and uniqueness properties of the SPNE in our proposed game.

\subsection{Existence of Subgame Perfect Nash Equilibrium}
\label{subsec:existence}

For notation convenience, we define matrix $\mathbf{H}=diag(\alpha,\dots,\alpha)$ with size $T \times T$, matrix  \begin{center} $\mathbf{f}_k=\begin{bmatrix}
    \mu + \alpha (\sum_{i\in \mathcal{K},i\neq k} X_{i,1} + D_{I,1}) + \beta, \\
    \mu + \alpha (\sum_{i\in \mathcal{K},i\neq k} X_{i,2} + D_{I,2}) + \beta, \\
    \vdots \\
    \mu + \alpha (\sum_{i\in \mathcal{K},i\neq k} X_{i,T} + D_{I,T}) + \beta \\
\end{bmatrix}$. \end{center}

\begin{theorem}
A pure strategy subgame perfect equilibrium exists in the BSS pricing and charging game.
\label{thm:existance}
\end{theorem}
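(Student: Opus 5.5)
We will prove the statement by backward induction, treating the battery charging subgame first and then the pricing subgame. In each stage we will invoke the Debreu--Glicksberg--Fan theorem: a pure-strategy Nash equilibrium exists if every player's strategy set is nonempty, convex and compact, and every payoff is continuous in the full profile and concave (more generally quasi-concave) in the player's own strategy.

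\textbf{Charging subgame.} Fix the prices $\mathbf{Ps}$. Then the demands $R_{k,t}=m_kR_t$ are constants, and BSS $k$ chooses $\mathbf{X_k}$ from the set cut out by \eqref{eq:X_demand_constraints} and \eqref{eq:X_transmission_constraints}. These are finitely many linear inequalities, so the feasible set is a polytope: closed, bounded by $0\le X_{k,t}\le X_{M,k}$, and convex.

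\begin{itemize}
\item \emph{Payoff as a quadratic.} Substituting \eqref{eq:Pe} and \eqref{eq:Dt} gives
\[
\Pi_k=-\mathbf{X_k}^\intercal\mathbf{H}\mathbf{X_k}-\mathbf{f}_k^\intercal\mathbf{X_k}+\text{const},
\]
where the constant does not depend on $\mathbf{X_k}$. Since $\mathbf{H}=\mathrm{diag}(\alpha,\dots,\alpha)$ with $\alpha>0$ (because $Pe_{BSS}>Pe_A$ and $D_{max}>D_{min}$), $\Pi_k$ is strictly concave in $\mathbf{X_k}$.
\item \emph{Continuity.} $\mathbf{f}_k$ is affine in $\mathbf{X_{-k}}$, so $\Pi_k$ is continuous in the whole profile.
\item \emph{Decoupled constraints.} The feasible set of BSS $k$ does not depend on $\mathbf{X_{-k}}$, so this is a standard (not generalized) game.
\end{itemize}

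The theorem therefore gives an equilibrium $\mathbf{X}^*(\mathbf{Ps})$, provided the feasible set is nonempty. Nonemptiness amounts to assuming that the capacities $X_{M,k}$, $N_k$ and $E_{k,0}$ are large enough to serve the induced demand. We will state this as a standing feasibility assumption implicit in the model.

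\textbf{Pricing subgame.} The second-stage charging cost depends on $\mathbf{Ps}$, which makes the reduced payoff awkward to analyze. We avoid this by first showing that the charging subgame equilibrium is unique. For uniqueness we use Rosen's diagonal strict concavity: the pseudo-gradient has Jacobian $-(\alpha I+\alpha\mathbf{1}\mathbf{1}^\intercal)\otimes I_T$ (up to arrangement), which is negative definite. This makes $\mathbf{X}^*(\mathbf{Ps})$ a well-defined single-valued map. It is continuous by Berge's maximum theorem together with uniqueness, using the fact that the constraint polytope varies continuously with the $R_{k,t}$, which are continuous in $\mathbf{Ps}$.

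The reduced first-stage payoff is then
\[
\Pi_k(\mathbf{Ps})=(Ps_k-\mu C)\,m_k\sum_t R_t\;-\;\text{charging/degradation cost at }\mathbf{X}^*(\mathbf{Ps}).
\]
Strictly, the degradation term is $\mu\sum_t(X_{k,t}+R_{k,t})$ rather than $\mu C\sum_t R_{k,t}$, but energy balance ties $\sum_t X_{k,t}$ to $C\sum_t R_{k,t}/\eta$ plus storage changes. This suggests rewriting the minimal charging cost of BSS $k$ as a function of its own total demand $m_k\sum_tR_t$.

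The strategy set for $Ps_k$ is the interval between $\mu C$ and $C_bf(N_k)+C_dd_k$ from \eqref{eq:Ps_lower_constratints} and \eqref{eq:Ps_upper_constratints}. We will compactify it by taking closed endpoints, and then argue that the maximizer is interior because at the lower endpoint the margin is zero and at the upper endpoint the market share is zero.

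\textbf{Main obstacle.} The step we expect to be hardest is quasi-concavity of the reduced payoff in $Ps_k$. Write $A_k=C_bf(N_k)+C_dd_k$ and $S_{-k}=\sum_{i\ne k}U_i+U_c$, so that $m_k=(A_k-Ps_k)/(A_k-Ps_k+S_{-k})$. In $Ps_k$ this share is decreasing and concave, so the revenue term $(Ps_k-\mu C)m_k$ is a product of a positive increasing linear factor and a positive decreasing concave factor. Such a product is strictly log-concave, hence quasi-concave. The charging cost at equilibrium should be convex and increasing in the BSS's own demand, since it is the value of a parametric convex QP in which demand enters the constraint right-hand sides linearly. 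Composing this convex increasing cost with the concave decreasing $m_k$ is not automatically convex, which is exactly where concavity of the reduced payoff could fail.

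Our first attempt will be to show directly that $d\Pi_k/dPs_k$ has at most one sign change, from positive to negative, on the interval; this yields quasi-concavity. We will do it by a single-crossing argument: bound the marginal charging cost uniformly by $Pe_{BSS}/\eta+\mu$, and use that the marginal revenue $m_k+(Ps_k-\mu C)\partial m_k/\partial Ps_k$ is strictly decreasing. If that fails, we will fall back on Kakutani applied to the best-response correspondence, whose nonempty compact values are guaranteed by continuity. Convexity of those values would still need the single-crossing structure. Once quasi-concavity and continuity are in hand, Debreu--Glicksberg--Fan gives a pricing equilibrium $\mathbf{Ps}^*$. Combined with $\mathbf{X}^*(\mathbf{Ps}^*)$, this yields a pure-strategy SPNE.
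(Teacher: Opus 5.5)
Your charging-stage argument is sound and is essentially the paper's. The paper also writes $\Pi_k=-\mathbf{X_k}^\intercal\mathbf{H}\mathbf{X_k}-\mathbf{f}_k^\intercal\mathbf{X_k}+\text{const}$ with $\mathbf{H}\succ 0$ and invokes Rosen's concave $K$-player theorem, which plays the same role as your Debreu--Glicksberg--Fan step.

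\textbf{The gap is in the pricing stage.} You work with the reduced payoff that includes BSS $k$'s cost at the second-stage equilibrium; call that cost $\Phi_k(\mathbf{Ps})$. The property your fixed-point theorem needs is quasi-concavity of $(Ps_k-\mu C)m_k\sum_tR_t-\Phi_k(\mathbf{Ps})$ in $Ps_k$, or equivalently convex-valued best responses for the Kakutani fallback. You never establish it; you only announce it as the main obstacle.

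\begin{itemize}
\item \emph{The single-crossing plan does not supply it.} Marginal revenue is strictly decreasing, but subtracting a marginal cost $\partial\Phi_k/\partial Ps_k$ that is merely \emph{bounded} (e.g.\ by $Pe_{BSS}/\eta+\mu$) can produce any number of sign changes. The bound controls the sign near the endpoints, not the number of crossings. You would need $\partial\Phi_k/\partial Ps_k$ to be monotone, which is essentially the convexity you yourself observe can fail.
\item \emph{The cost does not depend on $m_k\sum_tR_t$ alone.} Raising $Ps_k$ raises every $m_i$, $i\neq k$. That changes the other stations' equilibrium charging, and hence the TOU prices $Pe_t$ that BSS $k$ pays.
\item \emph{Interiority is no longer automatic} (a secondary point). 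Once $\Phi_k$ is in the payoff, nothing in the model keeps the payoff on the open interval above its value at $U_k=0$. For instance, if the marginal cost of serving a swap exceeds $C_bf(N_k)+C_dd_k-\mu C$, best responses sit at that excluded endpoint.
\end{itemize}

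\textbf{How the paper avoids this.} The paper never forms the reduced payoff. It differentiates $\Pi_{k,t}=Ps_kR_{k,t}-X_{k,t}Pe_t-\mu(X_{k,t}+R_{k,t})$ twice in $Ps_k$ with $X_{k,t}$ held fixed. The first-stage payoff is then effectively margin times market share, which is concave. The paper applies Rosen's theorem and afterwards solves the charging game at the resulting prices.

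Under that convention your proof is already complete once you note that $(Ps_k-\mu C)m_k$ is a positive increasing affine factor times a positive decreasing concave factor. That product is in fact concave, not just log-concave. Your compactification, together with zero payoff at both endpoints, then gives interior best responses.

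\textbf{What you need if you keep the reduced payoff.} Your backward-induction formulation is the more faithful one, but it requires a genuine lemma about the second stage. One way to organize it, with $A_k$ and $S_{-k}$ as in your notation:
\begin{itemize}
\item Substituting $Ps_k=A_k-S_{-k}m_k/(1-m_k)$ turns the revenue into $\bigl[(A_k-\mu C)m_k-S_{-k}m_k^2/(1-m_k)\bigr]\sum_tR_t$, which is concave in $m_k$.
\item The same substitution makes every station's demand affine in $m_k$, because $m_i=U_i(1-m_k)/S_{-k}$.
\item Since $m_k$ is strictly monotone in $Ps_k$, quasi-concavity in $Ps_k$ would follow from convexity of $\Phi_k$ in $m_k$.
\end{itemize}
That convexity concerns one player's cost at a Nash equilibrium, not the optimal value of a single convex program. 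It does not come for free, and nothing in your proposal proves it.
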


\begin{proof}
We will prove the existence of a pure strategy Nash Equilibrium in both layers of the BSS pricing and charging game.

First, we will show that a pure strategy Nash Equilibrium exists in the charging subgame.

The payoff function \eqref{BSS utility} could be written as follows:

\begin{align} \label{eq:payoff_vector}
    \Pi_{k} &= \sum_t (Ps_k \cdot R_{k,t}- Pe_t \cdot X_{k,t} - \mu (X_{k,t}+R_{k,t})) \nonumber \\
    % &= -\sum_t ((\mu + Pe_t)X_{k,t}) + \sum_t (Ps_k-\mu C)R_{k,t} \nonumber \\
    % &= -\sum_t ((\mu + \alpha D_t + \beta)X_{k,t}) + \sum_t (Ps_k-\mu C)R_{k,t} \nonumber \\
    &= -\sum_t ((\mu + \alpha (\sum_k X_{k,t} + D_{I,t}) + \beta)X_{k,t}) \nonumber \\
    & \quad + \sum_t (Ps_k-\mu)R_{k,t} \nonumber \\
    &= -\sum_t (\alpha X_{k,t}^2+(\mu + \alpha (\sum_{i\in \mathcal{K},i\neq k} X_{i,t} + D_{I,t}) + \beta)X_{k,t}) \nonumber \\
    & \quad + \sum_t (Ps_k-\mu)R_{k,t} \nonumber \\
    &=-\mathbf{X_k}^\intercal \mathbf{H} \mathbf{X_k} - \mathbf{f_k}^\intercal \mathbf{X_k} + \sum_t (Ps_k-\mu)R_{k,t}.
\end{align}

Since $\mathbf{H}$ is positive definite, $\Pi_k$ is concave in $\mathbf{X_k}$. Therefore, the charging subgame is a concave K-player Game\cite{rosen1965existence}. According to Rosen's condition\cite{rosen1965existence}, a Nash Equilibrium in pure strategy exists in the charging subgame.

Next, we will show that the pure strategy Nash Equilibrium exists in the pricing subgame.
Consider the payoff of at time slot $t$,

\begin{equation} \label{eq:pi_kt}
    \Pi_{k,t}= Ps_k R_{k,t} - X_{k,t} Pe_t - \mu (X_{k,t} + R_{k,t}).
\end{equation}

Take the second order partial derivative on $Ps_k$, we could obtain

\begin{align}
    \frac{\partial^2 \Pi_{k,t}}{(\partial Ps_k)^2}=-2(U_k + a_k)^{-3}(Rt\cdot a)(Ps_k-\mu + U_k + a_k),
\end{align}

where $a_k=\sum_{i\in \mathcal{K},i\neq k}U_i + U_c$. Due to the constraint \eqref{eq:Ps_lower_constratints}, we have $Ps_k-\mu C > 0$. Thus the utility function $\Pi_{k,t}$ is concave in $Ps_k$. The payoff function $\Pi_k$, which is the sum of $\Pi_{k,t}$, is still concave in $Ps_k$. Therefore the pricing subgame is a concave K-player Game\cite{rosen1965existence}. According to Rosen's condition\cite{rosen1965existence}, a pure strategy Nash Equilibrium exists in the pricing subgame.

In conclusion, a pure strategy Nash Equilibrium exists in both layers of the BSS pricing and charging game. Therefore, a pure strategy sub-game perfect equilibrium.
\end{proof}

The existence of the equilibrium ensures at least one coalitional stable point. Once the game falls into equilibrium, no BSS can increase his profit by unilaterally deviating from the current strategy. Next, we will investigate the uniqueness of the equilibrium to ensure that the strategy under equilibrium is optimal.

\subsection{The Equilibrium Uniqueness in Pricing Subgame}

\begin{theorem} \label{thm:uniqueness_psk}
In the pricing subgame, the equilibrium is unique in pure strategy.
\end{theorem}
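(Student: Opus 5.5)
The plan is to characterize every pure-strategy pricing equilibrium by first-order conditions, and then show these conditions reduce to a scalar fixed-point equation in the aggregate attraction $S=\sum_{i\in\mathcal{K}}U_i+U_c$ that has exactly one root.

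\textbf{Reduction of the payoff.} Write $A_k=C_bf(N_k)+C_dd_k$, so that $U_k=A_k-Ps_k$. I assume the stage-two quantities entering the pricing payoff can be treated as a constant marginal cost $c$ per unit of swapping demand, as in the existence proof, where the degradation and energy terms are absorbed into $(Ps_k-\mu)$. The pricing payoff of BSS $k$ is then
\begin{align*}
\Pi_k=(Ps_k-c)\,\frac{U_k}{U_k+a_k}\,\bar R ,
\end{align*}
with $\bar R=\sum_t R_t$ and $a_k=S-U_k$. Whether this reduction is legitimate is the main weak point of the plan. If the equilibrium charging cost depends on $R_{k,t}$ in a nonlinear way, I would first have to show that its marginal effect per unit of demand is a constant, possibly different for each station, $c_k$. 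The argument below goes through unchanged with $c_k$ in place of $c$.

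\textbf{First-order conditions.} The strategy sets \eqref{eq:Ps_upper_constratints} and \eqref{eq:Ps_lower_constratints} are open intervals, and $\Pi_k$ is concave in $Ps_k$ by the proof of Theorem~\ref{thm:existance}. Hence $(Ps_k^*)$ is an equilibrium if and only if $\partial\Pi_k/\partial Ps_k=0$ for every $k$. Using $\partial m_k/\partial Ps_k=-a_k/S^2$, this condition reads $U_kS=(Ps_k-c)(S-U_k)$. Set $B_k=A_k-c>0$, so that $Ps_k-c=B_k-U_k$. The condition becomes the quadratic
\begin{align*}
U_k^2-(2S+B_k)U_k+B_kS=0 .
\end{align*}
Its only root compatible with $0<U_k<\min(B_k,S)$ is
\begin{align*}
U_k=g_k(S)=\tfrac12\bigl(2S+B_k-\sqrt{4S^2+B_k^2}\bigr).
\end{align*}
I would check that the larger root violates $U_k<S$. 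Consequently every equilibrium is determined by its aggregate $S$, and $S$ must satisfy
\begin{align*}
S=\sum_{k\in\mathcal{K}}g_k(S)+U_c .
\end{align*}

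\textbf{Uniqueness of the aggregate.} I plan to divide the fixed-point equation by $S$, giving
\begin{align*}
1=\sum_k\frac{g_k(S)}{S}+\frac{U_c}{S},
\end{align*}
and to write $g_k(S)/S=\psi(B_k/S)$ with $\psi(b)=\tfrac12\bigl(2+b-\sqrt{4+b^2}\bigr)$. Since $\psi'(b)=\tfrac12\bigl(1-b/\sqrt{4+b^2}\bigr)>0$, each term $\psi(B_k/S)$ is strictly decreasing in $S$, and so is $U_c/S$. Therefore the right-hand side is strictly decreasing on $(0,\infty)$.

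As $S\to0^+$, the right-hand side tends to $+\infty$ because of $U_c/S$. As $S\to\infty$, it tends to $\sum_k\psi(0)+0=0$. There is therefore exactly one root $S^*$. (Strict monotonicity suffices for uniqueness; the limits recover existence.)

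Each $U_k^*=g_k(S^*)$ is then unique, and so is $Ps_k^*=A_k-U_k^*$. Finally, I would verify that $Ps_k^*$ satisfies $Ps_k^*>c$ and the upper bound \eqref{eq:Ps_upper_constratints}; this follows from $0<g_k<B_k$.
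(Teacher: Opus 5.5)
Your proof is correct, and it takes a genuinely different route from the paper's. The paper appeals to Rosen's Theorem~2: it picks weights $\mathbf r=\frac{(\sum U+U_c)^2}{\sum_t R_t}\mathbf 1$, differentiates a weighted vector of marginal payoffs, and reads off $G+G^\intercal=-4(\sum U+U_c)I$. You instead use the aggregative structure directly. Each first-order condition determines $U_k=g_k(S)$ from the total attraction alone, so every equilibrium corresponds to a root of the scalar equation $1=\sum_k\psi(B_k/S)+U_c/S$, whose right-hand side is strictly decreasing. Your reduction of the pricing payoff to $(Ps_k-c)\,m_k\sum_t R_t$ is not an extra gap relative to the paper: its existence proof and Lemma~\ref{lmm:BS_Ps} treat the pricing payoff exactly this way, with $c=\mu$. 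Your route needs only interiority of equilibria (open strategy sets) and $U_c>0$. It gives existence and a one-dimensional characterization of the equilibrium (bisection on $S$) at no extra cost, and it carries over verbatim to station-specific costs $c_k$. Rosen's criterion is more general in principle, since it covers coupled constraints, boundary equilibria and stability of gradient dynamics, but its use here is fragile. The chosen $\mathbf r$ varies with the price profile through $S$, so it is not the constant weight vector the theorem assumes; that alone is repairable, because all equilibria are interior. More seriously, the $k$-th entry of the paper's $g$ contains $\sum_i(Ps_i-\mu)U_i$, whereas $r_k\,\partial\Pi_k/\partial Ps_k=(U_k-Ps_k+\mu)S+(Ps_k-\mu)U_k$. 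With the correct entries, the off-diagonal Jacobian terms are $-(U_k-Ps_k+\mu)$ rather than antisymmetric, and $G+G^\intercal$ need not be negative definite. For $K=2$, with $e_k=U_k-Ps_k+\mu$, it fails once $|e_1+e_2|\ge 4S$, for example when a station with large $C_bf(N_k)+C_dd_k$ prices near its upper bound. Your scalar fixed-point argument is therefore the more reliable proof of this theorem.
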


\begin{proof}
We prove the uniqueness of the Nash equilibrium by verifying that the game satisfies Theorem 2 in\cite{rosen1965existence}, i.e., the weighted sum of the utility function $\sigma(\mathbf{X},\mathbf{r})$ is diagonally strictly concave given a constant vector $\mathbf{r}=\frac{(\sum_{k\in \mathcal{K}} U_k+U_c)^2}{\sum_{t=1}^T R_t}\mathbf{1}$. In this proof, we abbreviate $\sum_{k\in \mathcal{K}} U_k$ by $\sum U$.

One sufficient condition for $\sigma(\mathbf{X},\mathbf{r})$ is diagonally strictly concave is that the symmetric matrix $[G(\mathbf{X},\mathbf{r})+G(\mathbf{X},\mathbf{r})^\intercal]$ be negative definite, where $G(\mathbf{X},\mathbf{r})$ is the Jacobian of the pseudogradient of $\sigma(\mathbf{X},\mathbf{r})$ \cite{rosen1965existence}. We define $\sigma(\mathbf{X},\mathbf{r}):=\sum r_i \Pi_i(\mathbf{X})$, and $g(\mathbf{X},\mathbf{r})$ is the pseudogradient \cite{rosen1965existence} of $\sigma(\mathbf{X},\mathbf{r})$. When $\mathbf{r}=\frac{(\sum U+U_c)^2}{\sum_{t=1}^T R_t}\mathbf{1}$, $g(\mathbf{X},\mathbf{r}) =$

\begin{align}
\begin{bmatrix}
    (u_1-p_1+\mu)(\sum U+U_c) + \sum_{i\in \mathcal{K}} (p_i-\mu)U_i \\
    (u_2-p_2+\mu)(\sum U+U_c) + \sum_{i\in \mathcal{K}} (p_i-\mu)U_i \\
    \vdots \\
    (u_k-p_k+\mu)(\sum U+U_c) + \sum_{i\in \mathcal{K}} (p_i-\mu)U_i
\end{bmatrix}.
\end{align}

Define $E_i=U_i-P_i$. The Jacobian of $g(\mathbf{X},\mathbf{r})$ is $G(\mathbf{X},\mathbf{r}) =$
\begin{align}
\begin{bmatrix}
    -2(\sum U+U_c) & E_2-E_1 & \cdots & E_k-E_1 \\
    E_1-E_2 & -2(\sum U+U_c) & \cdots & E_k-E_2 \\
    \vdots & \vdots & \ddots & \vdots \\
    E_1-E_k & E_2-E_k & \cdots & -2(\sum U+U_c)
\end{bmatrix}.
\end{align}
Thus matrix $[G(\mathbf{X},\mathbf{r})+G(\mathbf{X},\mathbf{r})^\intercal]=$
\begin{align}
\begin{bmatrix}
    -4(\sum U+U_c) & 0 & \cdots & 0 \\
    0 & -4(\sum U+U_c) & \cdots & 0 \\
    \vdots & \vdots & \ddots & \vdots \\
    0 & 0 & \cdots & -4(\sum U+U_c)
\end{bmatrix}.
\end{align}
is negative definite.

Above all, since the $[G(\mathbf{X},\mathbf{r})+G(\mathbf{X},\mathbf{r})^\intercal]$ is negative definite, the weighted sum of utility $\sigma(\mathbf{X},\mathbf{r})$ is diagonally strictly concave. According to Theorem 2 in \cite{rosen1965existence}, the equilibrium point is unique.
\end{proof}

The uniqueness of equilibrium in the pricing subgame ensures that the pricing strategy under equilibrium is coalitional optimal guaranteed. Because if there is no such result, the BSSs still need to consider under which equilibrium his profit is maximized.

\subsection{The Equilibrium Uniqueness in Battery Charging Subgame}

% Except the existence, we also find the Nash Equilibrium in the battery charging subgame is unique in pure strategy.

\begin{theorem} \label{thm:uniqueness_xk}
In the charging subgame, the equilibrium is unique in pure strategy.
\end{theorem}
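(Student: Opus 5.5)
We follow the same route as Theorem~\ref{thm:uniqueness_psk} and verify the condition of Theorem~2 in \cite{rosen1965existence}: find a positive weight vector $\mathbf{r}$ for which $\sigma(\mathbf{X},\mathbf{r})=\sum_k r_k \Pi_k(\mathbf{X})$ is diagonally strictly concave. Here the strategy of each BSS $k$ is the vector $\mathbf{X_k}\in\mathbb{R}^T$. Its feasible set is defined by the demand constraints \eqref{eq:X_demand_constraints} and the transmission constraints \eqref{eq:X_transmission_constraints}. For fixed prices $Ps_k$, and hence fixed $R_{k,t}$, all of these are linear inequalities in $\mathbf{X_k}$ alone. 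So each player's set is a convex polytope that does not depend on the other players, which is the setting in which Rosen's uniqueness theorem applies. I will assume it is nonempty and compact, as in Theorem~\ref{thm:existance}.

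\textbf{Step 1: pseudogradient.} Take the natural choice $\mathbf{r}=\mathbf{1}$. By \eqref{eq:payoff_vector},
\[
\nabla_{\mathbf{X_k}}\Pi_k=-2\mathbf{H}\mathbf{X_k}-\mathbf{f_k}.
\]
The vector $\mathbf{f_k}$ depends on $\mathbf{X_{-k}}$ only through $\alpha\sum_{i\neq k}X_{i,t}$, and the last term of \eqref{eq:payoff_vector} does not depend on $\mathbf{X}$. The pseudogradient $g(\mathbf{X},\mathbf{1})$ is obtained by stacking these $K$ vectors.

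\textbf{Step 2: Jacobian.} The Jacobian $G$ is a $KT\times KT$ block matrix. Its diagonal blocks are $-2\alpha I_T$ and its off-diagonal blocks are $-\alpha I_T$. Thus
\[
G=-\alpha\,(I_K+\mathbf{1}\mathbf{1}^\intercal)\otimes I_T,
\]
which is symmetric, so $G+G^\intercal=-2\alpha\,(I_K+\mathbf{1}\mathbf{1}^\intercal)\otimes I_T$.

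\textbf{Step 3: negative definiteness.} The matrix $I_K+\mathbf{1}\mathbf{1}^\intercal$ has eigenvalues $1$ (with multiplicity $K-1$) and $K+1$, so it is positive definite. Taking the Kronecker product with $I_T$ keeps it positive definite. We also need $\alpha>0$, which follows from \eqref{eq:alpha_beta} together with $Pe_A<Pe_{BSS}$ and $D_{max}>D_{min}$: indeed $\alpha=(Pe_{BSS}-Pe_A)/(D_{max}-D_{min})>0$. Hence $G+G^\intercal$ is negative definite for every $\mathbf{X}$. By Rosen's sufficient condition (Theorem~6 in \cite{rosen1965existence}), $\sigma(\mathbf{X},\mathbf{1})$ is diagonally strictly concave, and Theorem~2 of \cite{rosen1965existence} then gives a unique equilibrium.

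\textbf{Expected obstacle.} The algebra is easy, since $G$ is constant. The delicate point is making sure Rosen's framework applies. The constraints coupling $E_{k,t}$ across time slots involve only player $k$'s own variables, so the strategy sets are not coupled across players, but this must be checked. Nonemptiness of the feasible set also needs the demand to be serviceable under $X_{M,k}$ and $N_k$. If nonemptiness becomes an issue, I would state it as a standing feasibility assumption.

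An alternative argument, useful as a sanity check, is that the game is an exact potential game with potential
\[
\Phi=-\sum_t\Bigl[\tfrac{\alpha}{2}\bigl(\textstyle\sum_k X_{k,t}\bigr)^2+\tfrac{\alpha}{2}\sum_k X_{k,t}^2+\sum_k(\mu+\alpha D_{I,t}+\beta)X_{k,t}\Bigr].
\]
This $\Phi$ is strictly concave over a product of convex sets. Its KKT points are exactly the Nash equilibria, and a strictly concave function has a unique maximizer, which gives uniqueness again.
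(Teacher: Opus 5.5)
Your proof is correct and takes the same route as the paper: Rosen's diagonal strict concavity with $\mathbf{r}=\mathbf{1}$, driven by the constant Jacobian with $-2\alpha$ on the diagonal and $-\alpha$ off it. The only difference is that you treat all $T$ slots at once via $G=-\alpha(I_K+\mathbf{1}\mathbf{1}^\intercal)\otimes I_T$, and you explicitly check $\alpha>0$, the eigenvalues and the uncoupled strategy sets, whereas the paper works slot by slot and then invokes the summation argument of \cite{krawczyk2005coupled}.
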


\begin{proof}
We notice that the BSS profit is the sum of the profit in each time slot. We use the trick in \cite{krawczyk2005coupled} that the summation of a finite number of utility functions of the static game does not affect the concavity and strictly positive definiteness. We focus on the utility at one time slot as

\begin{align}
    \Pi_{k,t} =& - \alpha X_{x,t}^2-\alpha(\sum_{i\in \mathcal{K},i\neq k} X_{i,t}+D_{I,t} + \beta+\mu) X_{k,t} \nonumber \\
    &\qquad + (Ps_k-\mu)R_{k,t}.
\end{align}

Next, we turn to the $\Pi_{k,t}$ property. Similar to the pricing game, we verify whether the weighted sum of utility is diagonally strictly concave. When $\mathbf{r}=\mathbf{1}$, $\sigma_t(\mathbf{X},\mathbf{r})=\sum r_i \Pi_i(\mathbf{X}) =$

\begin{align}
\sum_{i=1}^k (-\alpha X_i^2 - (\mu+\alpha (\sum X_{-i} + D_I)+\beta)X_i + (Ps_k-\mu)R_i).
\end{align}

The gradient $g(\mathbf{X},\mathbf{r}) =$
\begin{align}
\begin{bmatrix}
    -2\alpha X_1^2-\mu+\alpha (\sum X_{-1} + D_I)+\beta \\
    -2\alpha X_2^2-\mu+\alpha (\sum X_{-2} + D_I)+\beta \\
    \vdots \\
    -2\alpha X_k^2-\mu+\alpha (\sum X_{-k} + D_I)+\beta
\end{bmatrix}.
\end{align}

Then the Jacobian $G(\mathbf{X},\mathbf{r}) =$
\begin{align}
\begin{bmatrix}
    -2\alpha & -\alpha & \cdots & -\alpha \\
    -\alpha & -2\alpha & \cdots & -\alpha \\
    \vdots & \vdots & \ddots & \vdots \\
    -\alpha & -\alpha & \cdots & -2\alpha
\end{bmatrix}.
\end{align}
is negative definite.

Above all, since the $[G(\mathbf{X},\mathbf{r})+G(\mathbf{X},\mathbf{r})^\intercal]$ is negative definite, the weighted sum of utility $\sigma_t(\mathbf{X},\mathbf{r})$ is diagonally strictly concave. Since summing up a finite number of utility functions of the static game does not change the concavity and strictly positive definiteness \cite{krawczyk2005coupled}, according to Rosen's condition \cite{rosen1965existence}, the equilibrium in the battery charging subgame is unique.
\end{proof}

The equilibrium uniqueness in both subgames provides a theoretical guarantee for the optimal pricing and battery charging strategy. Because if there is no such result, the BSSs still need to consider under which equilibrium his profit is maximized. Besides, the uniqueness of equilibrium guarantees that any algorithm that converges to a Nash equilibrium converges to the optimal solution, which is helpful when designing the solving method.

\section{Solution Approach for Proposed Hierarchical Game}
\label{sec:solution}

Having established the existence and uniqueness of equilibrium in the hierarchical game, we design a solution method using the backward induction \cite{fudenberg1991game}. Specifically, we first solve the equilibrium of the pricing subgame, and then use this result to solve the charging subgame.

\subsection{Best Response Dynamics Algorithm for Price Subgame}

In this subsection, we solve the equilibrium swapping price in the subgame. We introduce a concept called the best response strategy \cite{fudenberg1991game}. The best response strategy is the strategy that maximizes the payoff of one player if he knows the strategies of other players, i.e.,
\begin{align}
    Ps_k^*(\mathbf{Ps_{-k}}) = \arg \max_{Ps_k} \Pi_k(Ps_k,\mathbf{Ps_{-k}}).
\end{align}

Solving the best response usually serves as the first step in solving the equilibrium. We solve the best response pricing strategy as follows.

\begin{lemma}\label{lmm:BS_Ps}
The best response pricing strategy $Ps_k^*(\mathbf{Ps_{-k}})$ for each BSS is as follows,

\begin{align} \label{eq:BS_Ps}
Ps_k^*= &C_b f(N_k) + C_d d_k + a_k - \nonumber \\
&\quad \sqrt{a_k(a_k+ C_b f(N_k) + C_d d_k - \mu)},
\end{align}
where $a_k=\sum_{i\in \mathcal{K},i\neq k}U_i + U_c$.
\end{lemma}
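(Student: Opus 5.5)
The plan is to reduce the best response to a one-dimensional concave maximization in $Ps_k$ and solve the first-order condition in closed form. Following the backward-induction structure and the way the pricing payoff is treated in the proof of Theorem~\ref{thm:uniqueness_psk}, I would take the part of $\Pi_k$ that the pricing stage controls to be the swapping-service margin
\begin{align*}
h_k(Ps_k) = \sum_{t}(Ps_k-\mu)R_{k,t} = \Big(\sum_t R_t\Big)\,(Ps_k-\mu)\,\frac{U_k}{U_k+a_k},
\end{align*}
with $a_k=\sum_{i\neq k}U_i+U_c$ held fixed, since it depends only on $\mathbf{Ps_{-k}}$. The charging cost is left to the charging subgame, exactly as in the pseudogradient used for Theorem~\ref{thm:uniqueness_psk}. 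This is the modeling step I expect a referee to probe: it is the main conceptual obstacle, and I would state it explicitly as the convention under which the lemma holds. Writing $B_k:=C_b f(N_k)+C_d d_k$, we have $U_k=B_k-Ps_k$. The positive factor $\sum_t R_t$ does not affect the maximizer, so it suffices to maximize
\begin{align*}
\phi(p) = \frac{(p-\mu)(B_k-p)}{B_k-p+a_k}
\end{align*}
over the open interval $\mu<p<B_k$ given by \eqref{eq:Ps_lower_constratints} and \eqref{eq:Ps_upper_constratints}.

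Concavity of $\phi$ on this interval was already established in the proof of Theorem~\ref{thm:existance}. Hence any interior stationary point is the unique maximizer. To find it cleanly I would change variables to $u=B_k-p+a_k>0$ and set $S_k:=B_k+a_k-\mu$, so that $B_k-p=u-a_k$ and $p-\mu=S_k-u$. Clearing the denominator in $\phi'(p)=0$ gives
\begin{align*}
\big[(u-a_k)-(S_k-u)\big]u+(S_k-u)(u-a_k)=0 .
\end{align*}
Expanding this collapses to $u^2=a_kS_k$, so $u=\sqrt{a_k S_k}$, where the positive root is forced by $u>0$. Substituting back yields $Ps_k^*=B_k+a_k-\sqrt{a_k(a_k+B_k-\mu)}$, which is \eqref{eq:BS_Ps}.

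Finally I would verify that this stationary point lies strictly inside the feasible interval, so that no boundary case arises. The condition $Ps_k^*<B_k$ is equivalent to $a_k<\sqrt{a_kS_k}$, that is, $a_k<S_k$, that is, $B_k>\mu$. Likewise $Ps_k^*>\mu$ is equivalent to $S_k>\sqrt{a_kS_k}$, which is the same condition. The inequality $B_k>\mu$ is exactly what is needed for the price interval to be nonempty, so it holds whenever the game is well posed. The stationary point is therefore interior, and by concavity it is the best response.
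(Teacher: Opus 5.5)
Your proposal is correct and follows the paper's proof. With $a_k$ fixed, you set the derivative of $(Ps_k-\mu)R_t\,U_k/(U_k+a_k)$ to zero, reduce it to $(U_k+a_k)^2=a_k(a_k+C_bf(N_k)+C_dd_k-\mu)$, take the positive root, and conclude via the concavity shown in Theorem~\ref{thm:existance}, treating the charging cost as independent of $Ps_k$ just as the paper implicitly does. Your substitution $u=U_k+a_k$ and the explicit check that $\mu<Ps_k^*<C_bf(N_k)+C_dd_k$ are small additions the paper omits (the paper writes the lower bound \eqref{eq:Ps_lower_constratints} as $\mu C$ while using $\mu$ in the profit, which is an inconsistency in the paper, not a flaw in your argument).
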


\begin{proof}
We calculate the best response pricing strategy by taking the first order partial derivative on $Ps_k$,

\begin{equation}
\frac{\partial \Pi_{k,t}}{\partial Ps_k}=R_t - \frac{a_k}{U_k+a_k}R_t-\frac{P_k-\mu}{(U_k+a_k)^2}CR_t.
\end{equation}

Let the first-order partial derivative equal to zero, we obtain
\begin{align}
&\frac{\partial \Pi_{k}}{\partial Ps_k}=0, \\
&\iff 1-\frac{a_k}{U_k+a_k}-\frac{P_k-\mu}{(U_k+a_k)^2}a=0, \\
&\iff (U_k+a_k)^2=a_k(a_k-\mu C+C_b f(N_k)+C_d d_k).
\end{align}
Since $a_k>0$,$U_k>0$,$C_b f(N_k)+C_d d_k > Ps_k > \mu C$, therefore
\begin{align}
&U_k+a_k = \sqrt{a_k(a_k-\mu+C_b f(N_k)+C_d d_k)}.
\end{align}
which is equivalent to \eqref{eq:BS_Ps}.

To sum up, the pricing strategy in Lemma~\ref{lmm:BS_Ps} confirms the first order partial derivative on $Ps_k$ equals zero. Since the utility function is concave in $Ps_k$ as proved in Theorem~\ref{thm:existance}, the pricing strategy in Lemma~\ref{lmm:BS_Ps} maximizes the utility function.
\end{proof}

Next, we design an algorithm to solve the Nash Equilibrium based on the Best Response Dynamics (BRD). The algorithm details are described in Algorithm~\ref{alg:IBR_Solving_Ps_NE}. Firstly, all $Ps_k$ are initialized to the zeros. Then, we calculate the best response strategy of each BSS to update their pricing strategy from the previous iteration. Once all BSSs' strategies are updated, we obtain the new iteration and calculate the infinity norm of the two iteration vectors. If the difference exceeds the tolerance, we assign the $\mathbf{Ps^{old}}$ to the updated strategy and repeat the iteration. Otherwise, each player's strategy is the best response strategy of other players within the tolerance. By definition in \eqref{eq:Ps_NE_definition}, this vector represents the Nash Equilibrium for the swapping subgame.

\begin{algorithm}[t]
    \small
    \begin{algorithmic}[1]
        \REQUIRE $d_k$, $N_k$ for all $k \in \mathcal{K}$, $C_d$, $C_b$, $\lambda$, $\hat{t}$, $\mu$, $C$, tolerance $\epsilon$.
        \STATE Initialize $\mathbf{Ps^{old}}$ to $\mathbf{0}$, $d=+\infty$.
        \WHILE{$d > \epsilon$}
        \FOR{$k$ = [1:$K$]}
        \STATE Calculate $k$-th element of new iteration $\mathbf{Ps^{new}}$ with best response of the previous iteration $\mathbf{Ps^{old}}$ using Lemma~\ref{lmm:BS_Ps}.  \label{algstep:update_pricing}
        \ENDFOR
        \STATE Calculate $d=\Vert \mathbf{Ps^{new}} - \mathbf{Ps^{old}} \Vert_\infty$.
        \STATE Assign $\mathbf{Ps^{old}}=\mathbf{Ps^{new}}$.
        \ENDWHILE
        \STATE Output $\mathbf{Ps^{new}}$ as Nash Equilibrium.
    \end{algorithmic}
    \caption{\strut BRD for the Swapping Price Equilibrium}
    \label{alg:IBR_Solving_Ps_NE}
\end{algorithm}

\begin{theorem}
    The Algorithm~\ref{alg:IBR_Solving_Ps_NE} will converge to the unique pure strategy Nash Equilibrium in the pricing subgame.
\end{theorem}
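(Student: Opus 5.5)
I plan to prove convergence by reducing Algorithm~\ref{alg:IBR_Solving_Ps_NE} to a one-dimensional fixed-point iteration on the aggregate attraction, and then to show that this scalar map is monotone and a contraction.

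\textbf{Reduction to a scalar map.} Write $b_k := C_b f(N_k) + C_d d_k$, so that $U_k = b_k - Ps_k$. Let $S := \sum_{i\in\mathcal{K}} U_i + U_c$ denote the total attraction. Then $a_k = S - U_k$. By Lemma~\ref{lmm:BS_Ps}, the best response satisfies
\[
U_k^* + a_k = \sqrt{a_k(a_k + b_k - \mu)},
\]
which gives
\[
U_k^*(a_k) = \sqrt{a_k(a_k + b_k - \mu)} - a_k .
\]
This is the natural coordinate for the analysis, because each best response depends on the others only through $a_k$.

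\textbf{Step 1: the best response is monotone and contractive in $a_k$.} Differentiating,
\[
\frac{dU_k^*}{da_k} = \frac{2a_k + b_k - \mu}{2\sqrt{a_k(a_k + b_k - \mu)}} - 1 .
\]
Set $c = b_k - \mu > 0$. The inequality $(2a + c)^2 > 4a(a + c)$ shows that this derivative is positive. Moreover, $(2a+c)/(2\sqrt{a(a+c)}) < 2$ exactly when $c$ is small relative to $a$, so the derivative is strictly less than $1$ only on part of the domain. For $a \ge a_{\min} := U_c > 0$, the derivative is bounded by
\[
\gamma_k = \frac{2a_{\min} + c}{2\sqrt{a_{\min}(a_{\min} + c)}} - 1 .
\]
This bound need not be below $1$, so a direct sup-norm contraction of the Jacobian of the map $\mathbf{Ps}^{old} \mapsto \mathbf{Ps}^{new}$ may fail. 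Indeed, the row sums of that Jacobian are $(K-1)\,dU_k^*/da_k$, which can exceed $1$ when $K$ is large.

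\textbf{Step 2: use monotonicity instead (a supermodular / Tarski-type argument).} The map $\Phi: \mathbf{U} \mapsto \big(U_k^*(\textstyle\sum_{i\ne k}U_i + U_c)\big)_k$ is componentwise increasing. The strategy space is a compact box in price, so it corresponds to a compact lattice in $\mathbf{U}$. The algorithm starts at $\mathbf{Ps} = \mathbf{0}$, which is the maximal $\mathbf{U}$, namely $U_k = b_k$. Since the best response always lies inside the admissible box, $\Phi(\mathbf{U}^0) \le \mathbf{U}^0$. Because $\Phi$ is increasing, induction then shows that the iterates $\mathbf{U}^n$ form a monotone decreasing sequence. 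They are bounded below, since $U_k > 0$ by \eqref{eq:Ps_upper_constratints}, so the sequence converges. By continuity of $\Phi$, the limit is a fixed point of $\Phi$, that is, a Nash equilibrium. By Theorem~\ref{thm:uniqueness_psk}, this fixed point is the unique equilibrium. Since $\mathbf{Ps} = \mathbf{b} - \mathbf{U}$, the price iterates converge in $\|\cdot\|_\infty$, so the stopping test $d \le \epsilon$ is eventually met.

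\textbf{Main obstacle.} The delicate step is justifying the starting comparison $\Phi(\mathbf{U}^0) \le \mathbf{U}^0$ together with the lower boundedness. This requires the best response from Lemma~\ref{lmm:BS_Ps} to respect the price bounds \eqref{eq:Ps_lower_constratints} and \eqref{eq:Ps_upper_constratints} even at the artificial initial point $\mathbf{Ps} = \mathbf{0}$, which itself violates $Ps_k > \mu C$. To handle this, I would check directly that the formula gives $\mu < Ps_k^* < b_k$ whenever $a_k > 0$. This follows from
\[
a_k < \sqrt{a_k(a_k + c)} < a_k + c/2 + \ldots,
\]
so the first iterate is admissible and monotonicity takes over from there. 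If one also wants a convergence rate, I would additionally bound $\Phi$ as a contraction near the fixed point using the local derivative.
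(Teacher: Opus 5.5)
Your proof is correct, but after the first step it takes a different route from the paper.

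\textbf{What is shared.} You and the paper compute the same slope, $dU_k^*/da_k=\frac{2a_k+b_k-\mu}{2\sqrt{a_k(a_k+b_k-\mu)}}-1>0$ (this is the paper's $dB_k/d\Sigma_{-k}$), and both observe that it is positive for every player.

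\textbf{Where you diverge.} The paper stops at that point. It declares the pricing subgame aggregative, treats each step of Algorithm~\ref{alg:IBR_Solving_Ps_NE} as a better reply, and cites Dindo\v{s} and Mezzetti to conclude that better-reply dynamics converge when these slopes share a sign. You instead give a self-contained strategic-complements argument:
in attraction coordinates the simultaneous update is an isotone map $\Phi$;
the initialization $\mathbf{Ps}=\mathbf{0}$ is the top point $\mathbf{U}=\mathbf{b}$, so $\Phi(\mathbf{U}^0)\le\mathbf{U}^0$;
the iterates therefore decrease monotonically and stay positive;
since $a_k\ge U_c>0$ keeps $\Phi$ continuous, the limit is a fixed point;
you check its admissibility ($\mu<Ps_k<b_k$) directly and then invoke Theorem~\ref{thm:uniqueness_psk}.

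\textbf{What each route buys.} Your route is elementary and fits exactly the deterministic simultaneous update the algorithm uses. The cited theorem is, to my recollection, stated for a stochastic better-reply process, so carrying it over to simultaneous updates is not automatic. Your route also shows that prices rise monotonically to the equilibrium. It uses uniqueness only to identify the limit; without uniqueness you would still reach the lowest-price equilibrium. The paper's route is shorter and nominally covers arbitrary starting points. To match that, you would add the standard sandwich: run the iterations from $\mathbf{U}=\mathbf{b}$ and from $\mathbf{U}=\mathbf{0}$. These converge to the same fixed point by uniqueness, and by isotonicity they bound every trajectory started from an admissible point.

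\textbf{Two clean-ups.} First, the contraction discussion in Step 1 and the appeal to a compact lattice and Tarski are not needed, since you only use a bounded monotone sequence plus continuity. The admissible price set is also not compact, because it is cut out by strict inequalities. Second, you should state explicitly the standing assumptions $U_c>0$ and $b_k>\mu$, on which your bounds rest.
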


\begin{proof}
Observing the utility function of BSS $k$. In terms of pricing, we find that the payoff of each BSS can be divided into two parts: the function of the player's own pricing (term $Ps_k$ and $U_k$), and the function of the sum of the pricing of all players (term $\sum U+U_c$). Therefore, the pricing subgame is an aggregative game \cite{dindovs2006better}.

The proof idea is from \cite{dindovs2006better}. Since the utility function only contains two parts, we define the derivative of utility $\Pi_k$ as
\begin{align}
    D_k(Ps_k,\sum) = \frac{\partial \Pi_k}{\partial Ps_k}= \frac{\partial \Pi_k^1}{\partial Ps_k} (Ps_k,\sum) + \frac{\partial \Pi_k^1}{\partial \sum} (Ps_k,\sum).
\end{align}

We consider the best response function $Ps_k^*(\mathbf{Ps_{-k}})$. As the pricing subgame is an aggregative game, the best response only depends on the sum of the opponent's strategy $\sum_{-k}=-\sum Ps_k$. That is, there is a function of sum $B_k(\sum_{-k})$ such that $Ps_k^*(\mathbf{Ps_{-k}})=B_k(\sum_{-k})$.
When $Ps_k$ equals the best response strategy $Ps_k^*(\mathbf{Ps_{-k}})$, the derivative of utility equals to zero, i.e.,
\begin{align} \label{eq:derivative_sum_ps}
    D_k(Ps_k,\sum) = D_k(B_i(\sum_{-k}),B_i(\sum_{-k}) + \sum_{-k})=0.
\end{align}
Since $\Pi_k$ is concave in $Ps_k$ as shown in Theorem~\ref{thm:existance}, we obtain \\
(1) $B_k$ is continuous and single-valued in $\sum_{-k}$, since equation~\eqref{eq:derivative_sum_ps} has one unique solution for all $\sum_{-k}$, where the concave $\Pi_k$ reaches its maximum. And \\
(2) the payoff of BSS $k$ decreases as $Ps_k$ moves away from $B_k(\sum_{-k})$.

Taking the derivative of $B_k(\sum_{-k})$ on $\sum_{-k}$, we get
\begin{align}
    &\frac{B_k(\sum_{-k})}{\sum_{-k}}=\frac{B_k(\sum_{-k})}{d a_k} \cdot \frac{d a_k}{d \sum_{-k}} \nonumber \\
    &= (1-\frac{2a_k+C_bf(N_k)+C_d d_k-\mu C}{2\sqrt{a_k(a_k+C_bf(N_k)+C_d d_k-\mu)}}) (-1) \nonumber \\
    &= \frac{2a_k+C_bf(N_k)+C_d d_k-\mu C}{2\sqrt{a_k(a_k+C_bf(N_k)+C_d d_k-\mu)}}-1.
\end{align}
Since $2a_k+C_bf(N_k)+C_d d_k-\mu \geq 2\sqrt{a_k(a_k+C_bf(N_k)+C_d d_k-\mu)}$, all derivative is positive and thus have the same sign. According to \cite{dindovs2006better}, any better reply dynamics in the pricing subgame will converge to equilibrium. In the proposed algorithm~\ref{alg:IBR_Solving_Ps_NE}, step~\ref{algstep:update_pricing} improves the profit and thus is a better reply to other BSSs' actions. Therefore Algorithm~\ref{alg:IBR_Solving_Ps_NE} will converge to the Nash Equilibrium.
\end{proof}
Intuitively, the BRD either converges to equilibrium or falls into a loop (see acyclic game in \cite{fabrikant2013structure}). Since the best response functions of the pricing subgame have the same sign on $\sum_{-k}$, players will always move in the same direction toward $B_k(\sum_{-k})$ and thus will converge to the equilibrium.

\subsection{Best Response Dynamics Algorithm for Battery Charging Subgame}

With the equilibrium in the pricing subgame, we solve the charging subgame using backward induction. In this subsection, we solve the charging equilibrium based on the pricing equilibrium.

We first solve the best response charging strategy. For notation convenience, we define some matrices as follows,
\small
$\mathbf{A}=\eta \begin{bmatrix}
    1 &   &   &  \\
    1 & 1 &   &  \\
    \vdots & \vdots & \ddots  &  \\
    1 & 1 & \cdots & 1 \\
    -1 &   &   &  \\
    -1 & -1 &   &  \\
    \vdots & \vdots & \ddots  &  \\
    -1 & -1 & \cdots & -1
\end{bmatrix}$,
$\mathbf{b_k}=\begin{bmatrix}
    N_k C-E_{k,0} \\
    N_k C-E_{k,0} + R_{k,1} \\
    \vdots \\
    N_k C-E_{k,0} + \sum_{t=1}^{T-1} R_{k,t} \\
    E_{k,0}-R_{k,1} \\
    E_{k,0}-R_{k,1}-R_{k,2}\\
    \vdots \\
    E_{k,0} - \sum_t R_{k,t} \\
\end{bmatrix}$,
\normalsize
$\mathbf{lb}=\mathbf{0}^\intercal$, $\mathbf{ub_k}=X_{M,k} \mathbf{1}^\intercal$.
%$\mathbf{ub}=[X_{M,1},\dots,X_{M,k}]^\intercal$.

\begin{lemma}\label{lmm:BS_Xk}
The best response battery charging strategy $X_k^{best}(\mathbf{X_{-i}})$ for each BSS, is the optimal solution of the following quadratic programming problem,
\begin{subequations} \label{eq:X_k_optimization}
\begin{align}
&\min_{\mathbf{X_k}} \quad &\mathbf{X_k}^\intercal \mathbf{H} \mathbf{X_k} + \mathbf{f_k}^\intercal \mathbf{X_k}, \\
&s.t. &\mathbf{lb} \leq \mathbf{X_k} \leq \mathbf{ub_k}, \label{seq:X_k_optimization_bound_contraint} \\
&     &\mathbf{A} \mathbf{X_k} \leq \mathbf{b_k}. \label{seq:X_k_optimization_inequality_constraint}
\end{align}
\end{subequations}
\end{lemma}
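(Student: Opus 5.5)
The plan is to show that, once the prices $\mathbf{Ps}$ have been fixed in the pricing subgame and the opponents' charging profile $\mathbf{X_{-k}}$ is given, maximizing $\Pi_k$ over $\mathbf{X_k}$ is exactly problem \eqref{eq:X_k_optimization}. The argument has three steps.

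\textbf{Step 1: reduce the objective.} Start from the decomposition \eqref{eq:payoff_vector} obtained in the proof of Theorem~\ref{thm:existance}:
\[
\Pi_k=-\mathbf{X_k}^\intercal \mathbf{H} \mathbf{X_k} - \mathbf{f_k}^\intercal \mathbf{X_k} + \sum_t (Ps_k-\mu)R_{k,t}.
\]
In the charging subgame the prices are already fixed, so by \eqref{eq:demand_response} the demands $R_{k,t}=m_kR_t$ are fixed as well. The last term therefore does not depend on $\mathbf{X_k}$. The vector $\mathbf{f_k}$ depends only on $\mathbf{X_{-k}}$, $\mathbf{D_I}$, $\alpha$, $\beta$ and $\mu$, so it is a constant from BSS $k$'s point of view. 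Hence $\arg\max_{\mathbf{X_k}}\Pi_k=\arg\min_{\mathbf{X_k}}\mathbf{X_k}^\intercal \mathbf{H} \mathbf{X_k} + \mathbf{f_k}^\intercal \mathbf{X_k}$ over the same feasible set.

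\textbf{Step 2: rewrite the feasible set.} The transmission constraint \eqref{eq:X_transmission_constraints} is exactly the box \eqref{seq:X_k_optimization_bound_contraint}, with $\mathbf{lb}=\mathbf{0}$ and $\mathbf{ub_k}=X_{M,k}\mathbf{1}$. For the demand constraints \eqref{eq:X_demand_constraints}, substitute the recursion $E_{k,t}=E_{k,0}+\sum_{\tau\le t}(\eta X_{k,\tau}-CR_{k,\tau})$. Each constraint then becomes a bound on the cumulative charged energy $\eta\sum_{\tau\le t}X_{k,\tau}$, from above by the storage capacity and from below by the cumulative demand. Stacking the upper bounds for $t=1,\dots,T$ gives the first $T$ rows of $\mathbf{A}$, the lower-triangular all-ones matrix times $\eta$. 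Multiplying the lower bounds by $-1$ gives the last $T$ rows. Collecting the constants, which are $N_kC$, $E_{k,0}$ and the partial sums of $R_{k,t}$, gives $\mathbf{b_k}$. So the feasible set is $\{\mathbf{lb}\le\mathbf{X_k}\le\mathbf{ub_k},\ \mathbf{A}\mathbf{X_k}\le\mathbf{b_k}\}$.

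\textbf{Step 3: well-posedness.} Since $\alpha>0$, the matrix $\mathbf{H}=\alpha I$ is positive definite, so the objective is strictly convex. The feasible set is a polyhedron, and it is compact because of the box. So whenever it is nonempty, the QP has a unique minimizer. That minimizer is exactly $X_k^{best}(\mathbf{X_{-k}})$.

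\textbf{Main obstacle.} The only real difficulty is the index bookkeeping in Step 2. The written constraint mixes $E_{k,t}$ with $\eta X_k$, and the paper is loose about the summation range. I would fix a convention first: energy is checked after charging in slot $t$ and before the demand $CR_{k,t}$ is served. Under that convention, the shifted partial sums $\sum_{\tau=1}^{t-1}R_{k,\tau}$ appear in the capacity rows, and the full sums appear in the demand rows, matching the displayed $\mathbf{b_k}$. I would take the scaling by $C$ to be absorbed as in the paper's notation. I would then verify the two row blocks against $\mathbf{b_k}$ entry by entry.
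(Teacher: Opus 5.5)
Your proposal is correct and follows essentially the same route as the paper's proof in the appendix. Both rewrite $\Pi_k$ via \eqref{eq:payoff_vector}, drop the $\mathbf{X_k}$-independent term so that maximizing $\Pi_k$ becomes minimizing $\mathbf{X_k}^\intercal \mathbf{H}\mathbf{X_k}+\mathbf{f_k}^\intercal\mathbf{X_k}$, and translate the transmission and demand constraints into the box and $\mathbf{A}\mathbf{X_k}\le\mathbf{b_k}$. Your Step 3 (uniqueness of the minimizer from $\alpha>0$ on a nonempty compact polyhedron) and your explicit timing convention that reproduces the displayed $\mathbf{b_k}$ are details the paper leaves implicit, but they do not change the argument.
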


\begin{algorithm}[t]
    \small
    \begin{algorithmic}[1]
        \REQUIRE $N_k$, $X_{M,k}$ for all $k \in \mathcal{K}$, $\alpha$, $\beta$, $D_I$, $C$, $\mu$, tolerance $\epsilon$, pricing equilibrium $P_{s,k}$.
        \STATE Initialize $\mathbf{X^{old}}$ to $\mathbf{0}$, $d=+\infty$.
        \STATE Calculate the hourly demand $R_{k,t}$ using the equilibrium pricing strategy $P_{s,k}$ and Equation~\eqref{eq:demand_response}.
        \WHILE{$d>\epsilon$}
        \FOR{$k$ = [1:$K$]}
        \STATE Calculate $k$th row of new iteration $\mathbf{X^{new}}$ with best response of the previous iteration $\mathbf{X^{old}}$ using Lemma~\ref{lmm:BS_Xk}.\label{algstep:update_charging}
        \ENDFOR
        \FOR{$k$ = [1:$K$]}
        \FOR{$t$ = [1:$T$]}
        \IF{$d > \lvert \mathbf{X^{new}}(k,t) - \mathbf{X^{old}}(k,t) \rvert$}
        \STATE Assign $d=\lvert \mathbf{X^{new}}(k,t) - \mathbf{X^{old}}(k,t) \rvert$.
        \ENDIF
        \ENDFOR
        \ENDFOR
        \STATE Assign $\mathbf{X^{old}}=\mathbf{X^{new}}$.
        \ENDWHILE
        \STATE Output $\mathbf{X}$ as Nash Equilibrium.
    \end{algorithmic}
    \caption{\strut BRD for the Charging Nash Equilibrium}
    \label{alg:IBR_Solving_X_NE}
\end{algorithm}

The proof of Lemma~\ref{lmm:BS_Xk} is presented in Appendix~\ref{apdx:proof_lmm_BS_XK}. Once the best response charging strategy, as shown in Lemma~\ref{lmm:BS_Xk}, has been obtained, we design Algorithm~\ref{alg:IBR_Solving_X_NE} based on the BRD to solve the equilibrium. Since we have proved the uniqueness of equilibrium in Theorem~\ref{thm:uniqueness_xk}, we initialize all $X$ values to $0$. The corresponding hourly demand $R_{k,t}$ is computed using the equilibrium pricing strategy $P_{s,k}$ and Equation~\eqref{eq:demand_response}. Next, we run the iteration process and calculate the best response charging strategy for each BSS based on the previous iteration. After updating the charging strategies for all BSSs, we obtain a new iteration and calculate the maximum difference of each element in the two iteration matrices. If the maximum difference exceeds the tolerance, we assign the old iteration to $\mathbf{X^{new}}$ and repeat the iteration. Otherwise, the result is that each player's strategy is the best response strategy within the tolerance. According to the definition in \eqref{eq:Xk_NE_definition}, this result is the Nash Equilibrium of the charging subgame.

\begin{theorem}
    The Algorithm~\ref{alg:IBR_Solving_X_NE} will converge to the unique strategy equilibrium in the battery charging subgame.
\end{theorem}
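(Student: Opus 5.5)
The plan is to prove convergence by exhibiting the charging subgame as an exact potential game with a strictly convex quadratic potential. The best-response step of Algorithm~\ref{alg:IBR_Solving_X_NE} then becomes block-coordinate minimization of that potential, and the unique equilibrium of Theorem~\ref{thm:uniqueness_xk} is its unique minimizer.

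\textbf{Step 1: potential.} By \eqref{eq:payoff_vector} and Lemma~\ref{lmm:BS_Xk}, BSS $k$ minimizes $J_k(\mathbf{X})=\mathbf{X_k}^\intercal\mathbf{H}\mathbf{X_k}+\mathbf{f_k}^\intercal\mathbf{X_k}$ over its own feasible set
\[
\mathcal{S}_k=\{\mathbf{X_k}:\mathbf{lb}\le\mathbf{X_k}\le\mathbf{ub_k},\ \mathbf{A}\mathbf{X_k}\le\mathbf{b_k}\}.
\]
Once the pricing equilibrium (and hence $R_{k,t}$) is fixed, $\mathcal{S}_k$ is convex and compact and does not depend on $\mathbf{X_{-k}}$. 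I will define
\[
\Phi(\mathbf{X})=\sum_{t}\Big[\tfrac{\alpha}{2}\big(\textstyle\sum_{k}X_{k,t}\big)^2+\tfrac{\alpha}{2}\sum_k X_{k,t}^2+\sum_k(\alpha D_{I,t}+\beta+\mu)X_{k,t}\Big].
\]
A direct check gives $\partial\Phi/\partial X_{k,t}=2\alpha X_{k,t}+\alpha(\sum_{i\neq k}X_{i,t}+D_{I,t})+\beta+\mu=\partial J_k/\partial X_{k,t}$. So $J_k(\mathbf{X_k},\mathbf{X_{-k}})-J_k(\mathbf{X_k}',\mathbf{X_{-k}})=\Phi(\mathbf{X_k},\mathbf{X_{-k}})-\Phi(\mathbf{X_k}',\mathbf{X_{-k}})$, and the subgame is an exact potential game. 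For each $t$ the Hessian of $\Phi$ is $\alpha(\mathbf{I}+\mathbf{1}\mathbf{1}^\intercal)\succ 0$, so $\Phi$ is strictly convex on $\prod_k\mathcal{S}_k$. It therefore has a unique minimizer, which satisfies the variational inequalities defining the Nash equilibrium and thus coincides with the equilibrium of Theorem~\ref{thm:uniqueness_xk}.

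\textbf{Step 2: the update as coordinate descent.} Since $\mathbf{H}=\alpha\mathbf{I}$, the best response has the explicit form $\mathbf{X_k}^{best}=\mathrm{Proj}_{\mathcal{S}_k}\big(-\mathbf{f_k}/(2\alpha)\big)$, which is single-valued and continuous in $\mathbf{X_{-k}}$. If the inner loop updates players in the order $k=1,\dots,K$, using the latest strategies, then each step exactly minimizes $\Phi$ over block $k$. The potential $\Phi$ is then nonincreasing and bounded below on the compact set, so it converges. The standard block-coordinate-descent result for a strictly convex, continuously differentiable objective over a Cartesian product of closed convex sets (Bertsekas; Luo--Tseng) then gives that every limit point is the unique minimizer. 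Hence the whole sequence converges, the stopping rule $d\le\epsilon$ is eventually triggered, and the output lies within any prescribed tolerance of the equilibrium.

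\textbf{Main obstacle.} The difficulty is the fully simultaneous (Jacobi) reading of the algorithm, in which all players respond to $\mathbf{X^{old}}$. Nonexpansiveness of the projection gives only
\[
\|\mathbf{X_k}^{new}-\mathbf{Y_k}^{new}\|\le\tfrac12\Big\|\sum_{i\neq k}(\mathbf{X_i}-\mathbf{Y_i})\Big\|.
\]
This is a contraction only for $K\le 2$, and for larger $K$ the potential need not decrease. I would first try to find a weighted norm in which the Jacobi map contracts, for instance by exploiting the aggregative structure through the aggregate $\sum_k\mathbf{X_k}$, in the spirit of the pricing-subgame argument. If that fails, I would state the result for the sequential (Gauss--Seidel) implementation, or for a damped update $\mathbf{X}\leftarrow(1-\gamma)\mathbf{X^{old}}+\gamma\,\mathrm{BR}(\mathbf{X^{old}})$ with small $\gamma$. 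In the damped case the step is a projected-gradient-like descent on $\Phi$ with step size below $2/L$, where $L=\alpha(K+1)$ is the Lipschitz constant of $\nabla\Phi$, and convergence follows from the same strict convexity.
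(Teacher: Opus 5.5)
Your potential-game argument (Step 1 and the Gauss--Seidel part of Step 2) is correct. The obstacle you flag, however, is not a technicality to work around: for the update Algorithm~\ref{alg:IBR_Solving_X_NE} actually specifies, the claim fails in general, so the weighted-norm search cannot succeed.

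As written, every BSS best-responds to $\mathbf{X^{old}}$, which is overwritten only after the sweep, so the iteration is Jacobi. By your own projection formula this map is $\mathbf{X}\mapsto\mathrm{Proj}_{\prod_k\mathcal{S}_k}\big(\mathbf{X}-\nabla\Phi(\mathbf{X})/(2\alpha)\big)$. That is a projected-gradient step of size $1/(2\alpha)$, which is no longer below $2/L=2/(\alpha(K+1))$ once $K\ge 3$.

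A concrete instance: take $T=2$, with only each BSS's total-energy constraint binding at equilibrium. A common perturbation $\delta\mathbf{X_i}=\mathbf{v}$ of all players, with $\mathbf{1}^\intercal\mathbf{v}=0$, is mapped to $\delta\mathbf{X_k}^{new}=-\tfrac{K-1}{2}\mathbf{v}$. This is Theocharis's instability of simultaneous Cournot adjustment (the per-slot best-reply slope is $-\tfrac12$). The eigenvalue $-(K-1)/2$ equals $-1$ for $K=3$ and $-5.5$ for the paper's $K=12$. Hence no norm makes the map a contraction, and the iterates oscillate, bounded only by the constraints, rather than converge.

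So drop the weighted-norm plan and commit to your fallback. State and prove the theorem for the sequential update, or for the damped one with $\gamma<4/(K+1)$, which is exactly your condition $\gamma/(2\alpha)<2/L$. Also say explicitly that Algorithm~\ref{alg:IBR_Solving_X_NE} must be implemented that way.

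Your route also differs from the paper's, and it holds up better. The paper treats the charging subgame as roughly aggregative. It notes that $\mathbf{\sum_{-k}}$ enters the KKT gradient condition with the constant coefficient $\alpha$, and then invokes the Dindo\v{s}--Mezzetti theorem on better-reply dynamics. That theorem is formulated for one-dimensional strategy sets and for better-reply paths in which players revise one at a time. Its hypothesis is a monotonicity condition on the best-reply map itself, not on a coefficient in the gradient. It therefore covers neither the $T$-dimensional, intertemporally constrained strategies nor the simultaneous sweep, and the paper's argument implicitly needs the same sequential reading you made explicit.

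Your exact potential $\Phi$, with per-slot Hessian $\alpha(\mathbf{I}+\mathbf{1}\mathbf{1}^\intercal)$, gives the uniqueness of Theorem~\ref{thm:uniqueness_xk} directly. It also reduces sequential best response to block-coordinate descent of a strictly convex function over a product of compact convex sets, where convergence is a textbook result. No monotonicity of best replies is needed, and the vector strategies and coupled demand constraints cause no difficulty. The aggregative framework, if made rigorous, would extend beyond potential games, but this quadratic charging game does not need that.
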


\begin{proof}
The idea of this proof is similar to the convergence proof of the pricing subgame. Observing the utility function of BSS $k$. The charging subgame is somehow like the aggregative game since the payoff only contains the player's own charging $\mathbf{X_k}$, and the vector of the sum of the charging amount of other BSSs, denoted by $\mathbf{\sum}=(\sum_1,\dots,\sum_T)$, where each $\sum_t=X_{1,t}+X_{2,t}+\dots+X_{K,t}$. We define the derivative of utility $\Pi_k$ on $\mathbf{X}$ as
\begin{align}
    D_k(\mathbf{X_k},\mathbf{\sum}) = \frac{\partial \Pi_k}{\partial \mathbf{X_k}}= \frac{\partial \Pi_k^1}{\partial \mathbf{X_k}} (\mathbf{X_k},\mathbf{\sum}) + \frac{\partial \Pi_k^1}{\partial \mathbf{\sum}} (\mathbf{X_k},\mathbf{\sum}).
\end{align}

We consider the best response function $\mathbf{X_k}^*(\mathbf{X_{-k}})$. As shown in Lemma~\ref{lmm:BS_Ps}, the best response depends only on the sum vector $\mathbf{\sum_{-k}}$. That is, there is a function of the sum $B_k(\mathbf{\sum_{-k}})$ such that $\mathbf{X_k}^*(\mathbf{X_{-k}})=B_k(\mathbf{\sum_{-k}})$.
When $\mathbf{X_k}$ equals the best response strategy $\mathbf{X_k}^*(\mathbf{X_{-k}})$, the derivative of utility equals to zero, i.e.,
\begin{align} \label{eq:derivative_sum_xk}
    D_k(\mathbf{X_k},\mathbf{\sum}) = D_k(B_i(\mathbf{\sum_{-k}}),B_i(\mathbf{\sum_{-k}}) + \mathbf{\sum_{-k}})=0.
\end{align}
Since $\Pi_k$ is concave in $\mathbf{X_k}$ as shown in Theorem~\ref{thm:existance}, we obtain \\
(1) $B_k$ is continuous and single-valued in $\mathbf{\sum_{-k}}$, since equation~\eqref{eq:derivative_sum_xk} has one unique solution for all $\mathbf{\sum_{-k}}$, where the concave $\Pi_k$ reaches its maximum. And \\
(2) the payoff of BSS $k$ decreases as $\mathbf{X_k}$ moves away from $B_k(\mathbf{\sum_{-k}})$.

Next, we consider the derivative of $B_k(\mathbf{\sum_{-k}})$ on $\mathbf{\sum_{-k}}$. The best response $\mathbf{X_k}^*(\mathbf{X_{-k}})$ is the optimum of optimization problem \eqref{eq:X_k_optimization}, it satisfies the KKT conditions of \eqref{eq:X_k_optimization}. We combine the constraints \eqref{seq:X_k_optimization_bound_contraint} and \eqref{seq:X_k_optimization_inequality_constraint} to $\mathbf{A_{KKT}} \mathbf{X_k} \leq \mathbf{b_{KKT}}$. The KKT conditions of \eqref{eq:X_k_optimization} is
\begin{subequations}
\begin{align}
2\mathbf{H}\mathbf{X_k}+\mathbf{f_k}+\lambda \mathbf{A_{KKT}}=0, \\
\mathbf{A_{KKT}} \mathbf{X_k} - \mathbf{b_k} \leq 0, \\
\lambda \geq 0,\\
\lambda (\mathbf{A_{KKT}} \mathbf{X_k} - \mathbf{b_k}) =0.
\end{align}
\end{subequations}
Since we are interested in term $\mathbf{\sum_{-k}}$, we focus on the gradient condition that
\begin{align}
    &2\mathbf{H}\mathbf{X_k} + \mathbf{f_k} + \lambda \mathbf{A_{KKT}}= \nonumber \\
    &2\mathbf{H}\mathbf{X_k} + (\mu+\beta) \mathbf{1} + \alpha \mathbf{\sum_{-k}} + \alpha \mathbf{D_I} + \lambda \mathbf{A_{KKT}}=0.
\end{align}

The derivatives on $\mathbf{\sum_{-k}}$ stay a constant $\alpha$ and thus have the same sign. According to \cite{dindovs2006better}, any better reply dynamics in the pricing subgame will converge to equilibrium. In the proposed algorithm~\ref{alg:IBR_Solving_X_NE}, step~\ref{algstep:update_charging} improves the profit and thus is a better reply to other BSSs' actions. Therefore, Algorithm~\ref{alg:IBR_Solving_X_NE} will converge to the Nash Equilibrium.
\end{proof}

Intuitively, the battery charging subgame is somehow like the aggregator game. Thus, we use a similar idea to prove the convergence. We write the profit in term $B_k(\mathbf{\sum_{-k}})$, and prove that the best response function of the pricing subgame has the same sign on $\mathbf{\sum_{-k}}$. Therefore, players will always move in the same direction toward $\mathbf{\sum_{-k}}$ during BDR and will converge to the equilibrium.

\section{Prediction error handling}
\label{sec:err_handling}
% including charging cost and the degradation

In our hierarchical game model, the swapping demand is predicted in the day-ahead market and will impact the real-time battery charging market result. Although the prediction is based on historical data and a demand response model, sometimes the demand may have an unexpected fluctuation. In this section, we propose a prediction error-handling method to find the optimal battery charging strategy when the demand fluctuates unexpectedly.

Suppose BSS $k$ finds that his demand at time $t_0$ differs from the prediction. Denote the energy left in the station $k$ by $E_{k,t_0}^{\prime}$, rather than the prediction $E_{k,t_0}$. The BSS needs to find a replacement charging strategy $\mathbf{X_k}^{\prime}$ to minimize the charging and degradation cost while ensuring the swapping demand of the incoming time slots. The objective function is
\begin{subequations} \label{eq:err_handling}
\begin{align}
    &min_{\mathbf{X_k}^{\prime}} \quad &\sum_{t=t_0}^T X_{k,t}^{\prime} Pe_t + \mu(X_{k,t} + R_{k,t}). \label{seq:err_handling_tou} \\
    &s.t.  &Pe_t = \alpha (X_{k,t}^{\prime} + \sum X_{-k,t}^{NE} + D_{I,t}) + \beta, \\
    &      &E_{k,t} = E_{k,t_0}^{\prime} + \eta X_{k,t} - C R_{k,t}, \\
    &      &C\cdot N_k \ge E_{k,t} \ge C \cdot R_{k,t} \forall t\in T,  \label{seq:err_handling_total_amont} \\
    &      &0 \le X_{k,t} \le X_{M,k}, \forall t\in T \label{seq:err_handling_transmission}.
\end{align}
\end{subequations}
The constraint \eqref{seq:err_handling_tou} is the TOU pricing. The term $\sum X_{-k,t}^{NE}$ denotes the equilibrium charging strategy of other BSSs except for BSS $k$, which could be obtained by Algorithm~\ref{alg:IBR_Solving_X_NE}. The inequality \eqref{seq:err_handling_total_amont} ensures the energy BSS left in the station could meet the swapping demand while not exceeding the storage capacity. The inequality \eqref{seq:err_handling_transmission} is the transmission constraint. For convenience of notation, we define matrix $\mathbf{H}^{\prime}=diag(\alpha,\dots,\alpha)$ with size $T-t_0 \times T-t_0$, matrix \begin{center}
$\mathbf{f_k}^{\prime}=\begin{bmatrix}
    \mu + \alpha (\sum_{i\in \mathcal{K},i\neq k} X_{i,t_0} + D_{I,t_0}) + \beta \\
    \mu + \alpha (\sum_{i\in \mathcal{K},i\neq k} X_{i,t_0+1} + D_{I,t_0+1}) + \beta \\
    \vdots \\
    \mu + \alpha (\sum_{i\in \mathcal{K},i\neq k} X_{i,T} + D_{I,T}) + \beta \\
\end{bmatrix}$,
$\mathbf{A}^{\prime}= \eta \begin{bmatrix}
    1 &   &   &  \\
    1 & 1 &   &  \\
    \vdots & \vdots & \ddots  &  \\
    1 & 1 & \cdots & 1 \\
    -1 &   &   &  \\
    -1 & -1 &   &  \\
    \vdots & \vdots & \ddots  &  \\
    -1 & -1 & \cdots & -1
\end{bmatrix}$,\\
$\mathbf{b_k}^{\prime}=\begin{bmatrix}
    N_k C-E_{k,t_0}^{\prime} \\
    N_k C-E_{k,t_0}^{\prime} + R_{k,t_0+1} \\
    \vdots \\
    N_k C-E_{k,t_0}^{\prime} + \sum_{t=t_0}^{T-1} R_{k,t} \\
    E_{k,t_0}^{\prime}-R_{k,t_0} \\
    E_{k,t_0}^{\prime}-R_{k,t_0}-R_{k,t_0+1} \\
    \vdots \\
    E_{k,t_0}^{\prime} - \sum_{t=t_0}^T R_{k,t} \\
\end{bmatrix}$,
\end{center}
$\mathbf{lb}^{\prime}=\mathbf{0}^\intercal$, $\mathbf{ub_k}^{\prime}=X_{M,k} \mathbf{1}^\intercal$ with size $T-t_0$. The error-handling problem \eqref{eq:err_handling} can be written in quadratic form. Thus, we obtain the following proposition.

\begin{proposition} \label{pps:err_handling}
The optimal error-handling strategy is the solution of the following Quadratic Optimization Problem,
\begin{subequations} \label{eq:error_quadratic}
\begin{align}
&\min_{\mathbf{X_k}^{\prime}} \quad &\mathbf{X_k}^{\prime \intercal} \mathbf{H}^{\prime} \mathbf{X_k}^{\prime} + \mathbf{f_k}^{\prime \intercal} \mathbf{X_k}^{\prime}, \\
&s.t. &\mathbf{lb}^{\prime} \leq \mathbf{X_k}^{\prime} \leq \mathbf{ub_k}^{\prime}, \\
&     &\mathbf{A}^{\prime} \mathbf{X_k}^{\prime} \leq \mathbf{b_k}^{\prime}.
\end{align}
\end{subequations}
\end{proposition}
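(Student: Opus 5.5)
The plan is to rewrite problem \eqref{eq:err_handling} directly in the stated matrix form, mirroring the derivation that led to Lemma~\ref{lmm:BS_Xk}. The problem is the single-player optimization of BSS $k$ over the remaining horizon $t\in\{t_0,\dots,T\}$. The other BSSs' strategies are frozen at the equilibrium $\mathbf{X_{-k}}^{NE}$ from Algorithm~\ref{alg:IBR_Solving_X_NE}, and the initial storage is replaced by the observed $E_{k,t_0}^{\prime}$. The proof has two parts: first handle the objective, then show that the constraints become exactly the bound constraints and $\mathbf{A}^{\prime}\mathbf{X_k}^{\prime}\le \mathbf{b_k}^{\prime}$.

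\textbf{Objective.} Substitute the TOU price $Pe_t=\alpha(X_{k,t}^{\prime}+\sum X_{-k,t}^{NE}+D_{I,t})+\beta$ into \eqref{seq:err_handling_tou}. Expanding term by term, as in \eqref{eq:payoff_vector}, gives
\[
\sum_{t=t_0}^T \alpha X_{k,t}^{\prime 2} + \Big(\mu+\alpha\big(\textstyle\sum_{i\ne k}X_{i,t}+D_{I,t}\big)+\beta\Big)X_{k,t}^{\prime} + \mu\sum_{t=t_0}^T R_{k,t}.
\]
The first sum is $\mathbf{X_k}^{\prime\intercal}\mathbf{H}^{\prime}\mathbf{X_k}^{\prime}+\mathbf{f_k}^{\prime\intercal}\mathbf{X_k}^{\prime}$. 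The remaining term $\mu\sum R_{k,t}$ is fixed, because the demands are already determined by the pricing equilibrium. It therefore does not change the minimizer and can be dropped.

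\textbf{Constraints.} The transmission bound \eqref{seq:err_handling_transmission} is exactly $\mathbf{lb}^{\prime}\le\mathbf{X_k}^{\prime}\le\mathbf{ub_k}^{\prime}$. For the storage constraints, interpret $E_{k,t}$ as the cumulative energy, in line with the original dynamics:
\[
E_{k,t}=E_{k,t_0}^{\prime}+\sum_{\tau=t_0}^{t}\big(\eta X_{k,\tau}^{\prime}-CR_{k,\tau}\big).
\]
Substituting this into the two sides of \eqref{seq:err_handling_total_amont} gives, for each $t$:
\begin{itemize}
\item the upper bound $\eta\sum_{\tau\le t}X_{k,\tau}^{\prime}\le N_kC-E_{k,t_0}^{\prime}+\sum_{\tau}CR_{k,\tau}$;
\item the lower bound $-\eta\sum_{\tau\le t}X_{k,\tau}^{\prime}\le E_{k,t_0}^{\prime}-\sum_{\tau}CR_{k,\tau}$, with the demand indices shifted as in $\mathbf{b_k}$.
\end{itemize}
The prefix sums of $\mathbf{X_k}^{\prime}$ are the rows of the lower-triangular all-ones block scaled by $\eta$, and the negative block handles the lower bounds. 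Stacking the upper bounds above the lower bounds yields $\mathbf{A}^{\prime}\mathbf{X_k}^{\prime}\le\mathbf{b_k}^{\prime}$, using the same row-by-row identification as for $\mathbf{A},\mathbf{b_k}$ in Lemma~\ref{lmm:BS_Xk}.

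Since the feasible set and the objective (up to a constant) coincide, the two problems have the same optimizers. Because $\mathbf{H}^{\prime}$ is positive definite and the feasible set is convex, any feasible solution is unique, which justifies calling it \emph{the} optimal error-handling strategy.

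\textbf{Main obstacle.} The step most likely to cause trouble is the index bookkeeping in $\mathbf{b_k}^{\prime}$. One must reconcile when demand at slot $t$ is served relative to charging, the factor $C$ multiplying $R_{k,t}$, and the fact that the horizon has $T-t_0+1$ slots while the paper states size $T-t_0$. I would state the timing convention explicitly, matching the one implicit in Lemma~\ref{lmm:BS_Xk}, and absorb the constant factors into the definition of $\mathbf{b_k}^{\prime}$, so that the identification holds row by row.
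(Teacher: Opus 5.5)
Your proposal is correct and follows the paper's route. The paper only remarks that \eqref{eq:err_handling} ``can be written in quadratic form,'' as in the appendix proof of Lemma~\ref{lmm:BS_Xk}, and your steps (substituting the TOU price, dropping the constant $\mu\sum_t R_{k,t}$, and identifying the box and prefix-sum constraints row by row) are that argument made explicit.

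Two small repairs. First, take $E_{k,t}$ in the storage constraint to be the energy after charging but before serving slot $t$; your displayed formula already subtracts $CR_{k,t}$, which counts $R_{k,t}$ twice in the lower bound and shifts the upper-bound demand sum by one slot. Second, strict convexity ($\alpha>0$, since $Pe_A<Pe_{BSS}$) gives uniqueness of the minimizer, with existence from the compact, assumed nonempty feasible set, not uniqueness of ``any feasible solution.''
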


In summary, the optimal charging strategy in the real-time market depends on the EV swapping demand, which is predicted by our day-ahead pricing model. Proposition~\ref{pps:err_handling} gives an optimal adjustment scheme when the swapping demand has an unexpected fluctuation in the real-time market.

\section{Simulations}
\label{sec:experiment}
In this section, we conduct simulations to demonstrate the advantages of the proposed pricing and charging strategy with real-life parameters.

\subsection{Simulation Settings}

\subsubsection{Parameter settings}

\begin{figure*} % 因排版移动
\begin{subfigure}[t]{0.5\textwidth}
   \includegraphics[width=\linewidth]{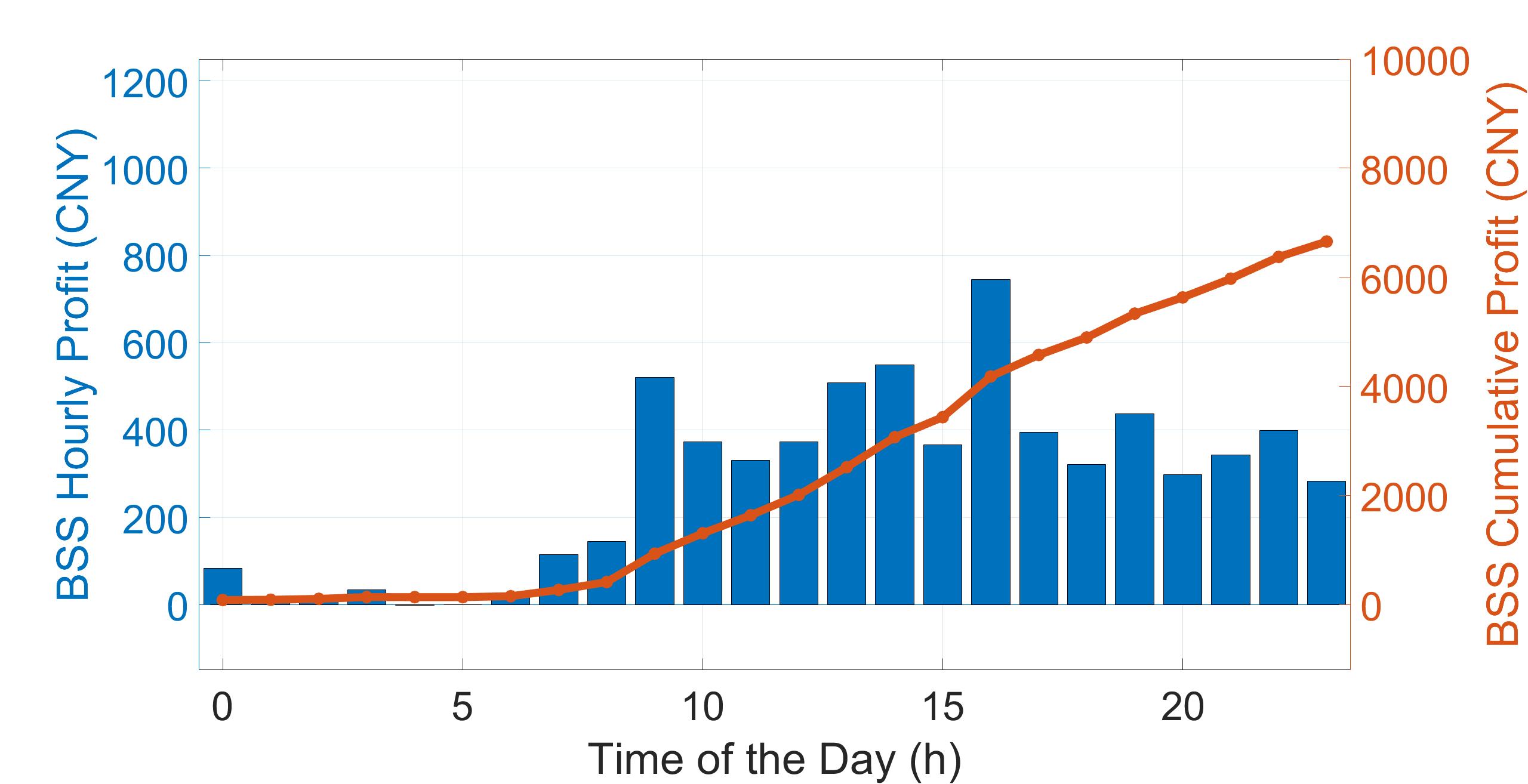}
   \caption{Uncoordinated charging, uncoordinated pricing.}
   \label{fig:BSS7_hourly_profit_curcur}
\end{subfigure}
\hfill
\begin{subfigure}[t]{0.5\textwidth}
   \includegraphics[width=\linewidth]{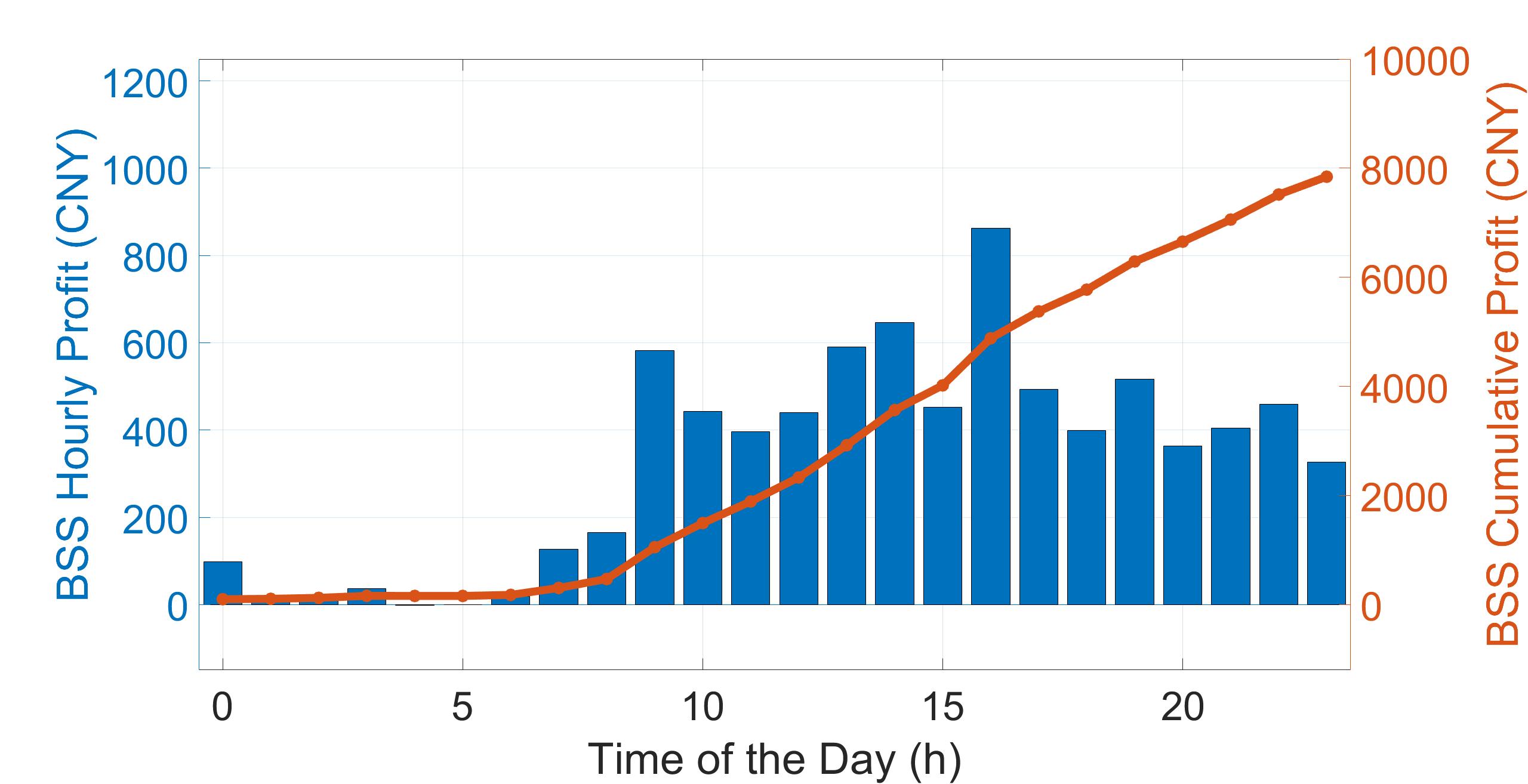}
   \caption{Uncoordinated charging, optimal pricing.}
   \label{fig:BSS7_hourly_profit_curopt}
\end{subfigure}
\hfill
\begin{subfigure}[t]{0.5\textwidth}
   \includegraphics[width=\linewidth]{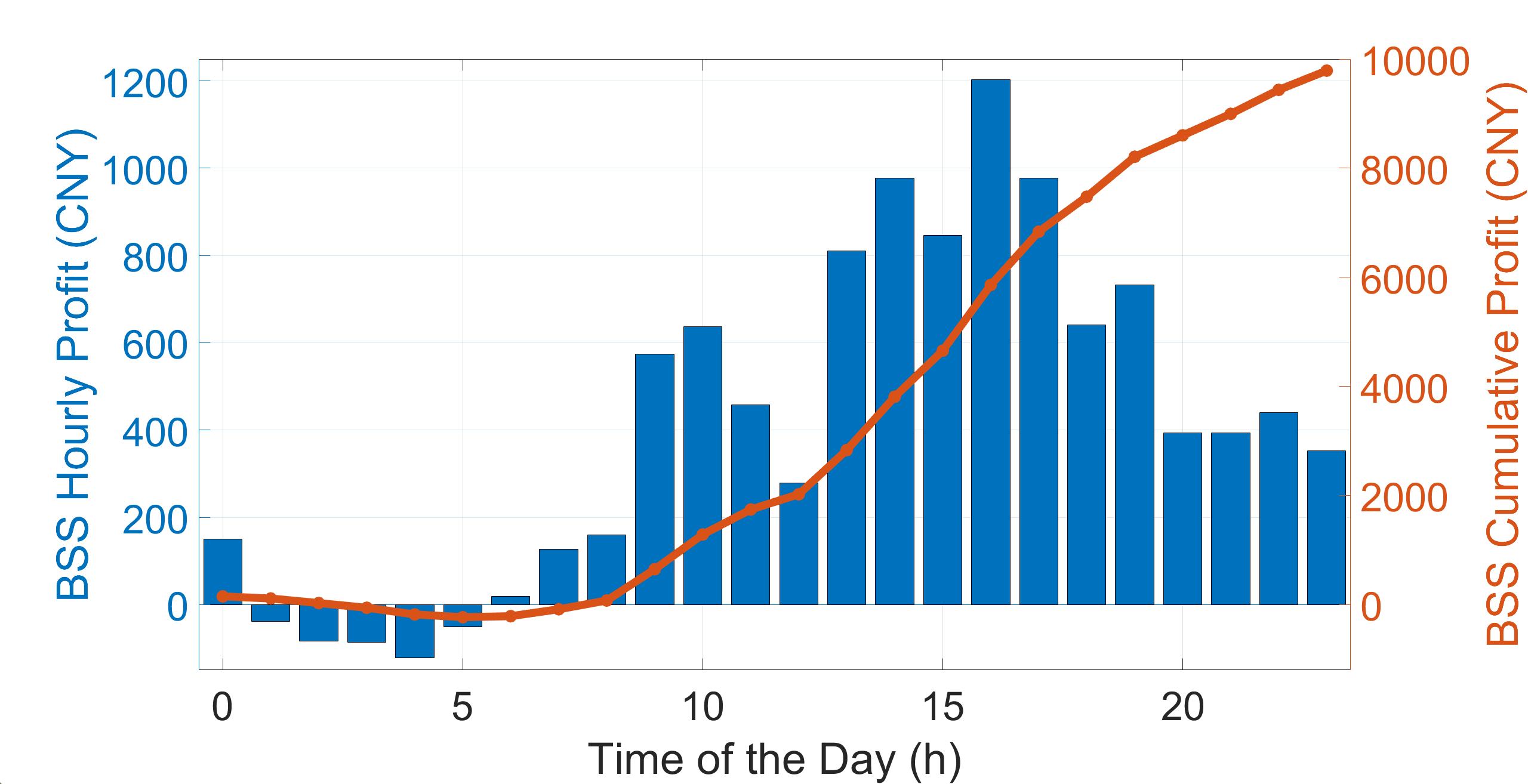}
   \caption{Optimal charging, optimal pricing.}
   \label{fig:BSS7_hourly_profit_optopt}
\end{subfigure}
\hfill
\begin{subfigure}[t]{0.5\textwidth}
   \includegraphics[width=\linewidth]{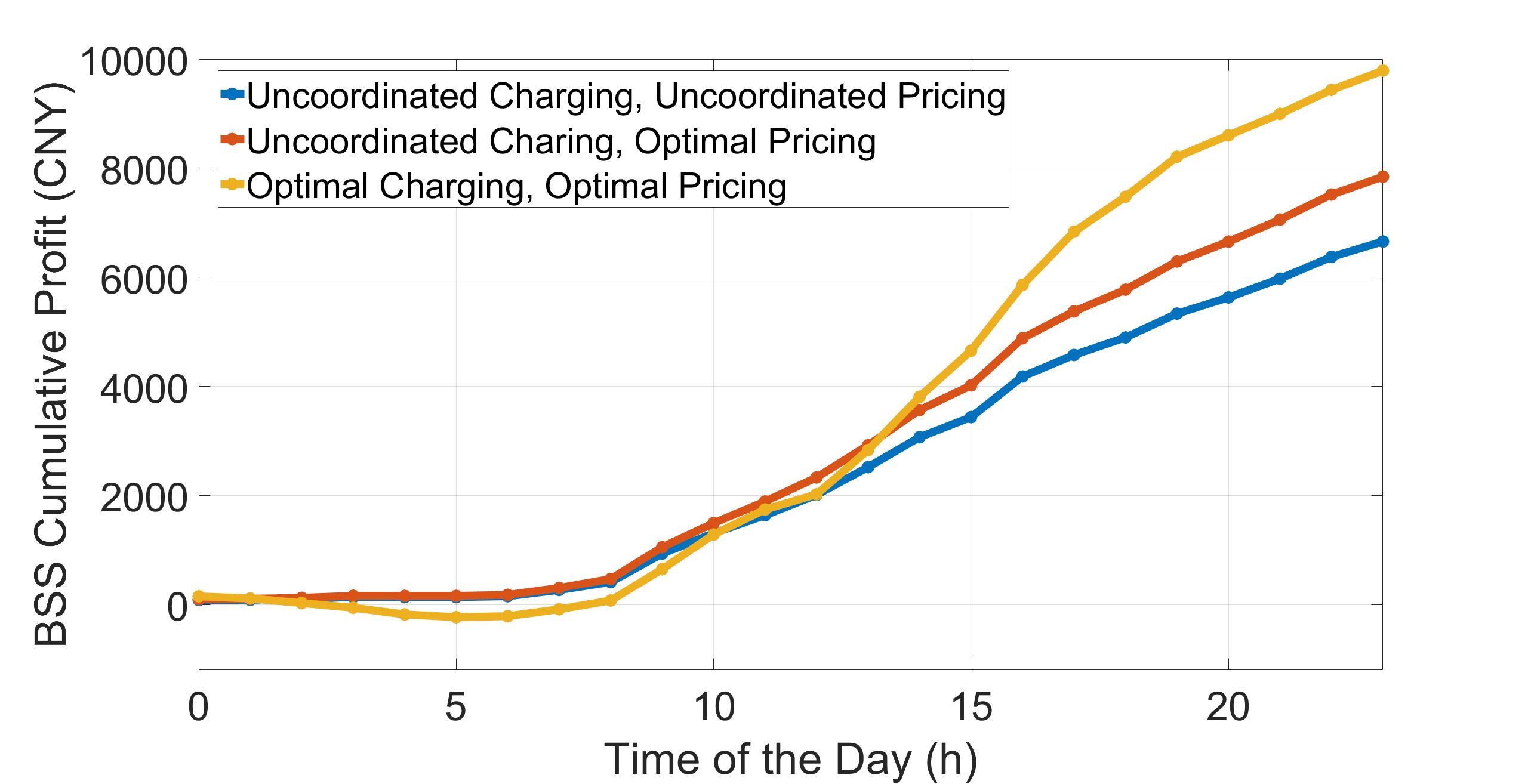}
   \caption{Cumulative profit comparison.}
   \label{fig:BSS7_profit_comparision}
\end{subfigure}
\hfill
\caption{BSS 7 profit under different pricing and charging strategies in 12-BSS system.}
\label{fig:BSS7_profit}
\end{figure*}

\begin{table}[t]
    \centering
    \caption{Parameters Settings}  % 表格标题
    \label{tab:Exp_parameters}  % 用于索引表格的标签
    \begin{tabular}{cc}
   \toprule
   Parameters & Values \\
   \midrule
		$K$ & 12 \\
		$T$ & 24 \\
        $Pe_A$ & $0.1$ CNY/kWh \\
		$Pe_{BSS}$ & $1.5$ CNY/kWh \\
        $\mathbf{D_I}$ & Scaled hourly load of Singapore \cite{SingaporeLoad} \\
  	$C$ & 75kWh \\
        $\mathbf{N}$ & [26,10,9,14,20,20,24,10,26,10,22,24] \\
        $\mathbf{X_M}$ & $\frac{1}{2}C \cdot \mathbf{N}$ \\
        $\mu$ &  0.005 CNY/kW \\
 	$\mathbf{R_t}$ & Swapping demand data in Xi'an \cite{NIOAPP} \\
        $\mathbf{d_k}$ & [9,7,9,13,15,22,11,7,9,8,14,9] \\
        $d_c$ & 24 \\
        $C_b$ & $30^2$ \\
        $C_d$ & 50 \\
        $\hat{t}$ & 2.5 hours \\
        $\lambda$ & 30 CNY/hour \\
        $Pc$ & 2.5 CNY/kWh \\
   \bottomrule
   \end{tabular}
\end{table}

The main parameters are presented in Table~\ref{tab:Exp_parameters}. We set up a 12-BSS system based on the BSS operations in Xi'an City, Shaanxi Province, China. The base load of the aggregator $D_I$ is set to the scaled average hourly load of Singapore \cite{SingaporeLoad}. Besides, we set $\mathbf{X_{M}}=\frac{1}{2}C\mathbf{N}$ to approximate the situation where all BSSs can charge all batteries to full within two hours. The hourly average total swapping demand data in the region is obtained from \cite{NIOAPP}. We set the charging price at the charging station to 2.5 CNY/kWh, and the average serving time (including charging time and queuing time) to 2.5 hours. The average time cost for EV owners is 30 CNY/hour. To solve the optimization problem \eqref{eq:X_k_optimization}, we use the \textit{quadprog} function from Matlab$^\circledR$.

\subsubsection{Benchmark settings}

We consider benchmark cases as follows:
\begin{enumerate}
    \item \textbf{Uncoordinated Pricing Strategy (Cur)}: We set the unordered pricing strategy as all BSS set 262.5 CNY for one swapping service (which is 3.5 CNY/kWh for 75 kWh batteries) as a benchmark case.
    \item \textbf{TOU pricing (TOU):} This pricing strategy is outlined in \cite{liang2018battery}. It contemplates a BSS system comprising a single BSS, a power grid, and generator set dispatch under TOU pricing. Parameters from C1S1 in \cite{liang2018battery} are utilized, with $\omega(t)$ set to 0.4 \cite{Coal-fired}.
    \item \textbf{Uncoordinated Charging Strategy (Cur):} The operator will charge the battery to full immediately as soon as he gets a new depleted battery. This battery charging strategy is currently widely used in Xi'an BSS operations\cite{NIOAPP} and is therefore chosen as the benchmark battery charging strategy.
    \item \textbf{Solely Optimization Charging (O):} This approach involves the operator optimizing their own charging strategies, disregarding competition among BSSs. The strategy of non-B2G BSSs from Figure 5b in \cite{ding2022joint} is adopted.
\end{enumerate}

\subsection{BSS Profit under Different Pricing and Charging Strategy}
In this subsection, we calculate the BSS average daily profit under different pricing and charging strategies. Due to the space limit, we provide a detailed analysis of BSS 7, for both $N_7$ and $d_7$ are near the median of $\mathbf{N}$ and $\mathbf{d_k}$. The results of the other stations are shown in Appendix~\ref{apdx-BSS_hourly_all}.

\subsubsection{BSS 7 average hourly profit under different strategies}
\label{ssec:BSS7_hourly_profit}

In this experiment, we calculate the average hourly profit of BSS 7. Specifically, we simulate a 12-BSS system based on real-life data for a week, and calculate the average hourly profit of BSS 7 under our proposed strategy and the benchmark cases. The profits results are shown in Figure~\ref{fig:BSS7_hourly_profit_curcur}, ~\ref{fig:BSS7_hourly_profit_curopt} and ~\ref{fig:BSS7_hourly_profit_optopt}. The optimal pricing and the average optimal charging strategy per day are shown in Figure~\ref{fig:opt_pricing_charging}. Besides, we draw the cumulative average daily profit at every hour and compare the results in Figure~\ref{fig:BSS7_profit_comparision}.

As depicted in Figure~\ref{fig:BSS7_profit}, the pricing and charging strategies lead to an improvement in the BSS average daily profit. For example, the average daily profit under the uncoordinated charging and uncoordinated pricing strategy is around 6656 CNY. In comparison, the profit under the uncoordinated charging and optimal pricing strategy is around 7843 CNY, and the profit under optimal charging and optimal pricing strategy is around 9788 CNY. Figure~\ref{fig:opt_pricing_charging} show the optimal pricing and average daily charging strategy for each BSS. As shown in the graph, all of the BSS services are priced higher than the charging stations, with BSS 6 being the most expensive because its greatest distance advantage.

Additionally, the hourly profit gives an intuition into how the proposed pricing and charging strategy increases the profit. As shown in Figure~\ref{fig:BSS7_hourly_profit_curcur} and ~\ref{fig:BSS7_hourly_profit_curopt}, the pricing strategy increases the bar length without affecting the profit distribution over time, indicating that the pricing strategy enhances the profit by setting a more reasonable price based on the station advantage compared to other BSSs in the system. Conversely, the charging strategy changes the profit distribution over time, as demonstrated in Figures~\ref{fig:BSS7_hourly_profit_curopt} and \ref{fig:BSS7_hourly_profit_optopt}, indicating that the proposed charging strategy encourages the BSS to charge at lower electricity price and reduces the charging cost.

\begin{figure*}
\begin{subfigure}[t]{0.3\textwidth}
   \includegraphics[width=\linewidth]{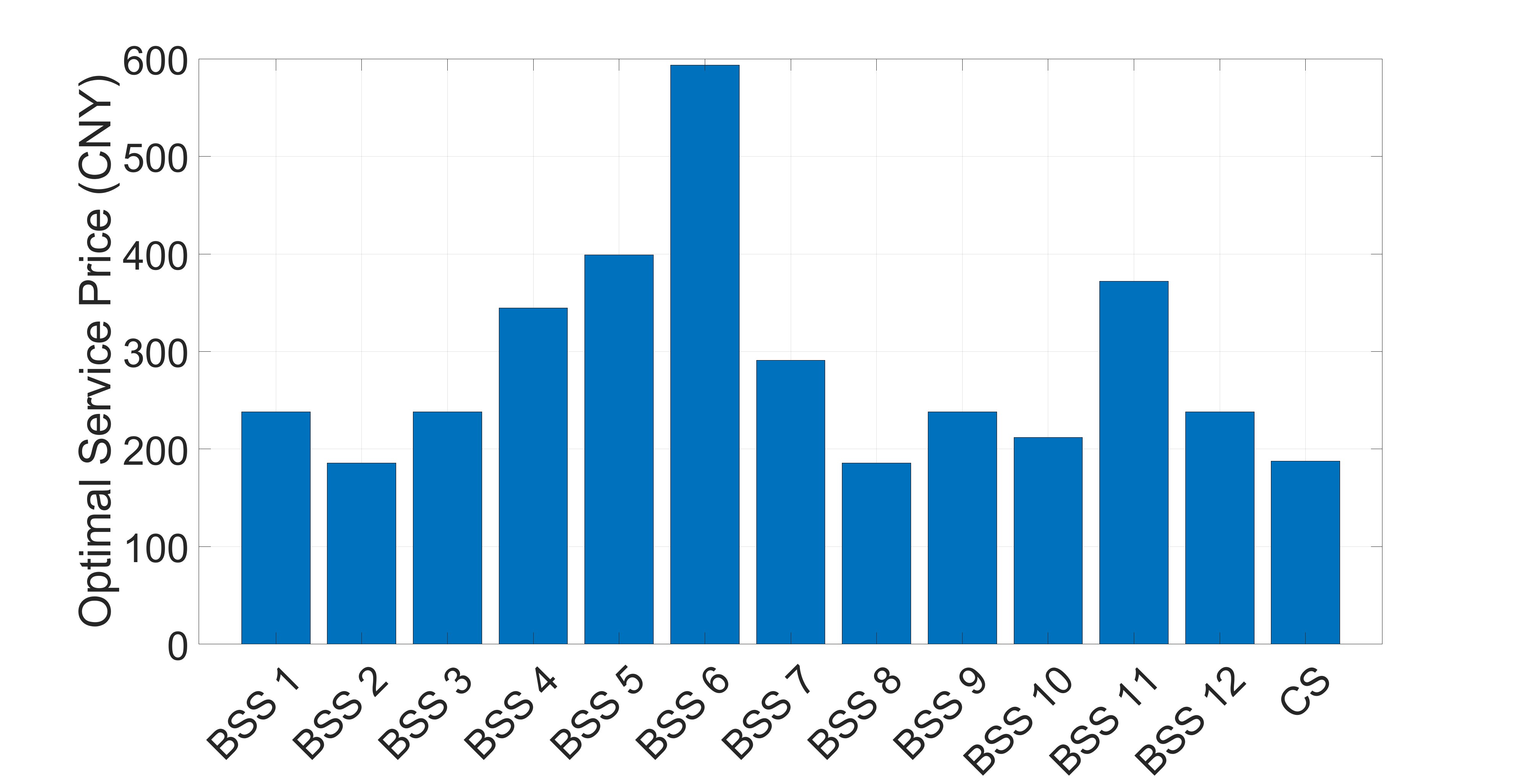}
   \caption{Optimal pricing.}
   \label{fig:opt_pricing}
\end{subfigure}
\hfill
\begin{subfigure}[t]{0.3\textwidth}
   \includegraphics[width=\linewidth]{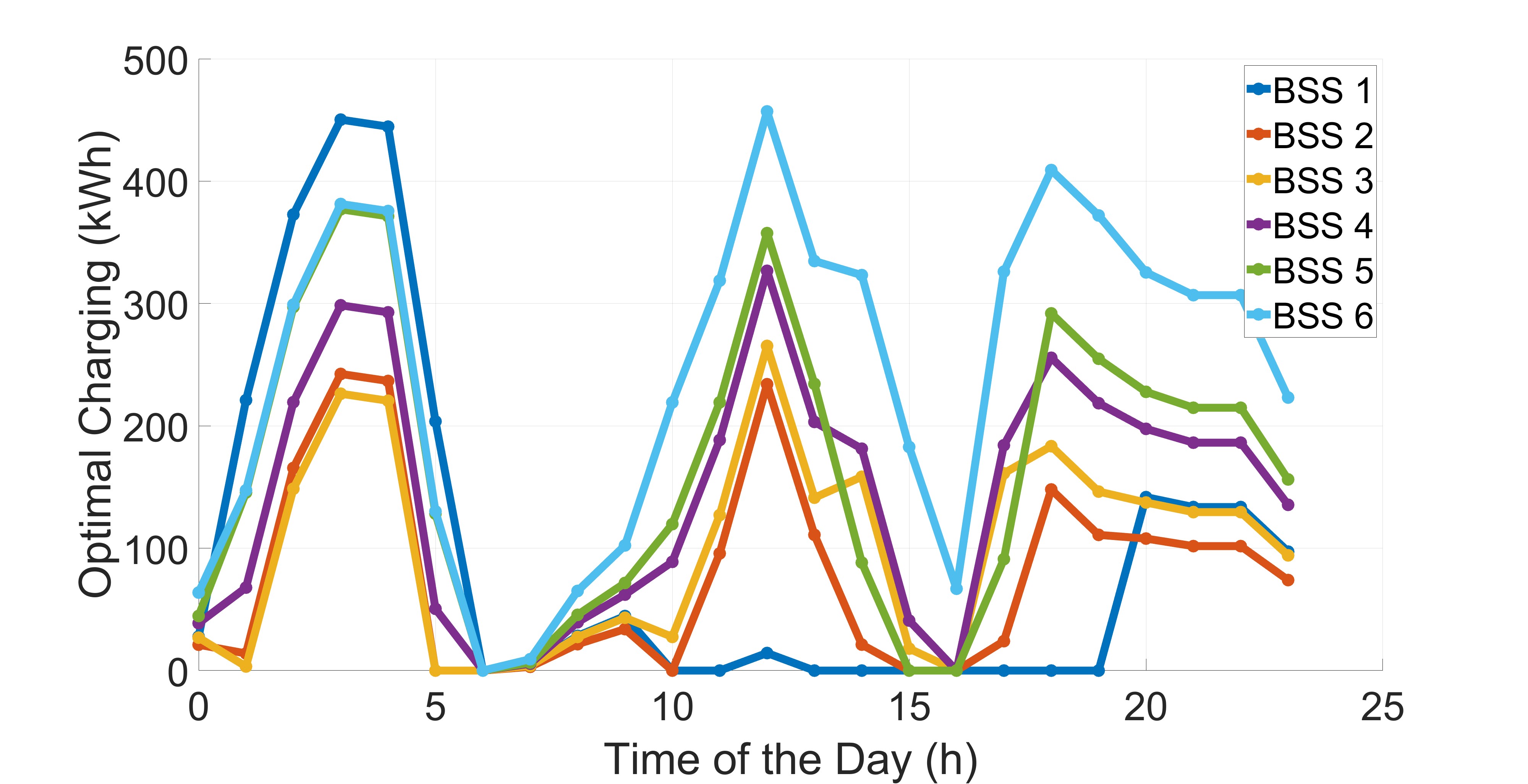}
   \caption{Optimal charging BSS 1-6.}
   \label{fig:opt_charging_1_6}
\end{subfigure}
\hfill
\begin{subfigure}[t]{0.3\textwidth}
   \includegraphics[width=\linewidth]{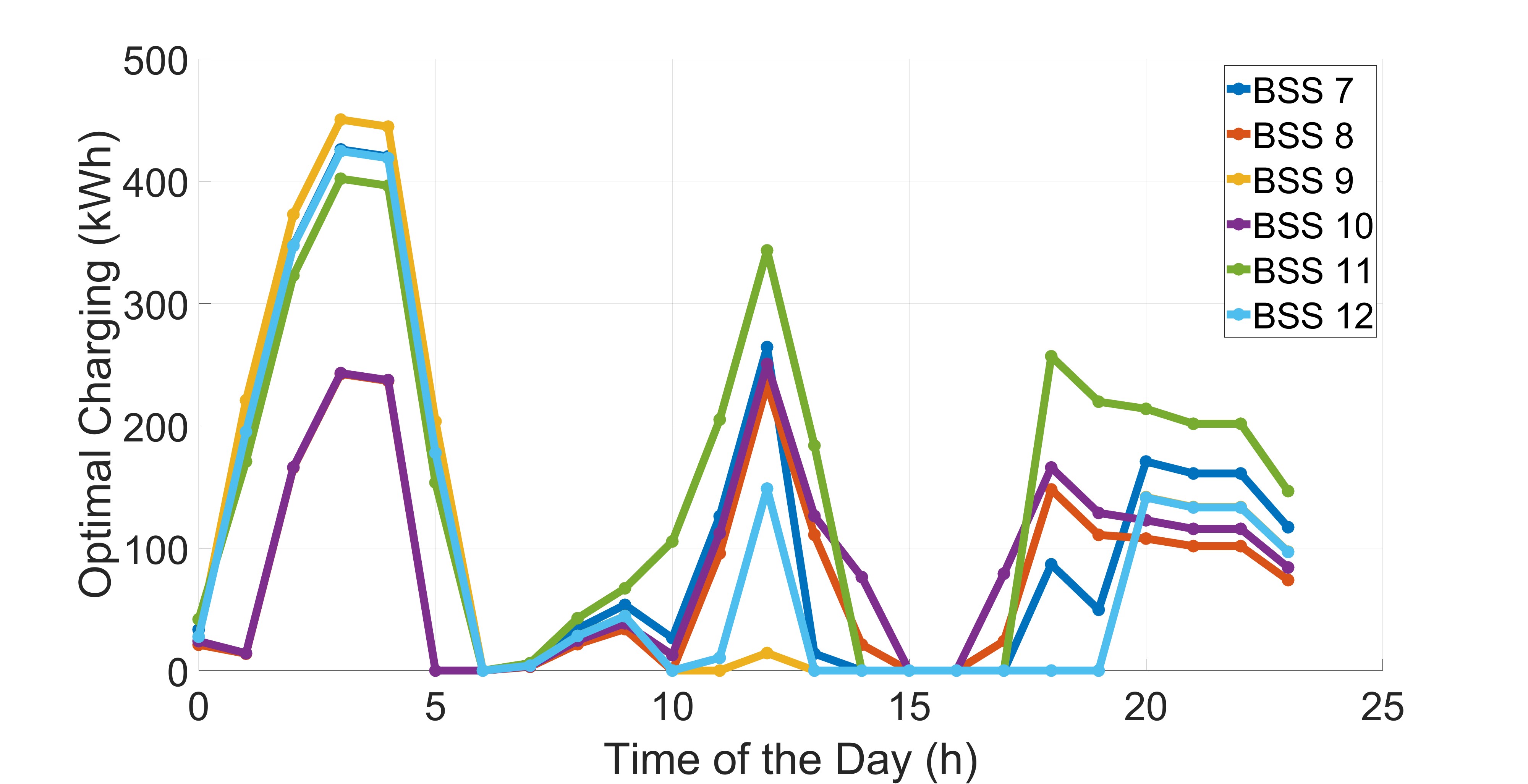}
   \caption{Optimal charging BSS 7-12.}
   \label{fig:opt_charging_7_12}
\end{subfigure}
\hfill
\caption{The optimal pricing and average daily optimal charging strategy for 12-BSS system. The shown is the average value because the optimal result is different since the remaining power at 0:00 differs for each day.}

\label{fig:opt_pricing_charging}
\end{figure*}

\subsubsection{Comparison with benchmarks}
\label{ssec:benchmark}

In this section, we evaluate the profits under various pricing and charging strategies. The system was simulated over a period of 7 days, during which we computed the profit for all BSSs and the system’s total profit (social welfare). To incorporate more randomness, we assumed that the demand for the next day is the previous day’s demand multiplied by a random number between 0.8 and 1.2. The comparative results are depicted in Figure~\ref{fig:bm}. Due to space constraints, we present the results for the BSS with median $N$ and $d_k$ (BSS 7), as well as the social welfare.

\begin{figure*}
\begin{subfigure}[t]{0.3\textwidth}
   \includegraphics[width=\linewidth]{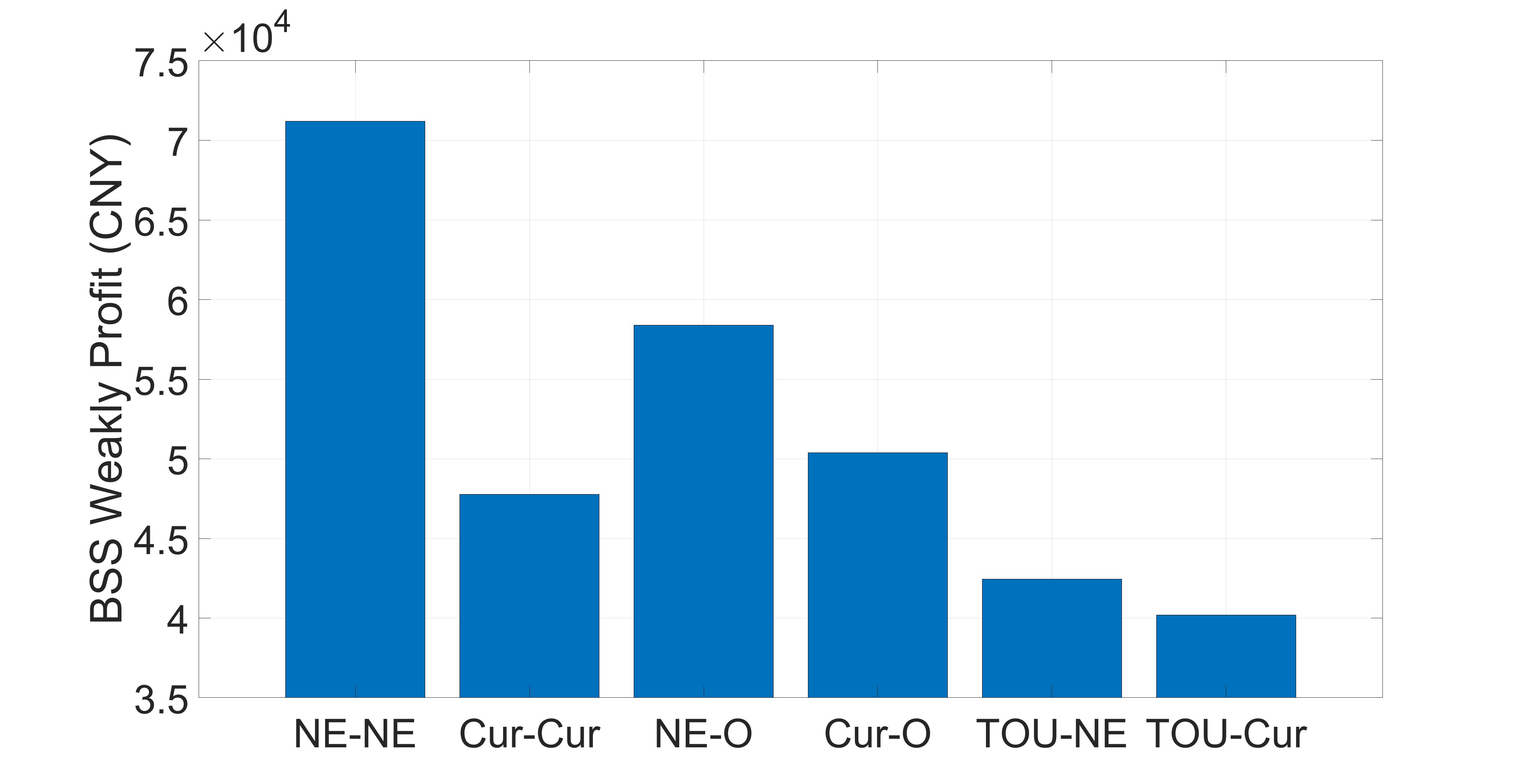}
   \caption{BSS 7 Weakly profits.}
   \label{fig:bm_BSS7}
\end{subfigure}
\hfill
\begin{subfigure}[t]{0.3\textwidth}
   \includegraphics[width=\linewidth]{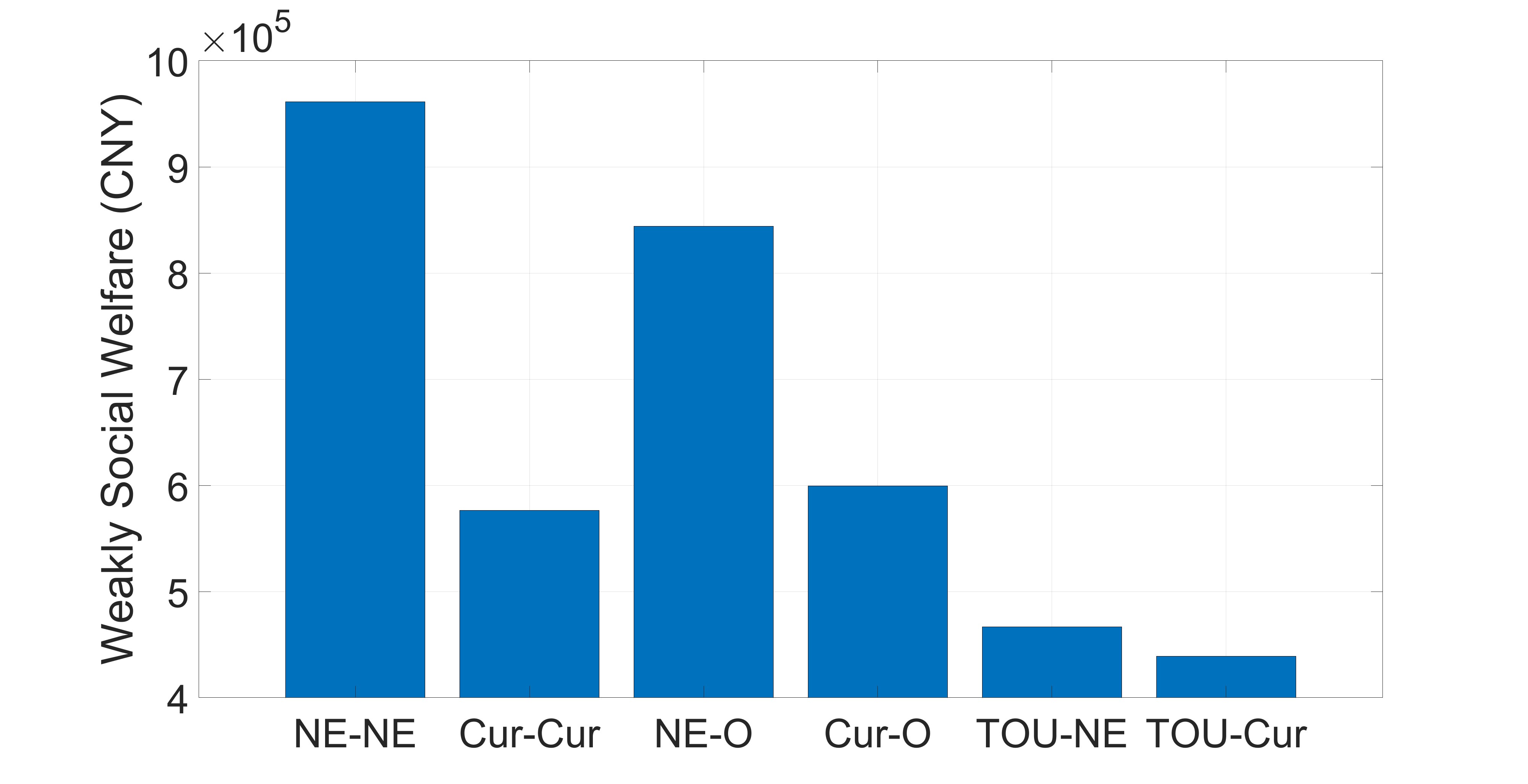}
   \caption{Social welfare.}
   \label{fig:bm_sw}
\end{subfigure}
\hfill
\begin{subfigure}[t]{0.3\textwidth}
   \includegraphics[width=\linewidth]{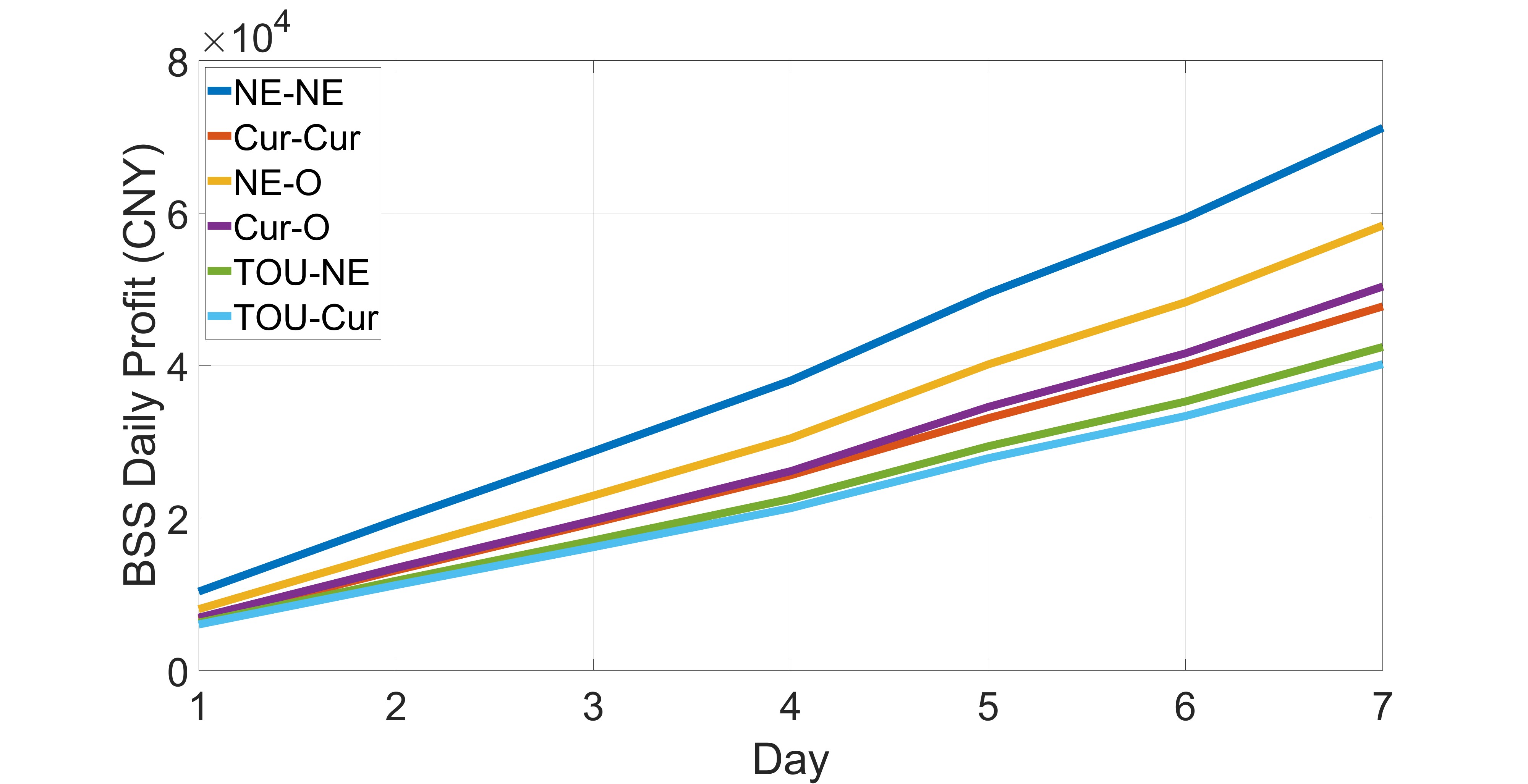}
   \caption{BSS 7 cumulative profit.}
   \label{fig:bm_day}
\end{subfigure}
\hfill
\caption{Profits comparison with benchmarks in 12-BSS system. The first abbreviation is short for pricing and the second is short for charging, e.g., NE-O denotes proposed Nash equilibrium (NE) pricing strategy and solely optimization charging (O) strategy.}
\label{fig:bm}
\end{figure*}

As depicted in Figure~\ref{fig:bm}, the proposed NE-NE strategy surpasses all benchmarks. For instance, under the proposed strategy, BSS 7’s weekly profit is approximately 71,000 CNY, a 22\% increase compared to NE-O (around 58,400 CNY). The weekly social welfare under the proposed strategy is nearly 961k CNY, almost double that of TOU-NE (around 467k CNY). The performance of NE-O and TOU-NE significantly trails that of NE-NE, underscoring the importance of considering competition among BSSs in both pricing and charging strategies.

\subsubsection{Optimal profit under different battery number}

In this section, we investigate how the number of batteries, $N_k$, affects the BSS's optimal profit in the system. In this experiment, we focus on the potential for earning more profit, without considering the construction cost of adding more batteries. We select one BSS and increase the number of batteries while keeping other parameters constant. Additionally, we adjust the transmission capacity to $X_M=\frac{1}{2}CN$ to ensure that the stations have the ability to charge the batteries they own. We then simulate the system for a week and calculate the optimal profit of that BSS as well as its profit share in the system. The results are shown in Figure~\ref{fig:BSS_profit_diff_battnum}.

\begin{figure*}
\begin{subfigure}[t]{0.25\textwidth}
  \includegraphics[width=\linewidth]{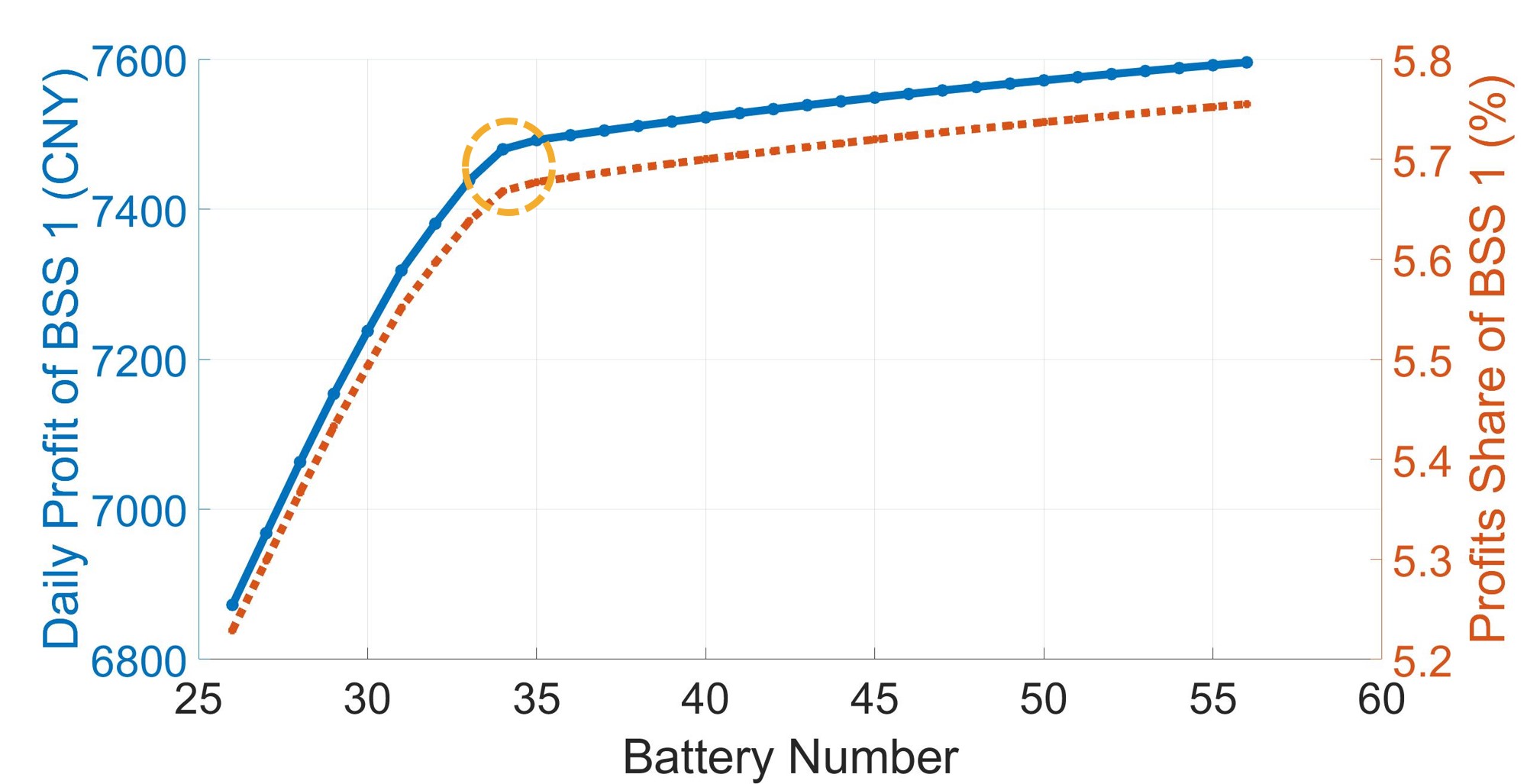}
  \caption{BSS 1}
  \label{fig:BSS1_profit_diff_battnum}
\end{subfigure}%
\hfill % maximize the horizontal separation
\begin{subfigure}[t]{0.25\textwidth}
  \includegraphics[width=\linewidth]{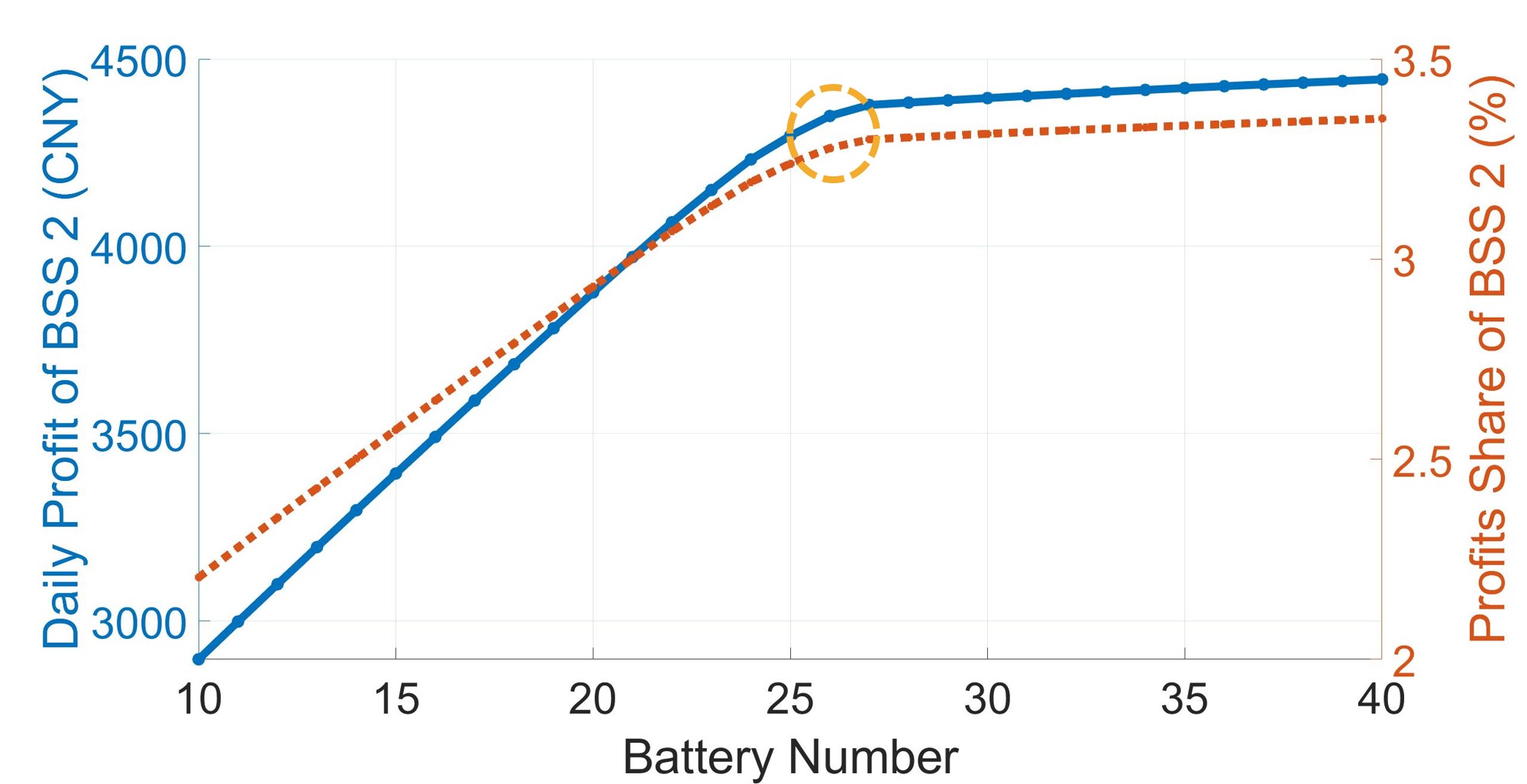}
  \caption{BSS 2}
  \label{fig:BSS2_profit_diff_battnum}
\end{subfigure}%
\hfill
\begin{subfigure}[t]{0.25\textwidth}
  \includegraphics[width=\linewidth]{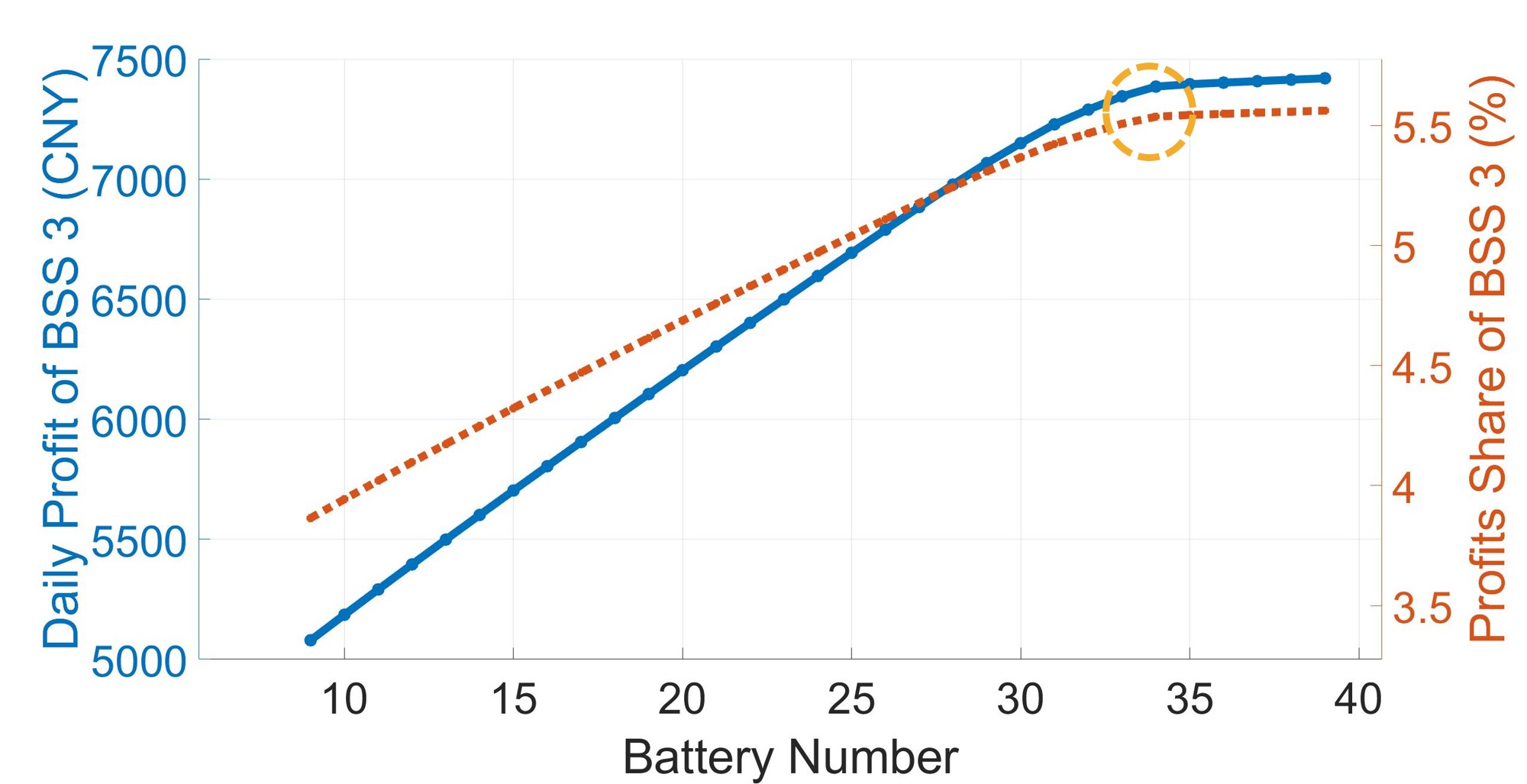}
  \caption{BSS 3}
  \label{fig:BSS3_profit_diff_battnum}
\end{subfigure}%
\hfill
\begin{subfigure}[t]{0.25\textwidth}
  \includegraphics[width=\linewidth]{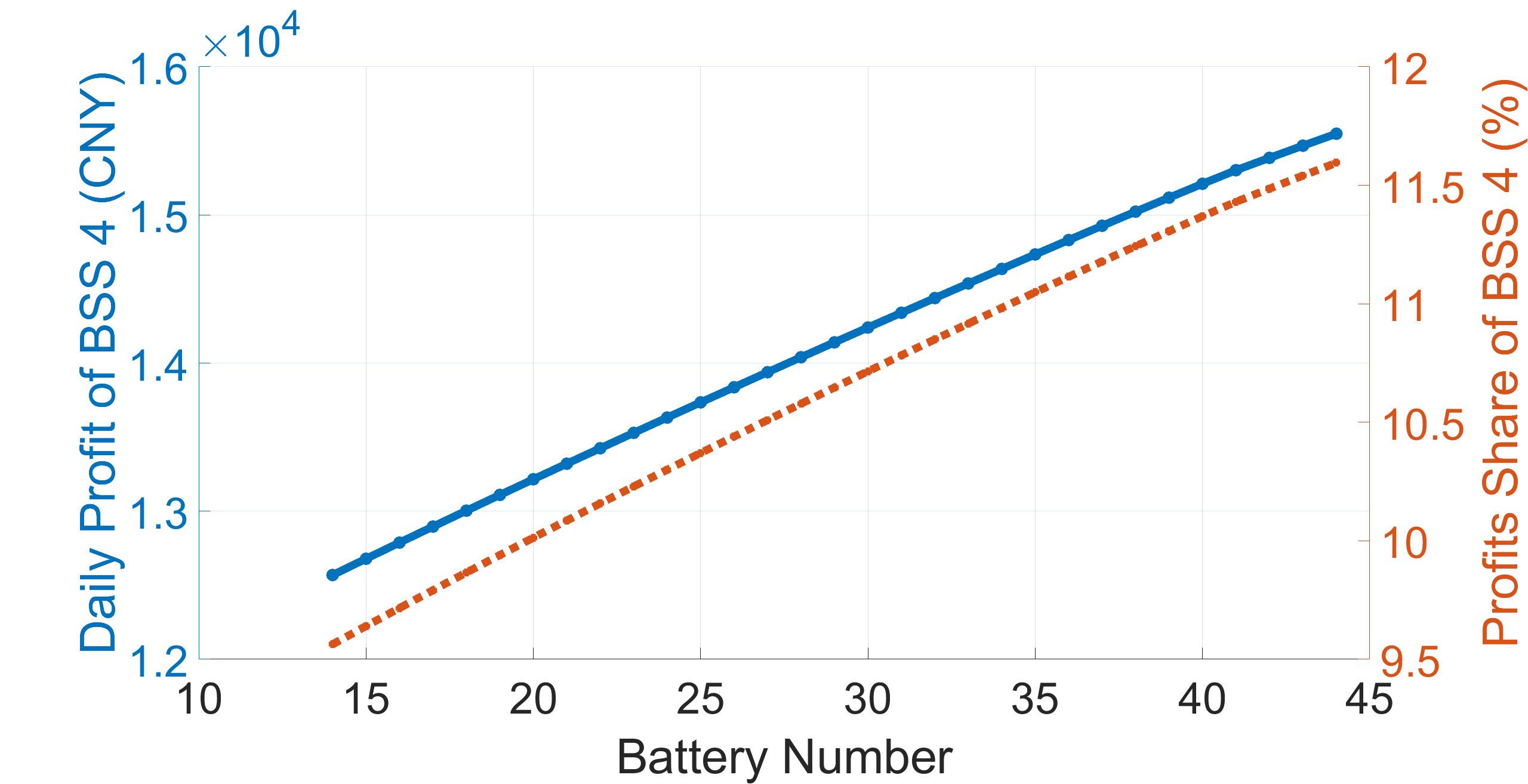}
  \caption{BSS 4}
  \label{fig:BSS4_profit_diff_battnum}
\end{subfigure}%
\hfill % maximize the horizontal separation
\caption{BSS daily profit and profit share under different battery numbers in 12-BSS system.}
\label{fig:BSS_profit_diff_battnum}
\end{figure*}

\begin{figure*}
\begin{subfigure}[t]{0.3\textwidth}
   \includegraphics[width=\linewidth]{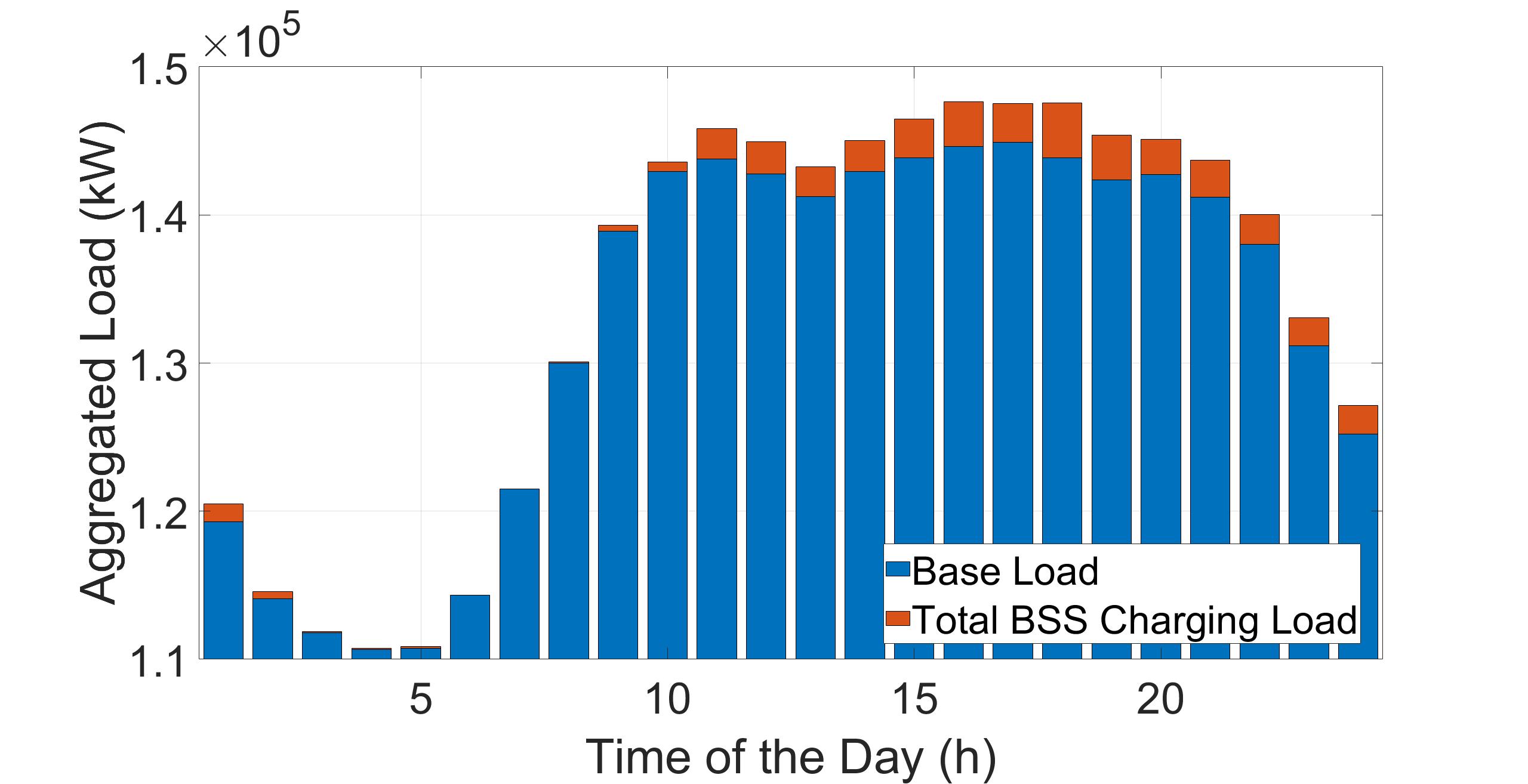}
   \caption{Uncoordinated charging, optimal pricing.}
   \label{fig:aggregated_load_share_cur}
\end{subfigure}
\hfill
\begin{subfigure}[t]{0.3\textwidth}
   \includegraphics[width=\linewidth]{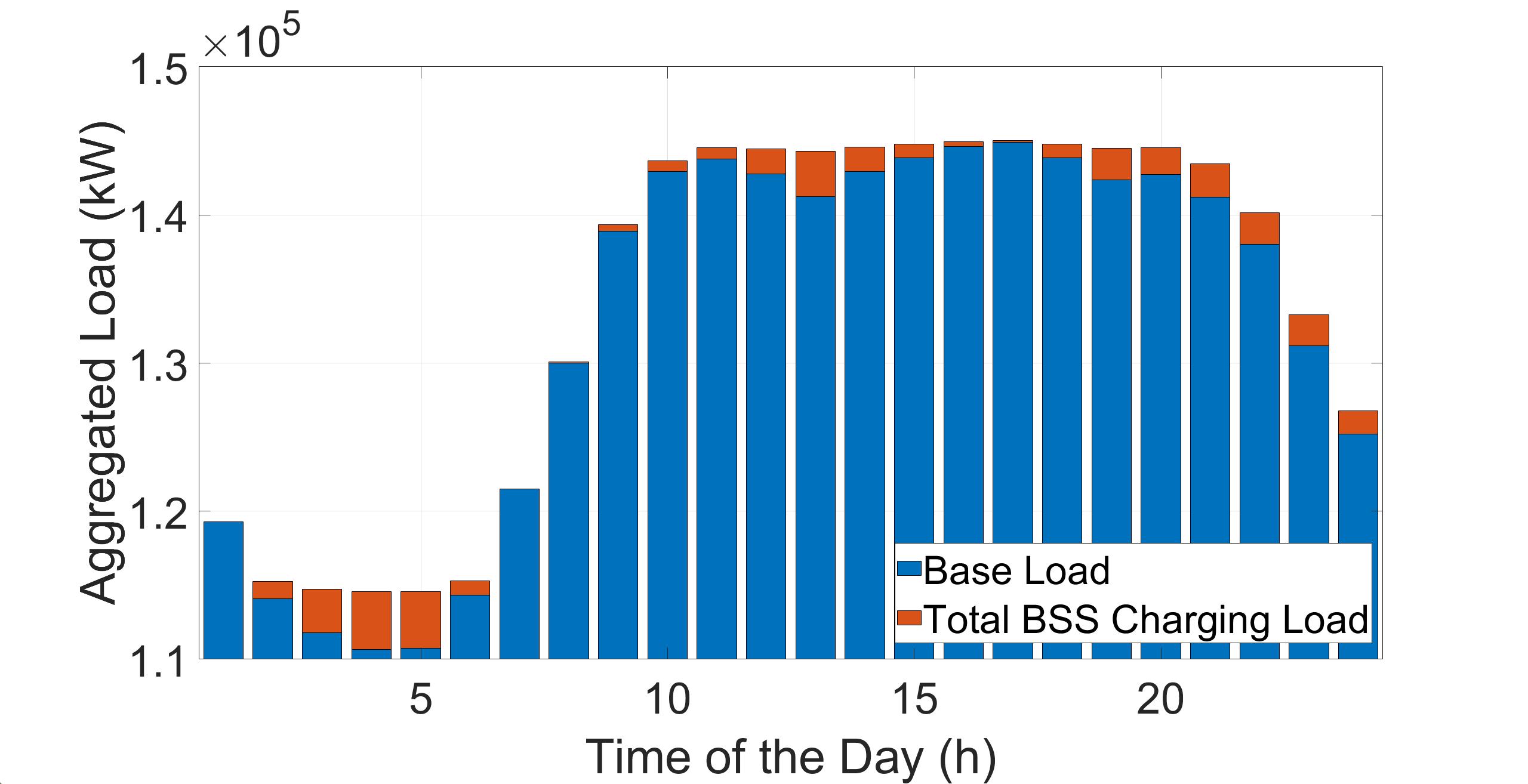}
   \caption{Optimal charging, optimal pricing.}
   \label{fig:aggregated_load_share_opt}
\end{subfigure}
\hfill
\begin{subfigure}[t]{0.3\textwidth}
   \includegraphics[width=\linewidth]{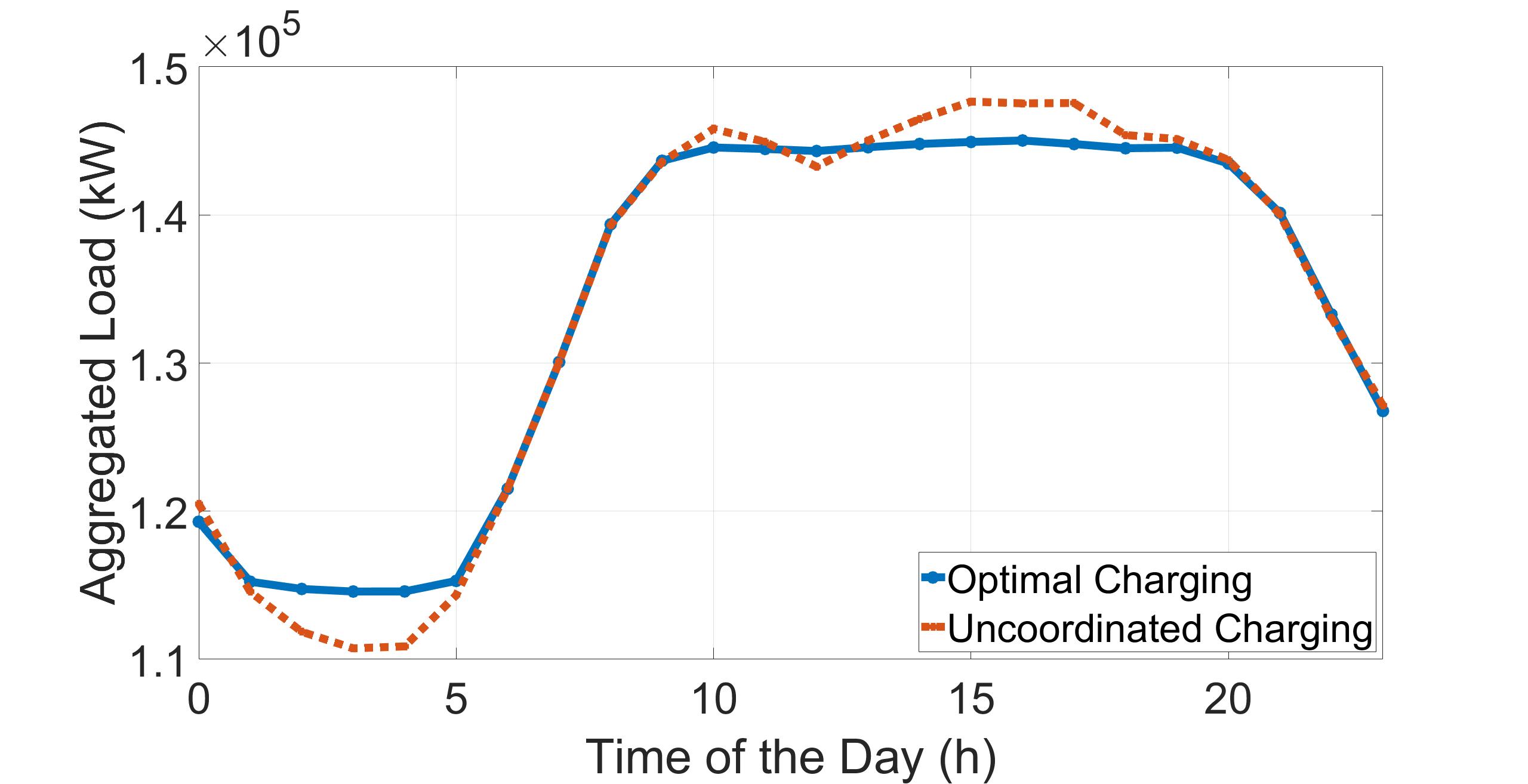}
   \caption{Aggregated load comparison.}
   \label{fig:aggregated_load_peak_shaving}
\end{subfigure}
\hfill
\caption{The aggregated load under different BSS strategies in 12-BSS system.}
\label{fig:peak_shaving_benefits}
\end{figure*}

As depicted in Figure~\ref{fig:BSS_profit_diff_battnum}, both the daily profit and profit share increase with an increase in the number of batteries. For instance, BSS 1 earns an average daily profit of around 6872 CNY and has a share of around 5.23\% in the system with 26 batteries. When the number of batteries increases to 30, the profit increases to around 7237 CNY, and the share increases to around 5.49\%. These results indicate that the BSSs with more batteries are more competitive in the system.

One interesting observation is that the increasing speed differs significantly at certain points, as circled in Figure~\ref{fig:BSS_profit_diff_battnum}. For example, the increasing speed of BSS 1 profit around the 34 is different. This is because an increase in the number of batteries brings two benefits to the station: higher station attractions in the pricing, and higher peak-shaving capacity to reduce the charging cost. However, as the number of batteries increases, the peak-shaving effect is limited by the swapping demand and cannot reduce more charging costs, leading to a change in the rate of profit increase.

\subsection{Peak Shaving Benefits for Aggregator}
In this section, we illustrate that the proposed aggregator pricing strategy shaves the peak load. We first demonstrate the total aggregated load, then present the detail load component and shares. Finally, we validate the peak shaving effect under electricity price difference $Pe_{BSS}-Pe_A$. In this experiment, we adopt the optimal pricing and battery charging strategies to illustrate the effect.
% this experiments also shows how aggregator benefit from adopting the time-of-use strategy shown in \eqref{eq:Pe}.

\subsubsection{Aggregated load under different BSS charging strategies}

In this experiment, we can observe that the proposed charging strategy shaves the total aggregated load. To illustrate this effect, we draw the aggregated load under different battery charging strategies, as shown in Figure~\ref{fig:peak_shaving_benefits}.

As depicted in Figure~\ref{fig:peak_shaving_benefits}, the TOU pricing in \eqref{eq:Pe} shaves the peak load. For example, during 17:00-18:00, the total aggregated load under optimal charging is 144788 kWh, which is only 98.12\% of the total load under unordered charging (147556 kWh).
% At midnight 3:00-4:00, the total aggregated load is 114576 kWh, which is 101.03\% of the unordered charging (110739 kWh).
\subsubsection{Variance of aggregated load with different contract price difference}

In Section~\ref{sec:model_aggregator}, we design a TOU pricing strategy based on $Pe_{BSS}$ and $Pe_A$. In this experiment, we investigate whether the peak shaving effect differs with the price difference $Pe_{BSS}-Pe_A$ in the long-term contract of the aggregator and the BSSs. To examine the effect, we calculate the variance of the total aggregated load of every time slot under different contract price differences $Pe_{BSS}-Pe_A$, and the result is shown in Figure~\ref{fig:var_diff_PeA}.

\begin{figure}[t]
\centerline{\includegraphics[width=\linewidth]{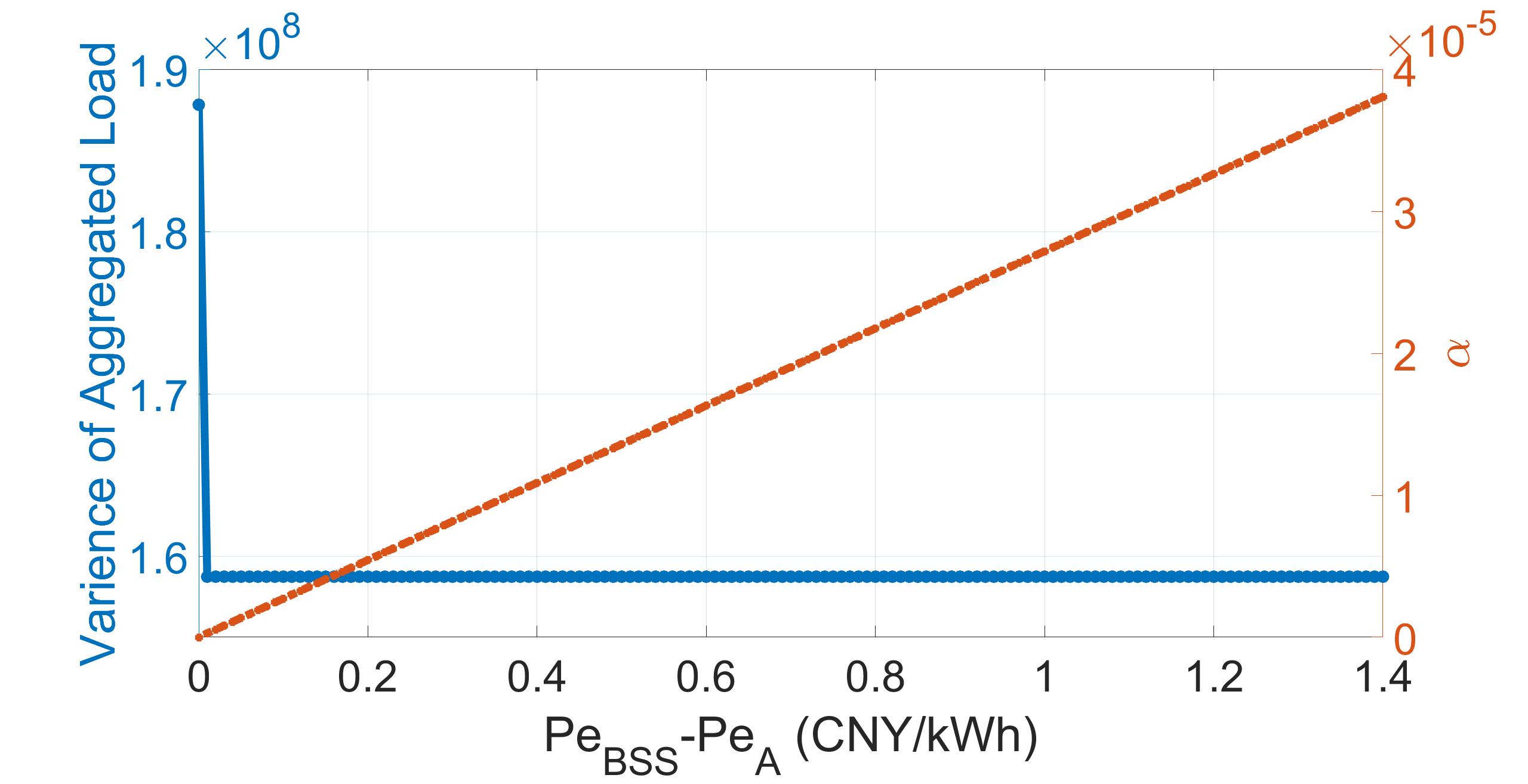}}
\caption{Variance of the aggregated load and $\alpha$ value under multiple electricity contract price difference $Pe_{BSS}-Pe_A$ in 12-BSS system.}
\label{fig:var_diff_PeA}
\end{figure}

As depicted in Figure~\ref{fig:var_diff_PeA}, when the contract price of the aggregator $Pe_A$ is smaller than that of the BSS $Pe_{BSS}$, the aggregator will achieve the same variance in load. However, as the difference approaches zero, the TOU pricing strategy becomes a fixed price strategy, since the parameter $\alpha$ becomes zero, leading to a significant increase in variance. This result indicates that the TOU pricing strategy achieves a better peak-shaving effect than the fixed price strategy. Moreover, to achieve the peak shaving effect, the aggregator does not need to negotiate an extremely low price in the long-term contract with the power grid.

\section{Discussion}
\label{sec:dis}
\subsection{Equilibrium Stability}

In this subsection, we examine the stability of the equilibrium with varying initial points. Initial points for $Ps_i^{old}$ and $X_{k,t}^{old}$ are randomly generated 50 times within the range $[1,10^4]$, and the pricing and charging equilibrium is calculated. The statistics of the initial values and algorithm results are presented in Figure~\ref{fig:stable_p} and Figure~\ref{fig:stable_x}. Due to space constraints, we display the charging result for BSS 1. As depicted in the figures, the equilibrium points remain consistent, validating the initial points.

\begin{figure}[t]
\centerline{\includegraphics[width=\linewidth]{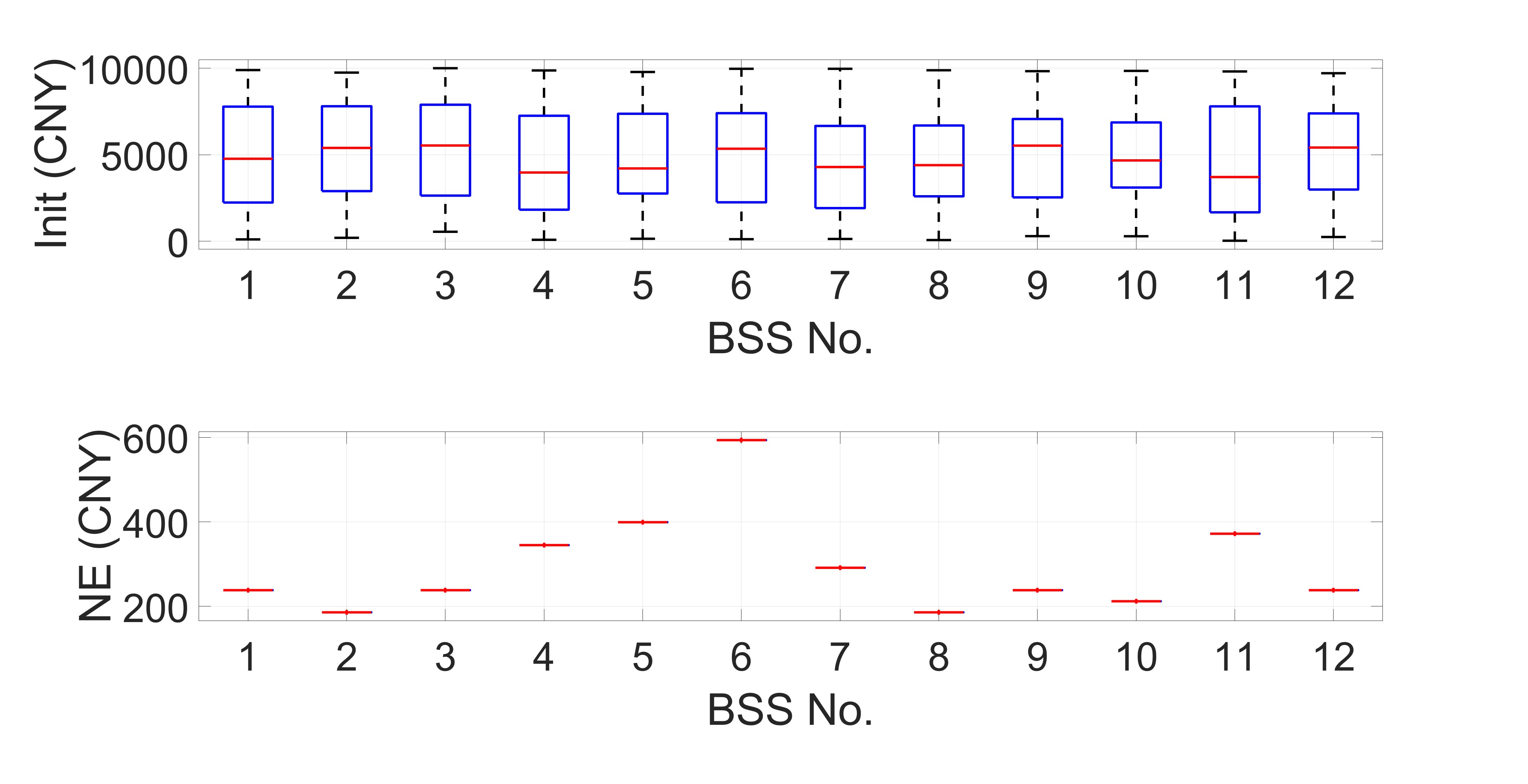}}
\caption{Pricing Equilibrium Stability in 12-BSS system.}
\label{fig:stable_p}
\end{figure}

\begin{figure}[t]
\centerline{\includegraphics[width=\linewidth]{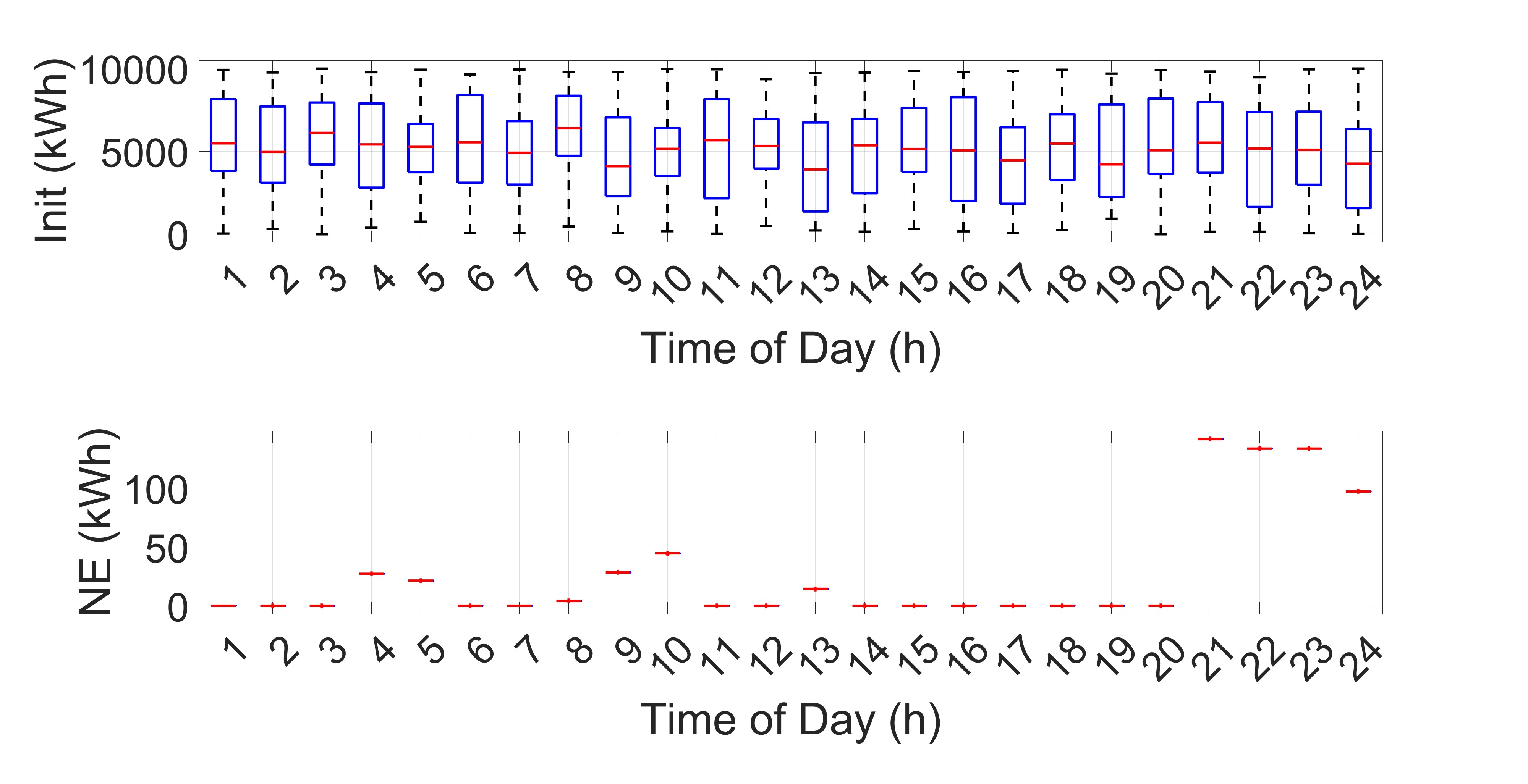}}
\caption{Charging Equilibrium Stability in 12-BSS system. The shown is the charging equilibrium of BSS 1.}
\label{fig:stable_x}
\end{figure}

\subsection{Effectiveness under a Larger System}

To demonstrate the effectiveness of our algorithm in larger systems, we extend the scenario to a 36-BSS system. We calculate the profit of BSS 7 and the cumulative profit of all BSSs, representing the system’s social welfare. In this system, we set $C_d = 0.5$ to validate the effectiveness under varying weighting parameters. BSS parameters, including the battery number (randomly generated within $[10,40]$) and the distance advantage (within $[5,25]$), are randomly generated. The charging station price is set to the average real-life price of 1.5 CNY/kWh. The swapping demand is tripled compared to the 12-BSS system. All other parameters remain consistent with the previous 12-BSS system. The optimal pricing and average daily charging strategies are depicted in Figure~\ref{fig:opt_pricing_charging_large}. Due to space constraints, we only display the charging strategies of BSS 1-12. The social welfare under different pricing and charging strategies is presented in Figure~\ref{fig:sw_larger_system}. As shown in Figure~\ref{fig:sw_larger_system}, our proposed strategy improves the social welfare.

\begin{figure*}
\begin{subfigure}[t]{0.3\textwidth}
   \includegraphics[width=\linewidth]{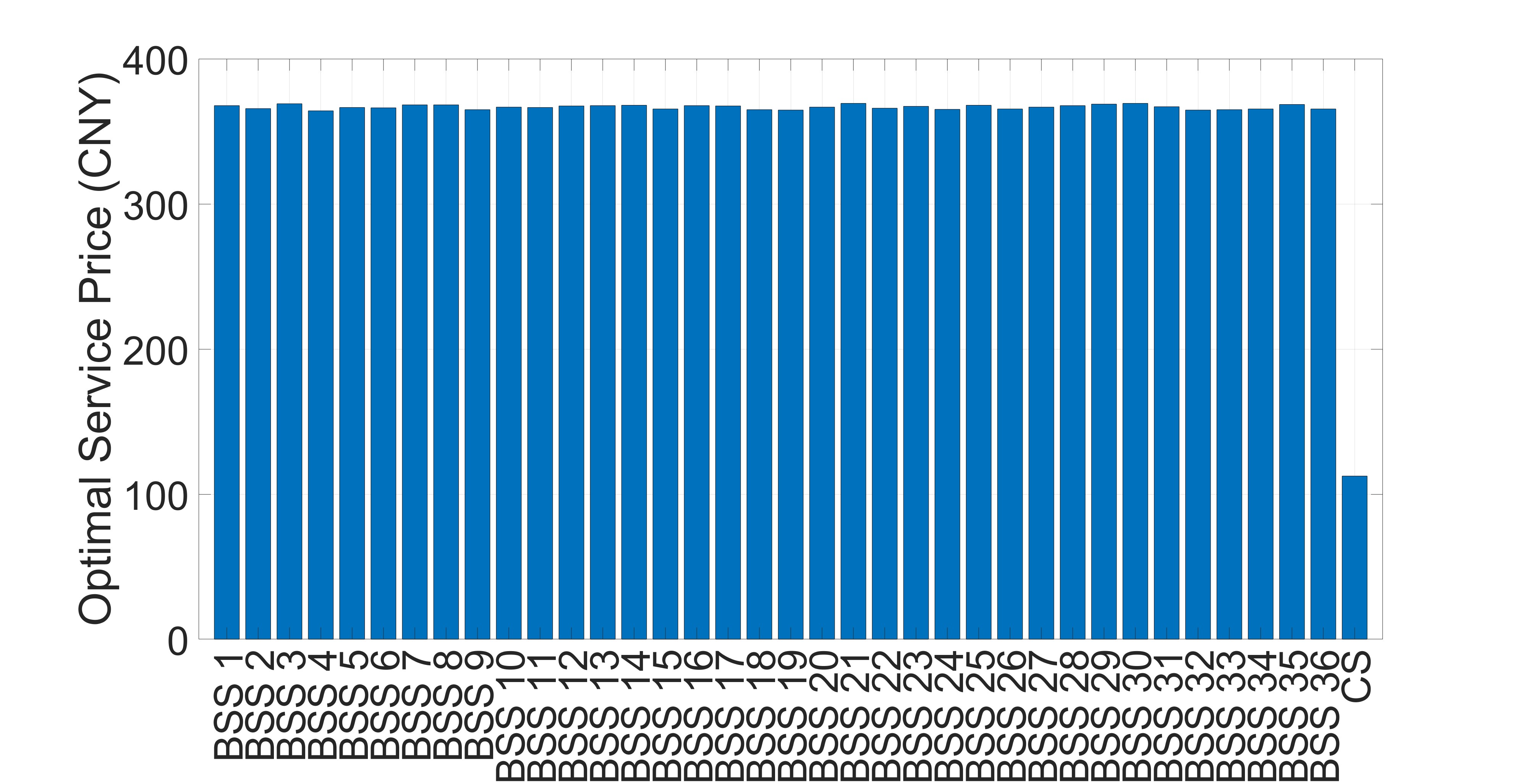}
   \caption{Optimal pricing.}
   \label{fig:opt_pricing_36}
\end{subfigure}
\hfill
\begin{subfigure}[t]{0.3\textwidth}
   \includegraphics[width=\linewidth]{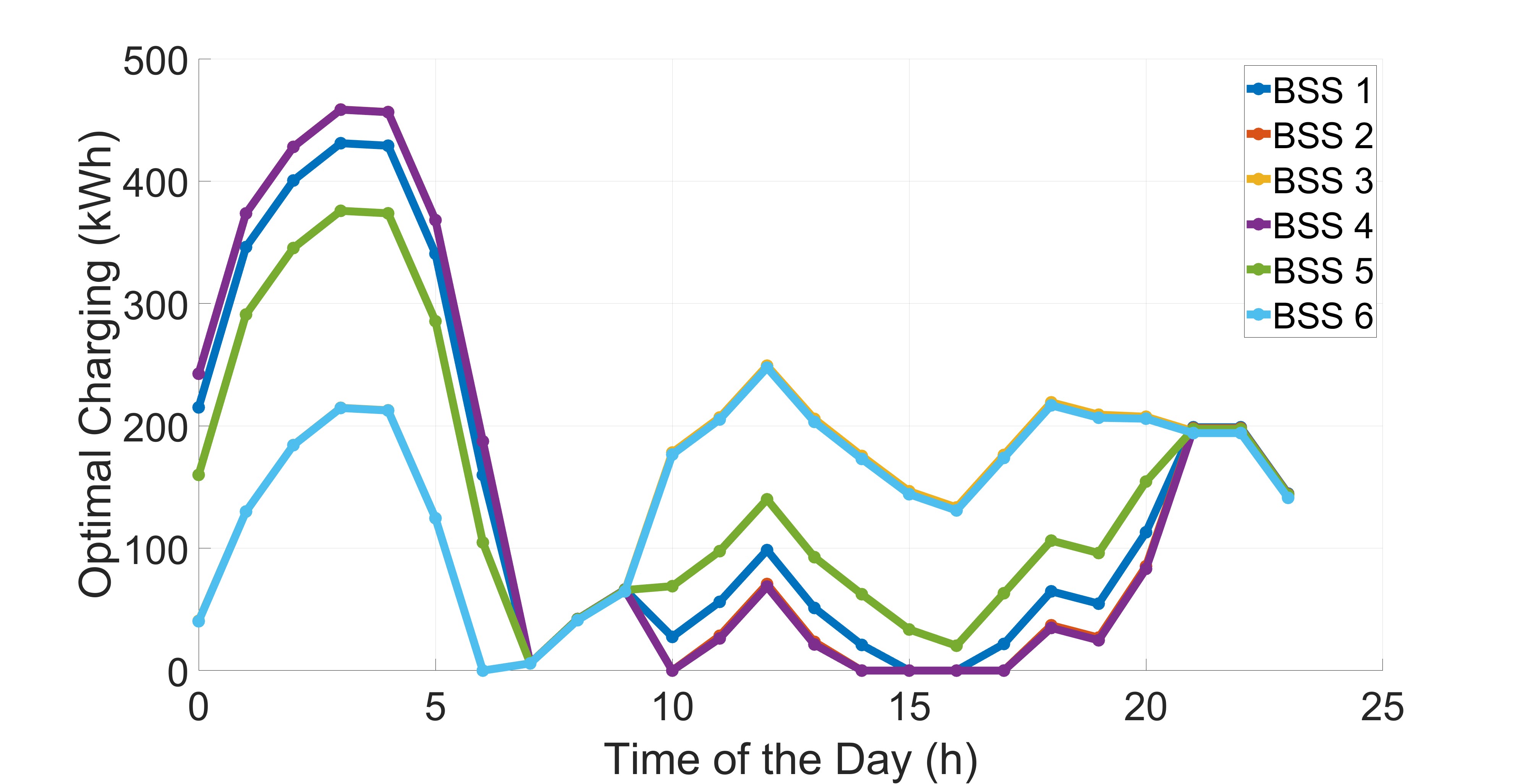}
   \caption{Optimal charging of BSS 1-6.}
   \label{fig:opt_charging_1_6_large}
\end{subfigure}
\hfill
\begin{subfigure}[t]{0.3\textwidth}
   \includegraphics[width=\linewidth]{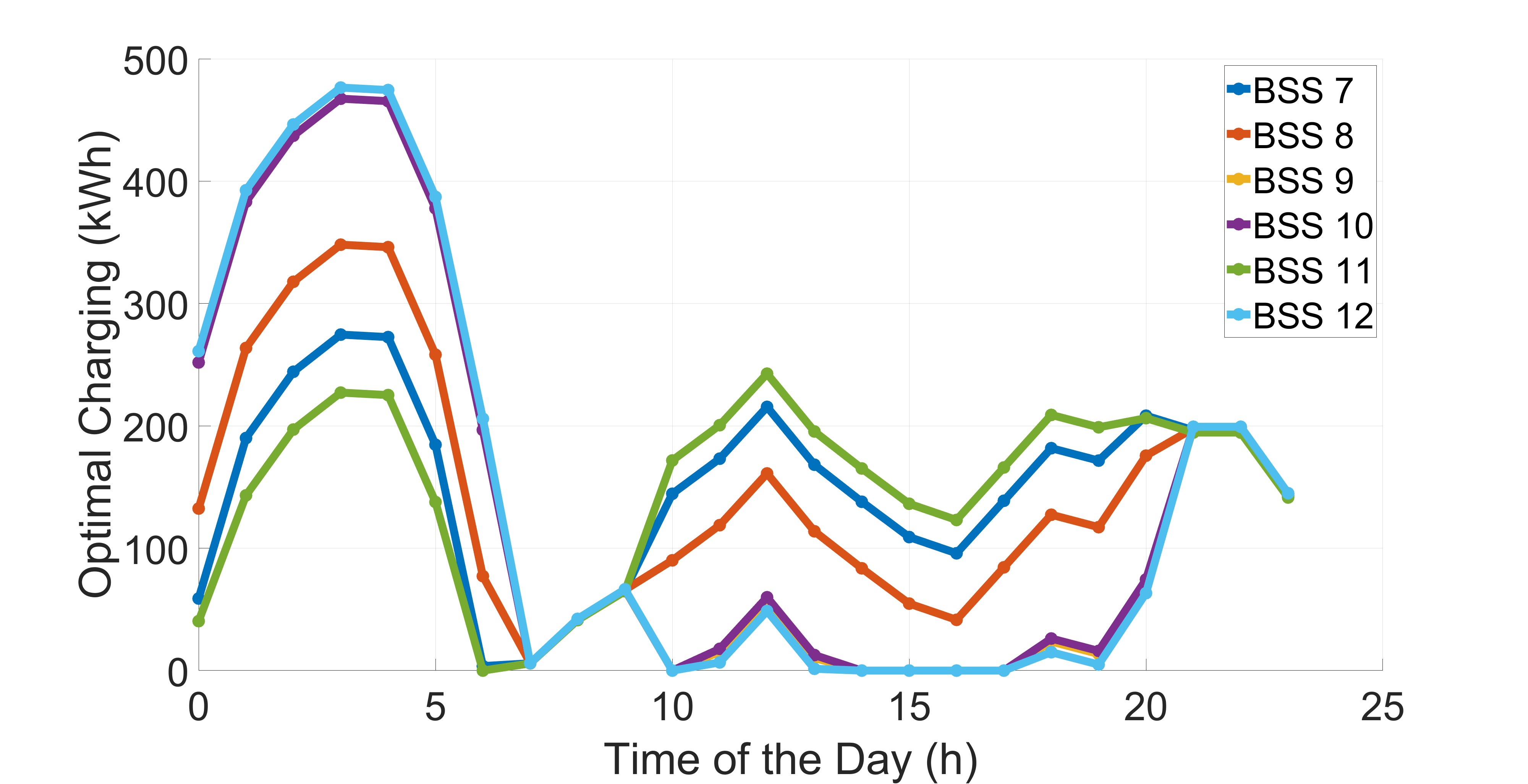}
   \caption{Optimal charging of BSS 7-12.}
   \label{fig:opt_charging_7_12_large}
\end{subfigure}
\hfill
\caption{The optimal pricing and average daily optimal charging strategy for 36-BSS system. The results are quite similar since all BSS parameters are generated from the same distribution.}
\label{fig:opt_pricing_charging_large}
\end{figure*}

\begin{figure*}
\begin{subfigure}[t]{0.3\textwidth}
   \includegraphics[width=\linewidth]{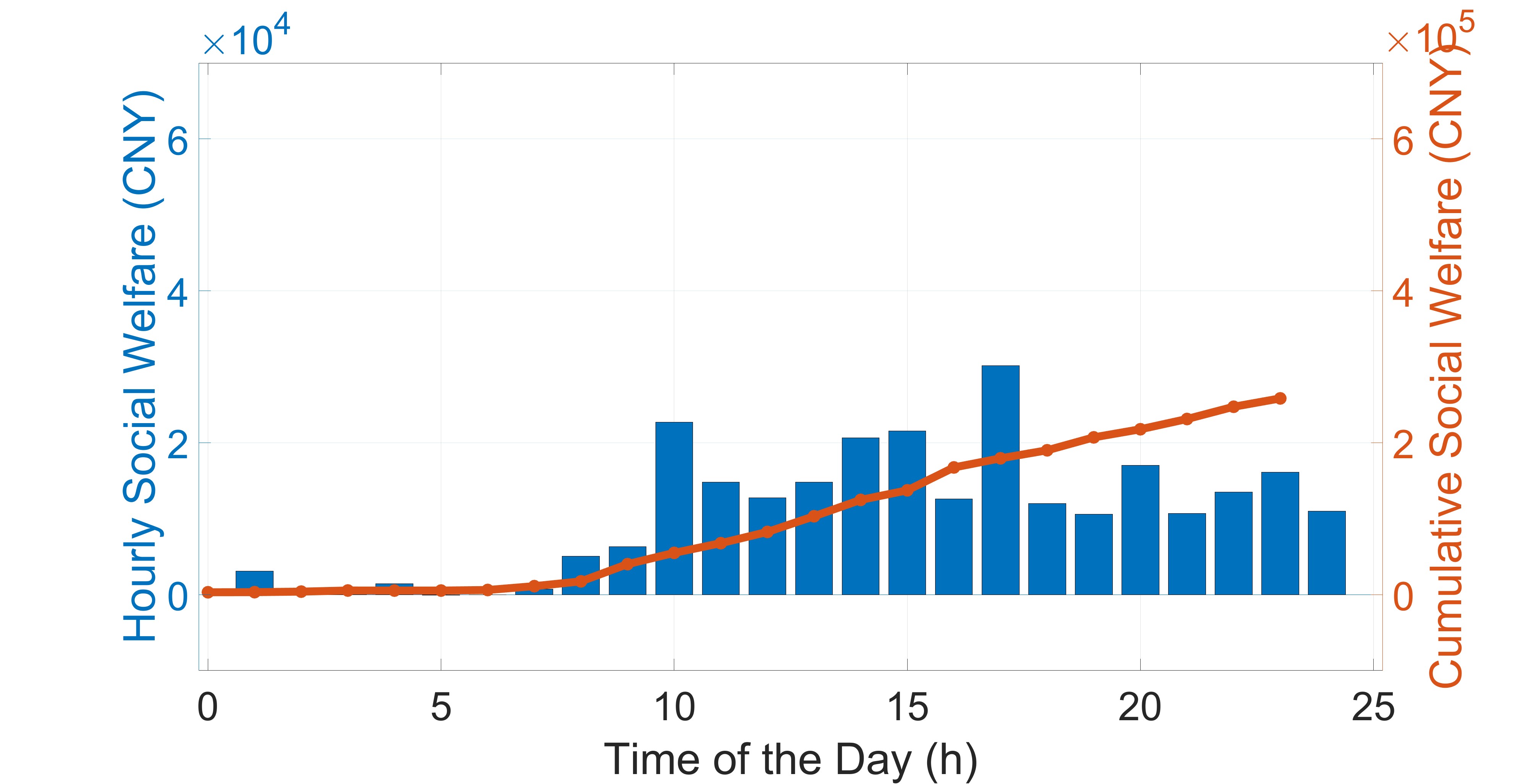}
   \caption{Uncoordinated charging, optimal pricing.}
   \label{fig:sw_larger_system_cur_cur}
\end{subfigure}
\hfill
\begin{subfigure}[t]{0.3\textwidth}
   \includegraphics[width=\linewidth]{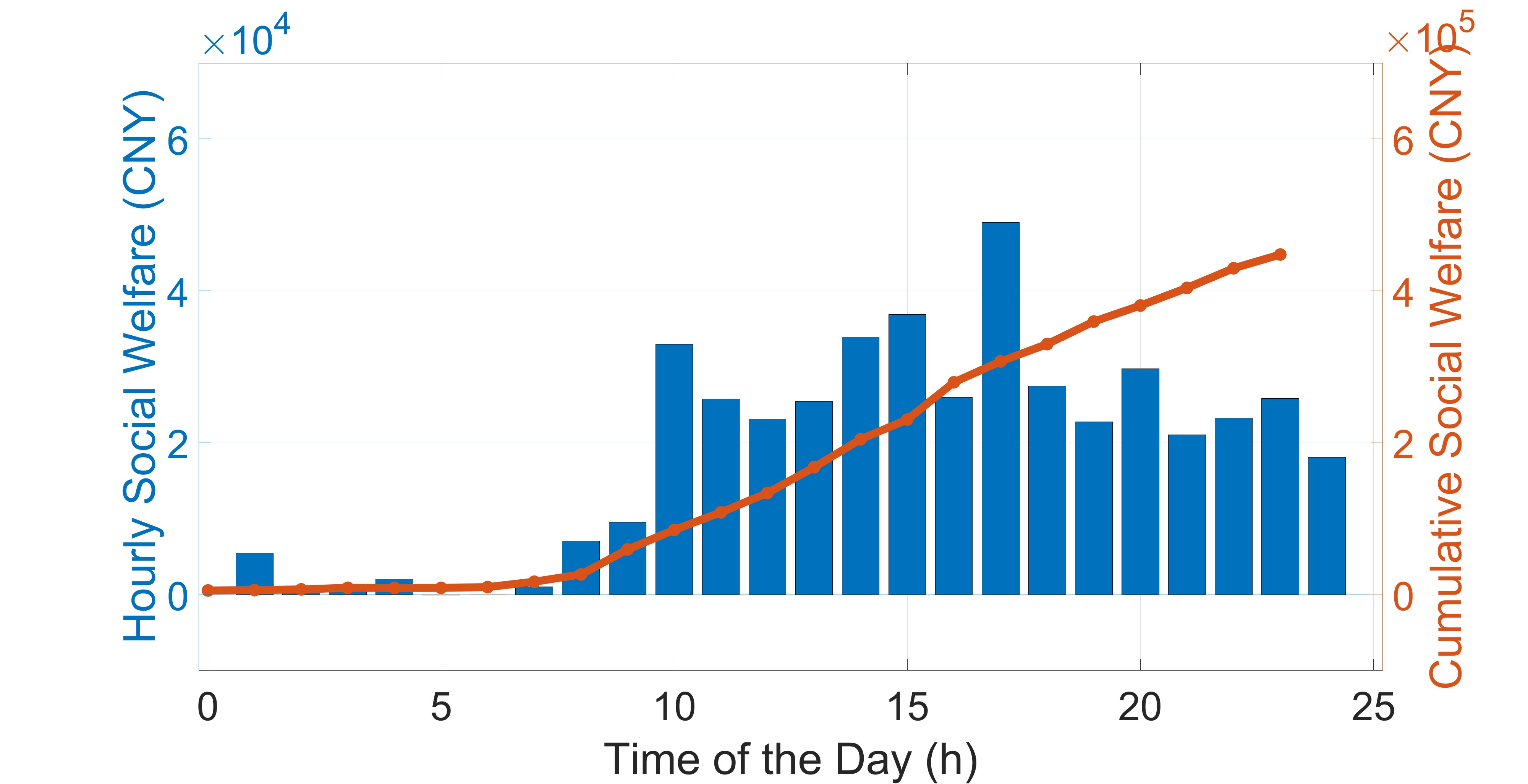}
   \caption{Optimal charging, optimal pricing.}
   \label{fig:sw_larger_system_cur_opt}
\end{subfigure}
\hfill
\begin{subfigure}[t]{0.3\textwidth}
   \includegraphics[width=\linewidth]{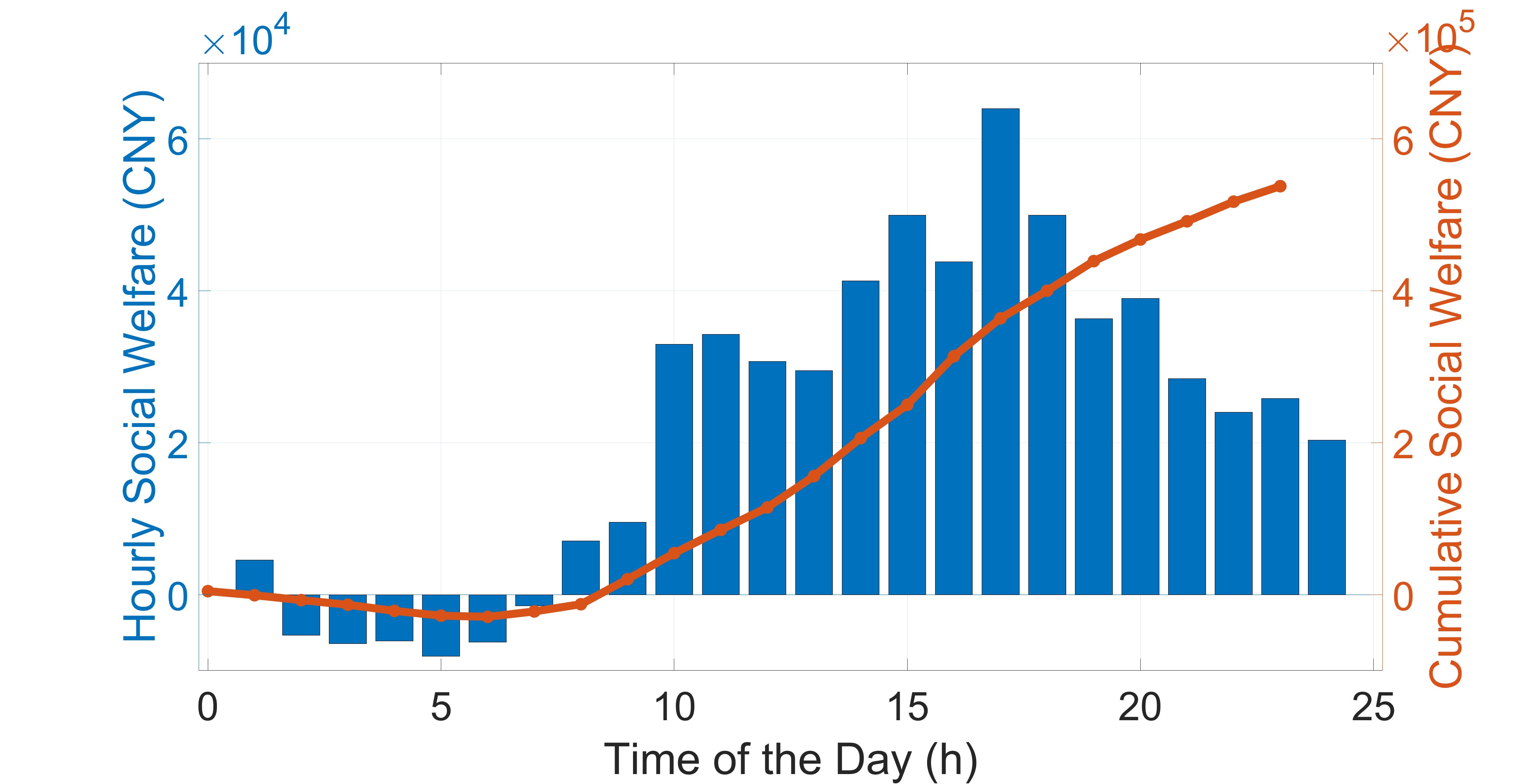}
   \caption{Aggregated load comparison.}
   \label{fig:sw_larger_system_opt_opt}
\end{subfigure}
\hfill
\caption{The social welfare under different BSS strategies in 36-BSS system.}
\label{fig:sw_larger_system}
\end{figure*}

\subsection{Extension to Heterogeneous EVs}
\label{ssec:heter_EVs}
In Section~\ref{sec:model}, we present a demand response model defined by isomorphic parameters $C_p$, $C_b$, and $C_d$. This subsection broadens our analysis to two heterogeneous EV settings. We posit two EV groups with distinct preferences: $C_p^1$, $C_b^1$, and $C_d^1$ for the first group, and $C_p^2$, $C_b^2$, and $C_d^2$ for the second. The total demand at each time slot $t$ for the groups is represented as $R_t^1$ and $R_t^2$. Each group's demand response model is the same as the isomorphic parameters. In this context, the BSS utility function is defined as follows:
\begin{align}
    \Pi_k &= \Pi_k^1 + \Pi_k^2 \nonumber \\
    &= Ps_k R_{k,t}^1 + Ps_k R_{k,t}^2 - X_{k,t} Pe_t - \mu (X_{k,t}  + R_{k,t}^1 + R_{k,t}^2) \nonumber \\
    &= (R_t^1\frac{U_k^1}{\sum U_j^1 + U_c^1} + R_t^2\frac{U_k^2}{\sum U_j^2 + U_c^2}) (Ps_k - \mu) -  X_{k,t} Pe_t.
\end{align}
From the perspective of charging, the introduction of the new utility function does not alter the ratio of $X_{k,t}$. As a result, the previously charging results, including the existence and uniqueness in the charging subgame, continue to be valid. When considering the pricing aspect, this utility function is an affine transformation of the isomorphic one, thereby preserving its concavity with respect to $Ps_k$. Consequently, the pricing equilibrium is guaranteed to exist. The proposed algorithm method remains a viable approach for obtaining an equilibrium from a given initial point. However, it is important to note that the uniqueness of the equilibrium may not be assured. In such instances, an algorithm akin to the particle swarm optimization technique can be employed. This involves dispersing a multitude of starting points throughout the solution space in an attempt to identify all potential equilibria.

\section{Conclusions and Future Directions}
\label{sec:conclusion}
This paper proposes a hierarchical game to study the BSS optimal pricing and charging strategy under an aggregator that adopts TOU pricing. We prove the existence and uniqueness of the SPNE and propose an algorithmic solution for SPNE with convergence proved. Simulations with real-life data show that our pricing and charging significantly improve the stations' profit compared to the strategy used in real life. In particular, 2/3 of stations improve the profit by more than 40\%, and the slightest improvement is 18.1\%. As for further directions, it is interesting to investigate the time-verifying swapping service price. Besides, one may consider the scenario where the station can inject energy back into the grid (V2G).

\ifCLASSOPTIONcaptionsoff
  \newpage
\fi

\bibliographystyle{IEEEtran}

\bibliography{ref.bib}

\begin{IEEEbiography}[{\includegraphics[width=1in,height=1.25in,clip,keepaspectratio]{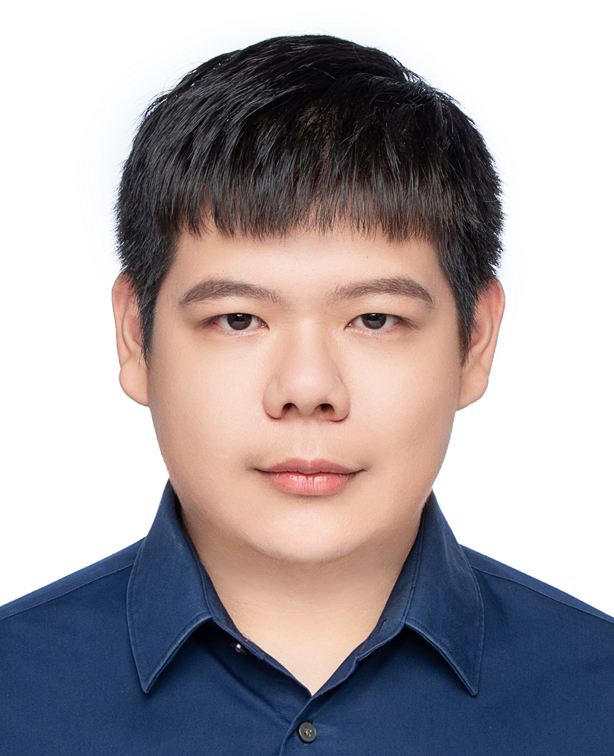}}]{Huanyu Yan}
Huanyu Yan obtained the B.Eng. degree in Computer Science and Technology from the Northwestern Polytechnical University, Shaanxi, China, in 2020. He is currently pursuing the Ph.D. degree in Computer and Information Engineering with the Chinese University of Hong Kong (Shenzhen), Shenzhen, China.

His research interests include electric vehicle charging scheduling, charging station pricing and optimizations, applications of game theory and economic mechanisms.
\end{IEEEbiography}

\begin{IEEEbiography}[{\includegraphics[width=1in,height=1.25in,clip,keepaspectratio]{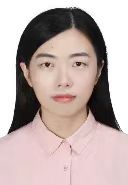}}]{Chenxi Sun}
Chenxi Sun received the B.Eng. degree and the Ph.D. degree in Electrical and Electronic Engineering from The University in Hong Kong in 2015 and in 2020 respectively. She is currently a Research Associate at the Shenzhen Institute of Artificial Intelligence and Robotics for Society, Shenzhen, China. Her research interests include optimization and machine learning in smart cities.
\end{IEEEbiography}

% % if you will not have a photo at all:
\begin{IEEEbiography}[{\includegraphics[width=1in,height=1.25in,clip,keepaspectratio]{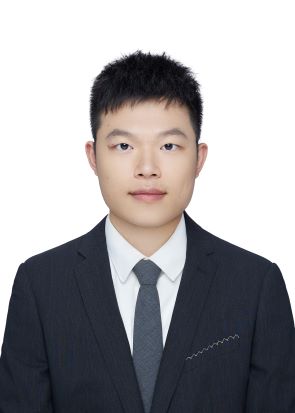}}]{Huanxin Liao}
Huanxin Liao obtained the B.Eng. degree in Electric Power Engineering and Automation from the Shanghai Jiao Tong University, Shanghai, China, in 2021. He is currently pursuing the Ph.D. degree in Computer and Information Engineering with the Chinese University of Hong Kong (Shenzhen), Shenzhen, China.

His research interests include electricity market, energy storage systems, carbon market and artificial intelligence.
\end{IEEEbiography}

\begin{IEEEbiography}[{\includegraphics[width=1in,height=1.25in,clip,keepaspectratio]{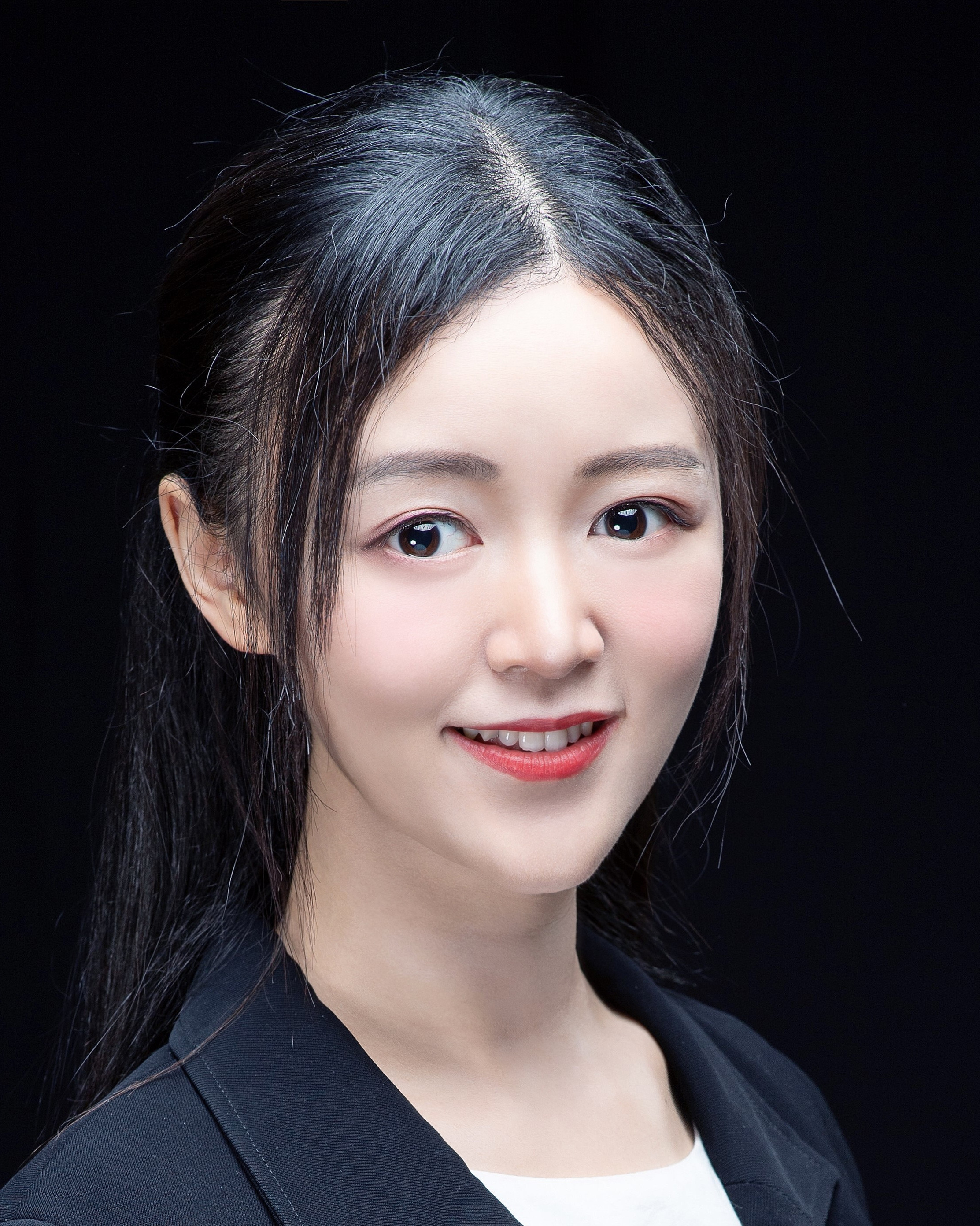}}]{Xiaoying Tang}
Xiaoying Tang (S'13-M'16) is currently a tenure-track assistant professor in The Chinese University of Hong Kong (Shenzhen) and Associate research fellow at the Shenzhen Institute of Artificial Intelligence and Robotics for Society, Shenzhen. Before that, she worked as a Postdoc/Research Scientist at EPFL and The Chinese University of Hong Kong (CUHK) during 2016-2019, respectively. She received her Ph.D. degree from CUHK in Jan. 2016, and B.Eng. degree from UESTC in 2011. Her current research interests include algorithm design and optimizations for EV charging and energy systems, and also theoretical machine learning such as federated learning and domain generalization. Dr. Tang received the Best Paper Award of IEEE SmartGridComm 2013.
\end{IEEEbiography}

\newpage
\appendices

\begin{figure*}[bp] % 因排版移动
\begin{subfigure}{0.25\textwidth}
  \includegraphics[width=\linewidth]{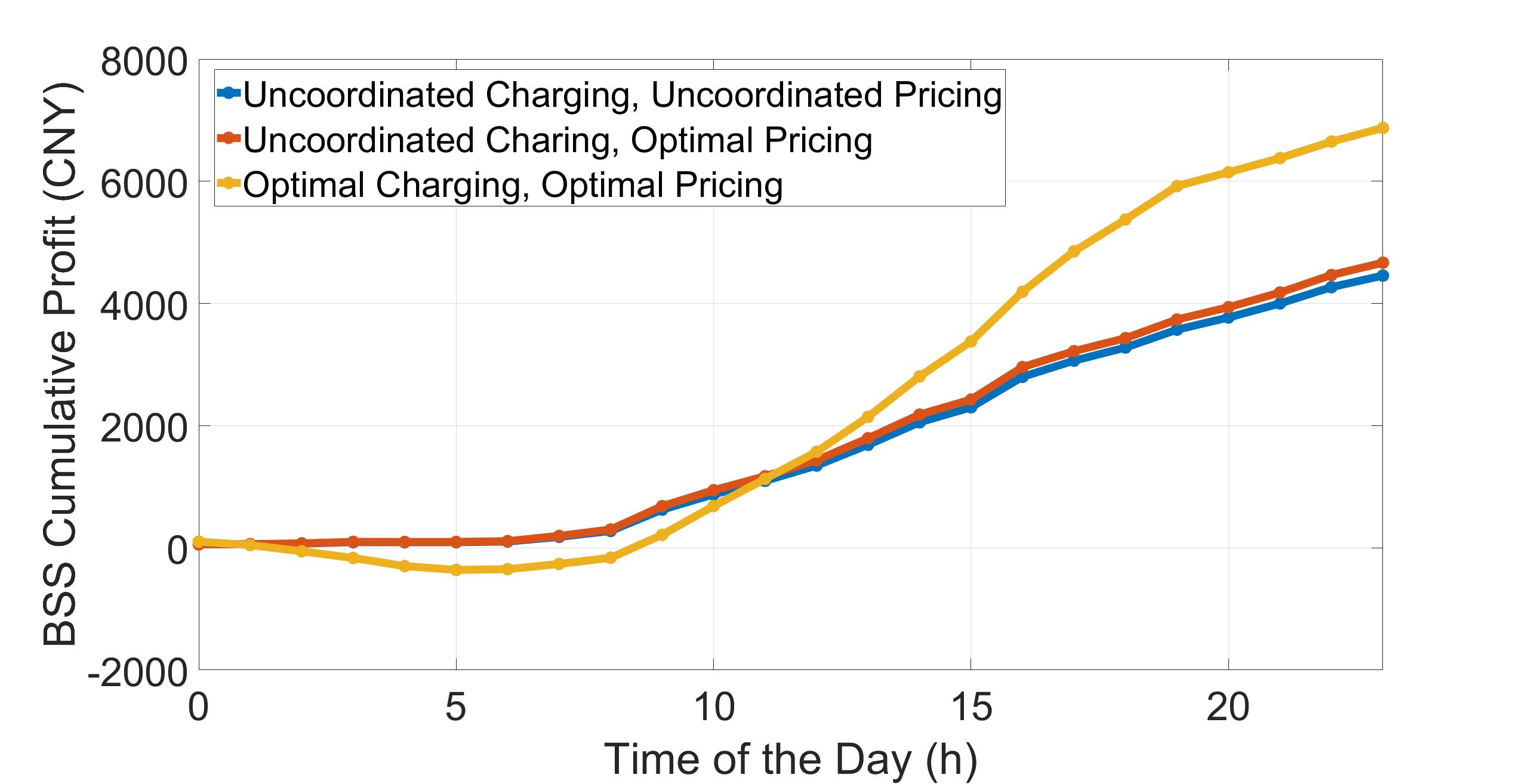}
  \caption{BSS 1 (+54.3\%)}
  \label{fig:BSS1_profit_comparision}
\end{subfigure}%
\hfill % maximize the horizontal separation
\begin{subfigure}{0.25\textwidth}
  \includegraphics[width=\linewidth]{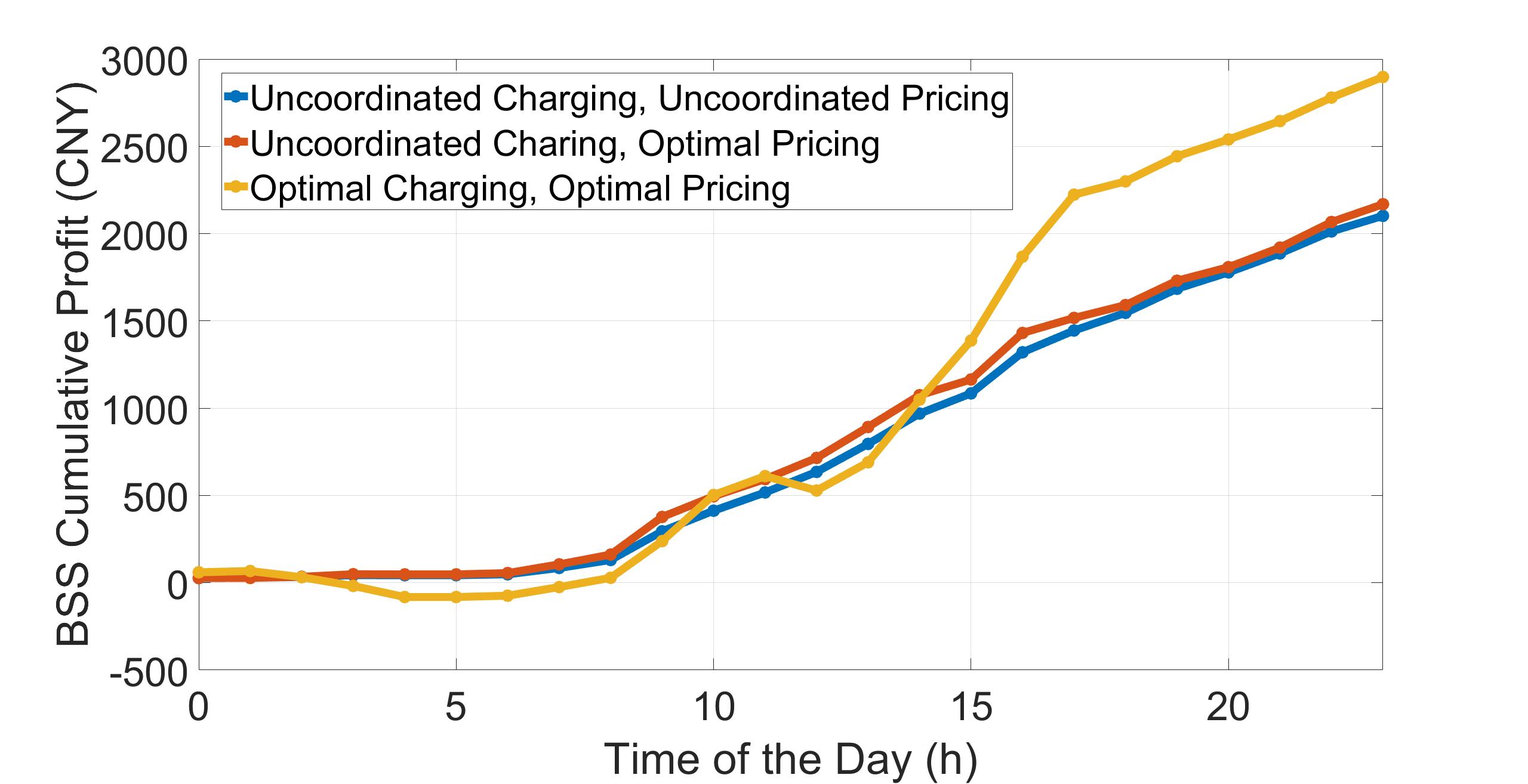}
  \caption{BSS 2 (+37.8\%)}
  \label{fig:BSS2_profit_comparision}
\end{subfigure}%
\hfill % maximize the horizontal separation
\begin{subfigure}{0.25\textwidth}
  \includegraphics[width=\linewidth]{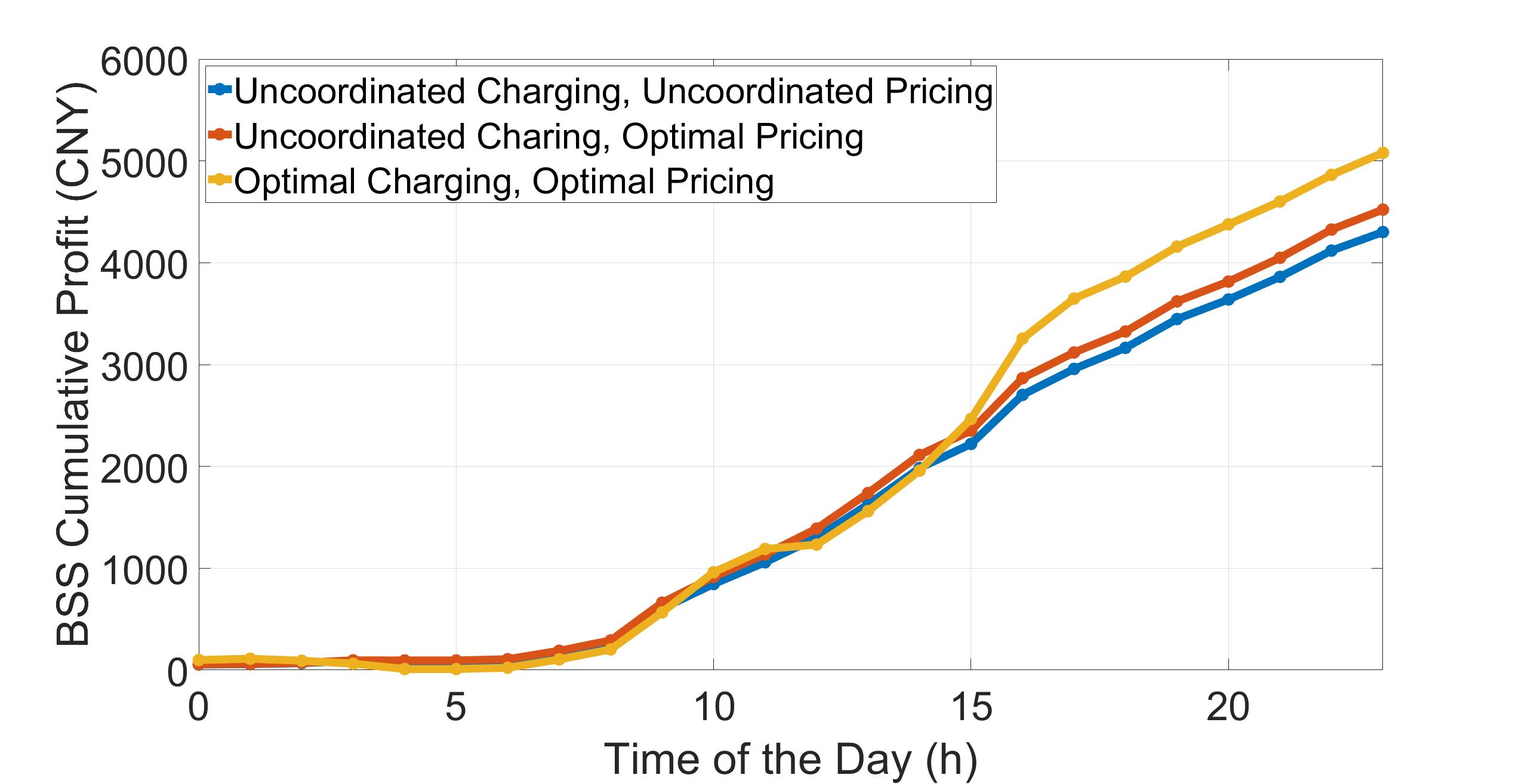}
  \caption{BSS 3 (+18.1\%)}
  \label{fig:BSS3_profit_comparision}
\end{subfigure}%
\hfill
\begin{subfigure}{0.25\textwidth}
  \includegraphics[width=\linewidth]{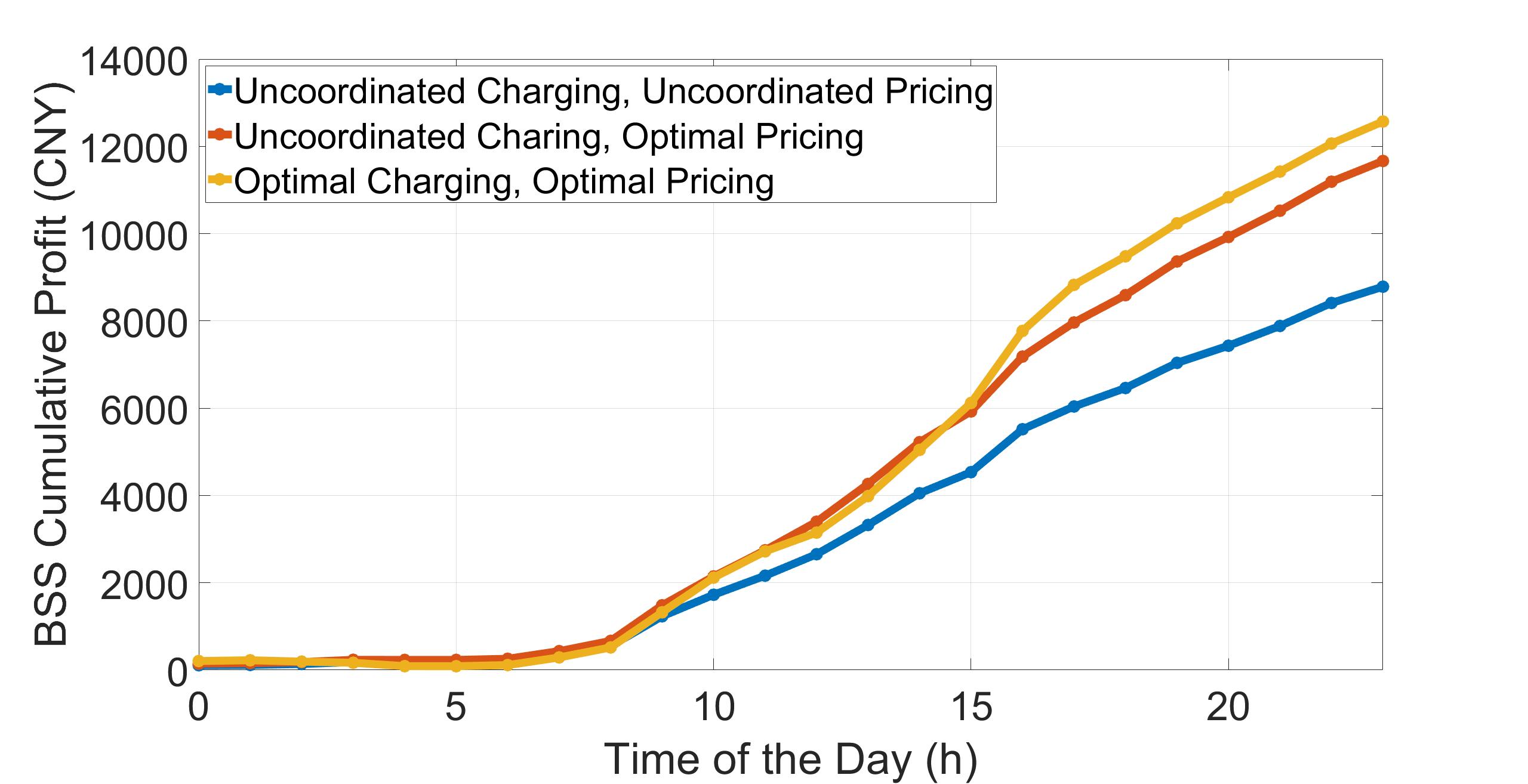}
  \caption{BSS 4 (+43.1\%)}
  \label{fig:BSS4_profit_comparision}
\end{subfigure}%
\hfill
\begin{subfigure}{0.25\textwidth}
  \includegraphics[width=\linewidth]{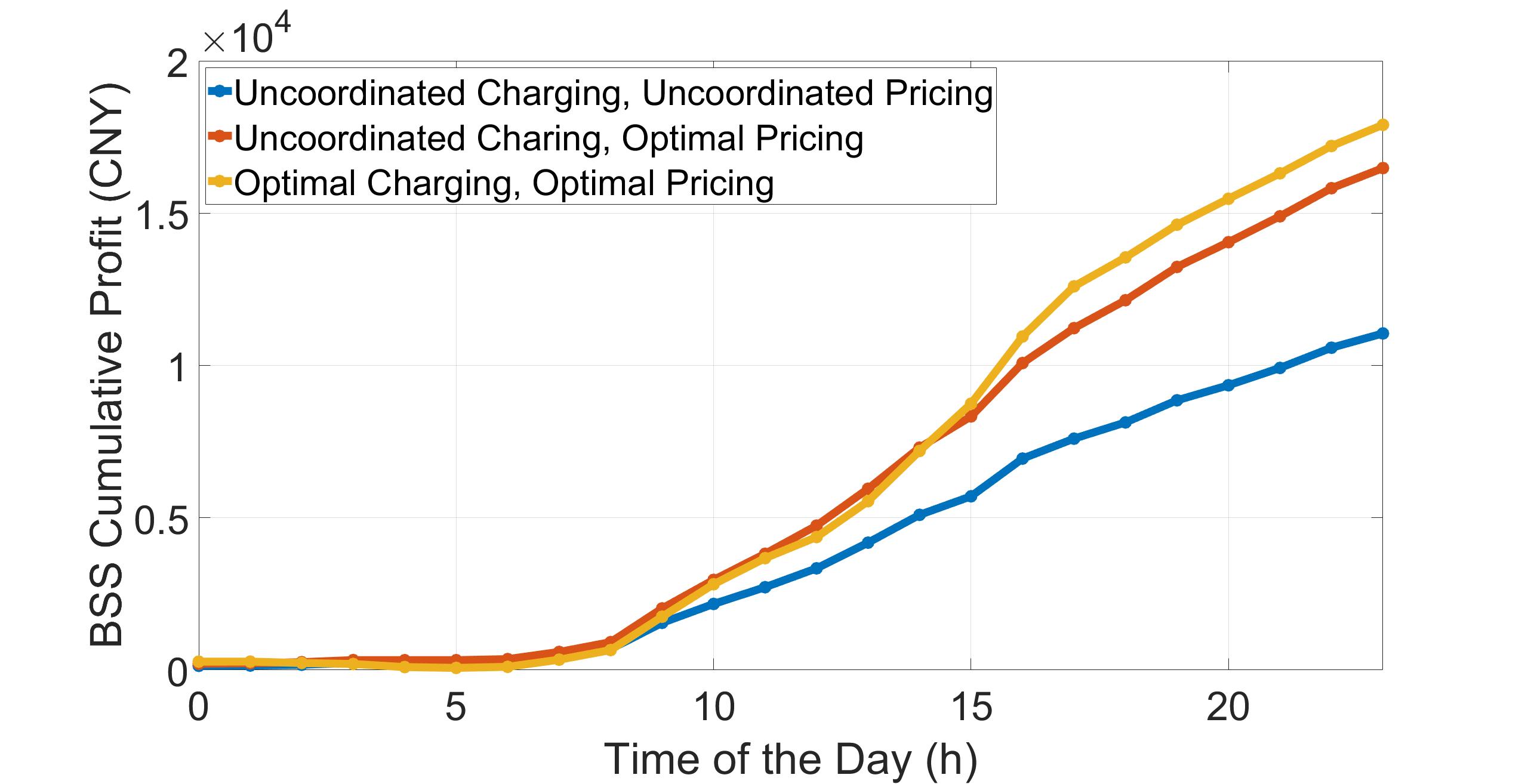}
  \caption{BSS 5 (+62.0\%)}
  \label{fig:BSS5_profit_comparision}
\end{subfigure}%
\hfill % maximize the horizontal separation
\begin{subfigure}{0.25\textwidth}
  \includegraphics[width=\linewidth]{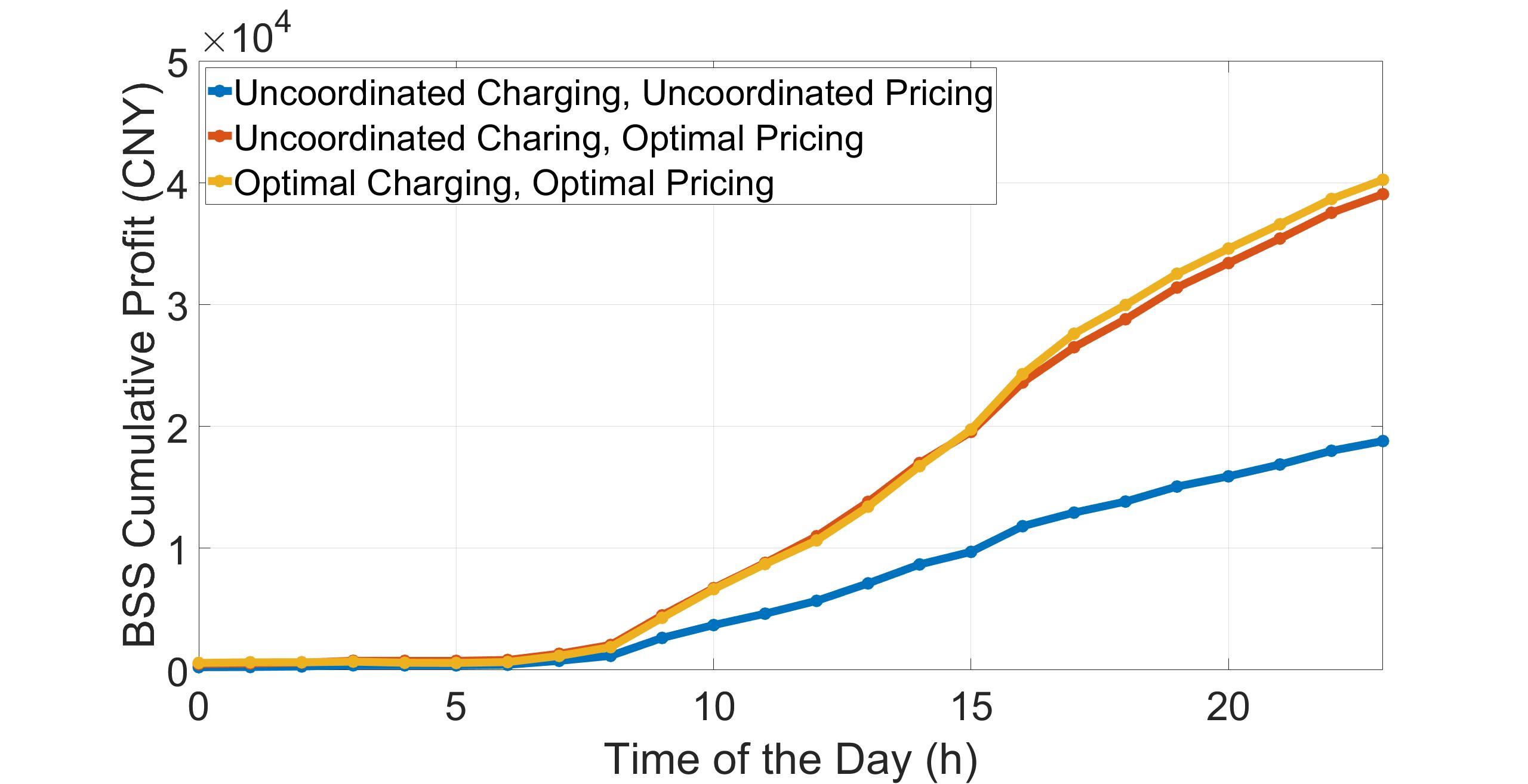}
  \caption{BSS 6 (+114.1\%)}
  \label{fig:BSS6_profit_comparision}
\end{subfigure}%
\hfill
\begin{subfigure}{0.25\textwidth}
  \includegraphics[width=\linewidth]{figures/BSS7_profit_time_slot_cumulative.jpg}
  \caption{BSS 7 (+47.1\%)}
  %\label{fig:BSS7_profit_comparision}
\end{subfigure}%
\hfill
\begin{subfigure}{0.25\textwidth}
  \includegraphics[width=\linewidth]{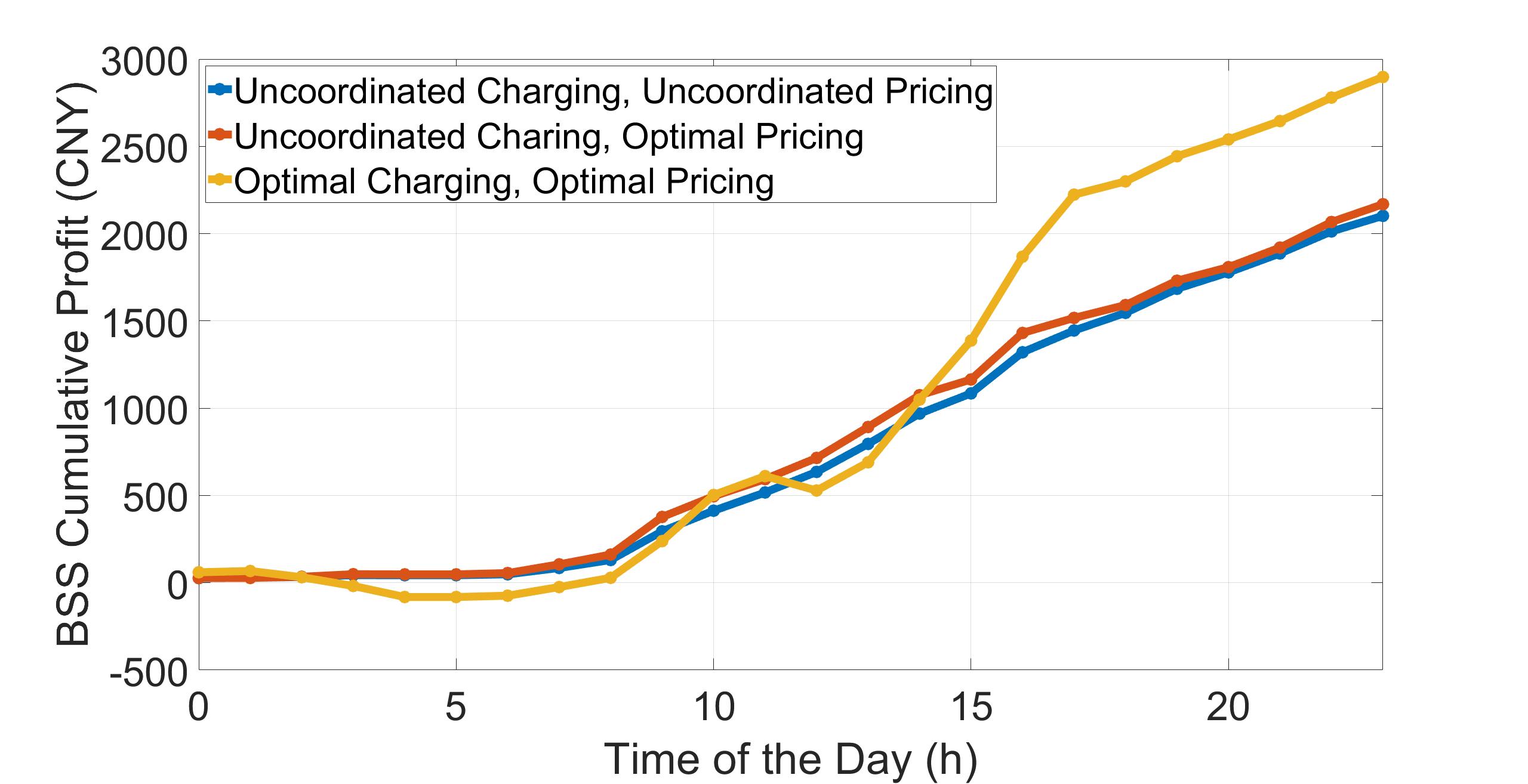}
  \caption{BSS 8 (+37.9\%)}
  \label{fig:BSS8_profit_comparision}
\end{subfigure}%
\hfill
\begin{subfigure}{0.25\textwidth}
  \includegraphics[width=\linewidth]{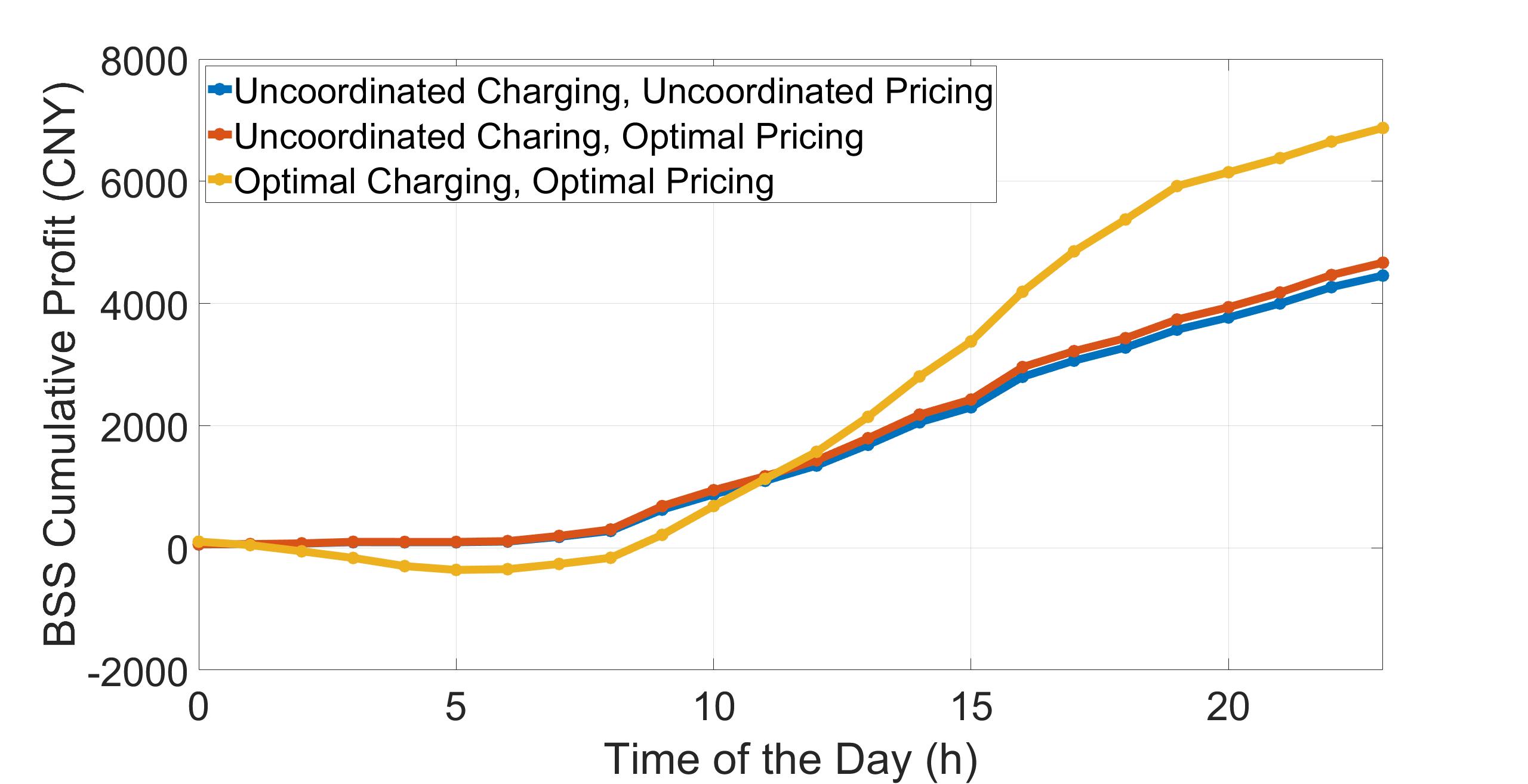}
  \caption{BSS 9 (+54.2\%)}
  \label{fig:BSS9_profit_comparision}
\end{subfigure}%
\hfill % maximize the horizontal separation
\begin{subfigure}{0.25\textwidth}
  \includegraphics[width=\linewidth]{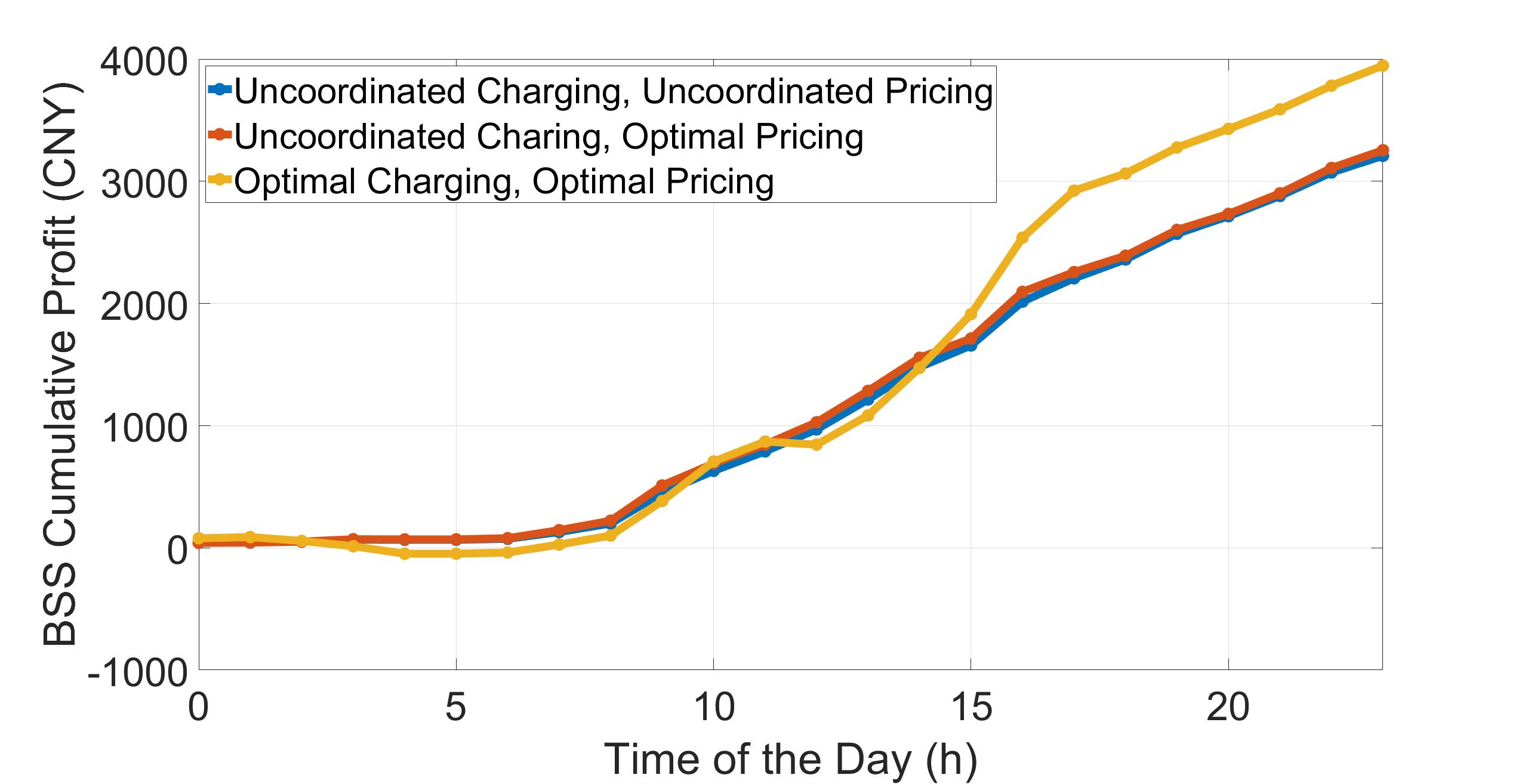}
  \caption{BSS 10 (+23.0\%)}
  \label{fig:BSS10_profit_comparision}
\end{subfigure}%
\hfill
\begin{subfigure}{0.25\textwidth}
  \includegraphics[width=\linewidth]{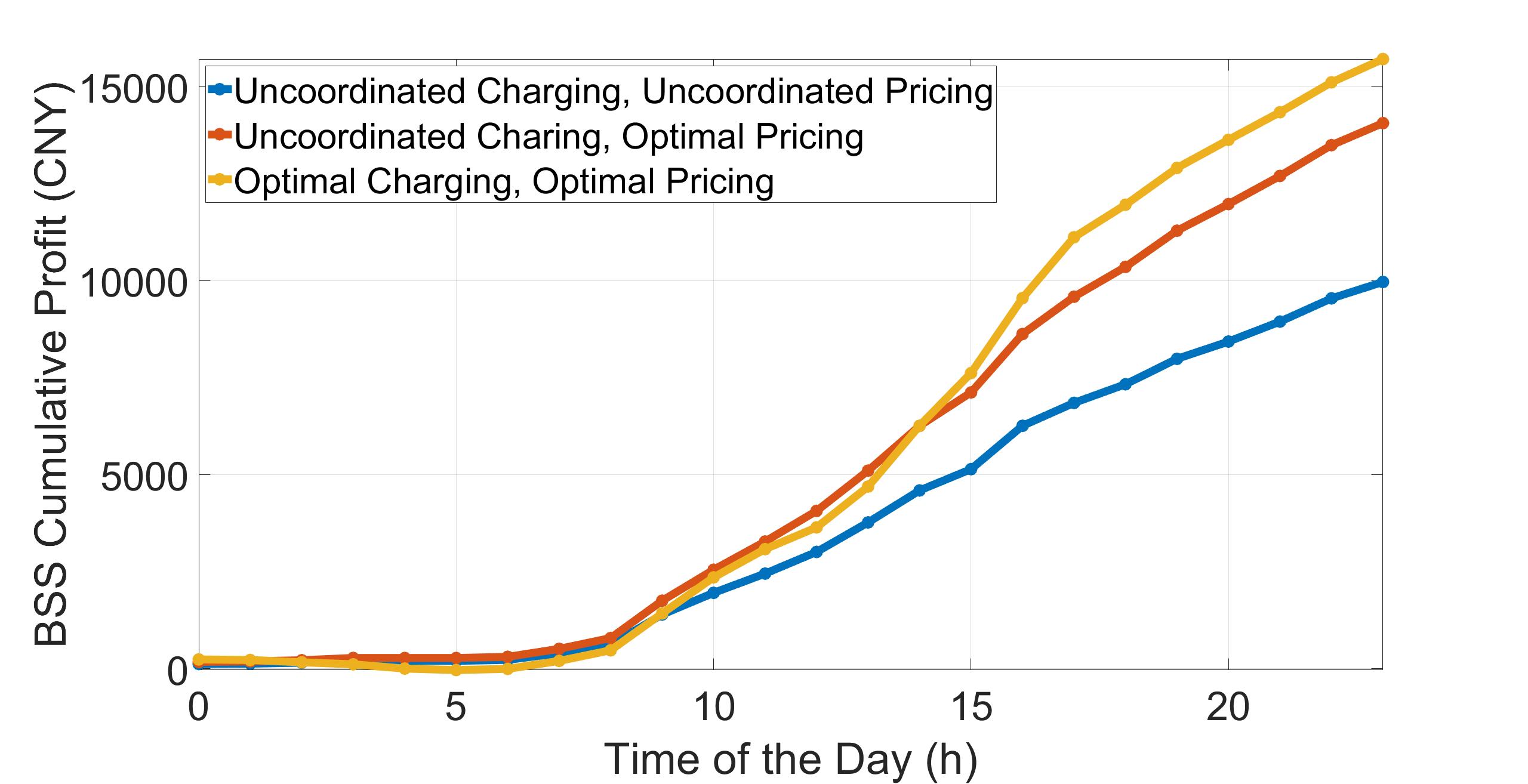}
  \caption{BSS 11 (+57.6\%)}
  \label{fig:BSS11_profit_comparision}
\end{subfigure}%
\hfill
\begin{subfigure}{0.25\textwidth}
  \includegraphics[width=\linewidth]{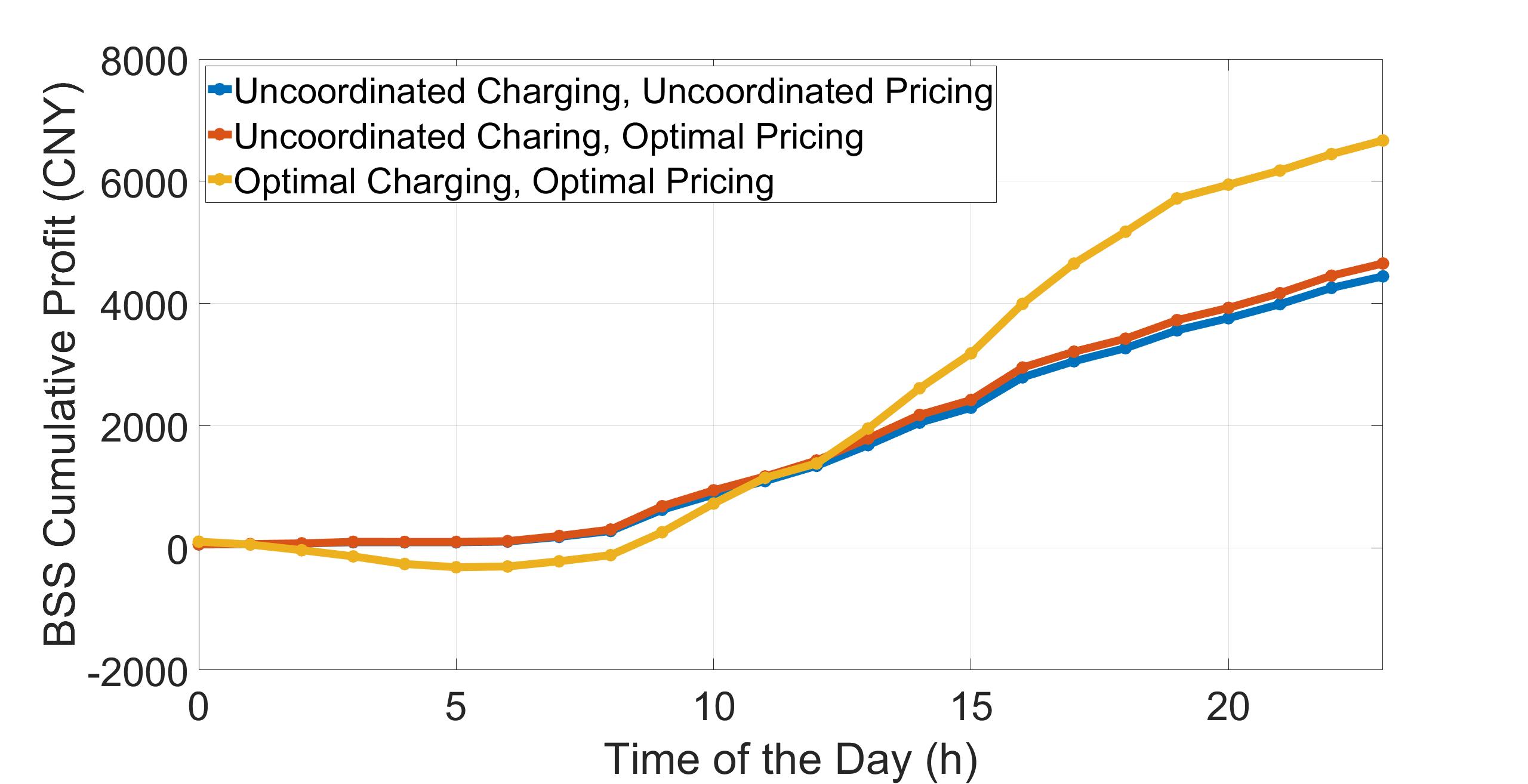}
  \caption{BSS 12 (+50.1\%)}
  \label{fig:BSS12_profit_comparision}
\end{subfigure}%
\caption{BSS profit under different strategies and the improvement. For instance, the profit of BSS 1 increases by 54.3\% from the uncoordinated pricing and charging strategy to the optimal pricing and charging strategy}
\label{fig:BSS_profit_comparision}
\end{figure*}

\section{BSS Cumulative Hourly Profit under Different Strategies}
\label{apdx-BSS_hourly_all}
In Section~\ref{ssec:BSS7_hourly_profit}, we analyze the hourly profit of BSS 7. In this appendix, we give the result of other BSSs in the system as shown in Figure~\ref{fig:BSS_profit_comparision}.

\section{Proof of Lemma~\ref{lmm:BS_Xk}}
\label{apdx:proof_lmm_BS_XK}

\begin{proof}
    As proved in Theorem~\ref{thm:existance}, the payoff function can be written as \eqref{eq:payoff_vector}. Therefore, the maximization the target payoff function in \eqref{eq:payoff_vector} is equivalent to the minimize the function $\mathbf{X_k}^\intercal \mathbf{H} \mathbf{X_k} + \mathbf{f}^\intercal \mathbf{X_k}$.

    Besides, the transmission constraints \eqref{eq:X_transmission_constraints} is equivalent to $\mathbf{lb} \leq \mathbf{X_k} \leq \mathbf{ub}$, and the swapping demand constraints \eqref{eq:X_demand_constraints} is equivalent to $\mathbf{A} \mathbf{X_k} \leq \mathbf{b}$. Therefore, the best response strategy is equivalent to the quadratic programming problem \eqref{eq:X_k_optimization}.
\end{proof}

\end{document}